\newcommand{\dist}{\mbox{\rm dist}}
\newtheorem{theorem}{Theorem}[section]
\newtheorem{lemma}[theorem]{Lemma}
\newtheorem{meta-theorem}[theorem]{Meta-Theorem}
\newtheorem{claim}[theorem]{Claim}
\newtheorem{corollary}[theorem]{Corollary}
\newtheorem{proposition}[theorem]{Proposition}
\newtheorem{definition}[theorem]{Definition}
\newtheorem{question}[theorem]{Question}
\newtheorem{fact}[theorem]{Fact}
\newcommand{\poly}{\operatorname{\text{{\rm poly}}}}
\newcommand{\ID}{\operatorname{ID}}
\newcommand{\EID}{\operatorname{EID}}
\newcommand{\UID}{\operatorname{UID}}
\newcommand{\Sketch}{\mathsf{Sketch}\xspace}
\newcommand{\XOR}{\mathsf{XOR}\xspace}
\def\port{\mbox{\tt port}}
\newcommand{\FTConnLabel}{\mathsf{ConnLabel}\xspace}
\newcommand{\FTDistLabel}{\mathsf{DistLabel}\xspace}
\newcommand{\TreeCover}{\mathsf{TC}\xspace}
\newcommand{\LCALabel}{\mathsf{ANC}\xspace}
\newcommand{\remove}[1]{}
\newcommand{\mtodo}[1]{\textcolor{blue}{[Michal: #1]}}
\newcommand{\mertodo}[1]{\textcolor{violet}{[Merav: #1]}}
\DeclareMathOperator*{\Moplus}{\text{\raisebox{0.25ex}{\scalebox{0.8}{$\bigoplus$}}}}
\begin{document}
\date{}

\title{Fault-Tolerant Labeling and Compact Routing Schemes}
\author{
 Michal Dory\\
  \small ETH Zurich\\
  \small michal.dory@inf.ethz.ch
  			\thanks{Supported in part by the Swiss National Foundation (project grant $200021\_184735$).} 
\and
Merav Parter\\
        \small Weizmann Institute \\
        \small merav.parter@weizmann.ac.il 
			\thanks{Supported by the European Research Council (ERC) No. 949083, and by the Israeli Science Foundation (ISF) No. 2084/18.}
}
\maketitle

\begin{abstract}
The paper presents fault-tolerant (FT) labeling schemes for general graphs, as well as, improved FT routing schemes. 
For a given $n$-vertex graph $G$ and a bound $f$ on the number of faults, an $f$-FT connectivity labeling scheme is a distributed data structure that assigns each of the graph edges and vertices a short label, such that given the labels of a vertex pair $s$ and $t$, and the labels of at most $f$ failing edges $F$, one can determine if $s$ and $t$ are connected in $G \setminus F$. The primary complexity measure is the length of the individual labels. Since their introduction by [Courcelle, Twigg, STACS '07], compact FT labeling schemes have been devised only for a limited collection of graph families. In this work, we fill in this gap by proposing two (independent) FT connectivity labeling schemes for general graphs, with a nearly optimal label length. This serves the basis for providing also FT approximate distance labeling schemes, and ultimately also routing schemes. Our main results for an $n$-vertex graph and a fault bound $f$ are:
\begin{itemize}
\item There is a randomized FT connectivity labeling scheme with a label length of $O(f+\log n)$ bits, hence optimal for $f=O(\log n)$. This scheme is based on the notion of cycle space sampling [Pritchard, Thurimella, TALG '11].

\item There is a randomized FT connectivity labeling scheme with a label length of $O(\log^3 n)$ bits (independent of the number of faults $f$). This scheme is based on the notion of linear sketches of [Ahn et al., SODA '12]. 

\item For a given stretch parameter $k\geq 1$, there is a randomized routing scheme that routes a message from $s$ to $t$ in the presence of a set $F$ of faulty edges (unknown to $s$) over a path of length $O(|F|^2 k)\cdot \dist_{G\setminus F}(s,t)$. The routing labels have $\widetilde{O}(f)$ bits, the messages have $\widetilde{O}(f^3)$ bits, and each routing table has only $\widetilde{O}(f^3 n^{1/k})$ bits\footnote{Throughout the paper, we use the notation $\widetilde{O}$ to hide poly-logarithmic in $n$ terms.}. The results also holds for weighted graphs with positive polynomial weights.
\end{itemize}
This significantly improves over the state-of-the-art bounds by [Chechik, ICALP '11], providing the first scheme with sub-linear FT labeling and routing schemes for general graphs.

%

%
%

\end{abstract}

\newpage

\tableofcontents

\newpage

\section{Introduction}
Distributed graph representation is concerned with augmenting each vertex (and possibly also edges) with useful and low-space information in order to efficiently address various graph queries in a distributed manner. As the
vertices and edges of the network may occasionally fail or malfunction, it is desirable
to make these representations robust against failures. In this paper, we provide new constructions of succinct \emph{labeled-based distributed data structures} that can handle connectivity, distance queries and routing in the presence of edge failures. 

Connectivity labels are short names attached to each vertex in the $n$-vertex input graph $G$, such that given the labels of a pair of vertices $s$ and $t$ (and no any other information), it is possible to deduce if $s$ and $t$ are connected in $G$. The primary complexity measure of the  labeling scheme is the label length (maximum length of a label). In general, labels can be viewed as the \emph{logical} names of the vertices \cite{kannan1992implicat,peleg2005informative}, as they are considerably more informative than the physical names that usually correspond to arbitrary $O(\log n)$-bit identifiers. For example, in routing applications the label of the vertex is treated as its ``address". It is quite immediate to provide connectivity labeling schemes of logarithmic length. Over the years, these labels have served the basis for devising also approximate distance labels, and compact routing schemes, which are arguably the \emph{grand finale} of the distributed representation schemes. 

Our goal in this paper is to provide \emph{fault-tolerant} analogs for the above mentioned schemes, while paying a small overhead in terms of space and other complexity aspects. Several notions of fault-tolerant labeling and routing schemes have been addressed in the literature; starting with the earlier introduction of FT routing schemes by Dolev \cite{dolev1984new}, to the more recent formulations of forbidden-set labeling and routing schemes by Courcelle et al. \cite{courcelle2007forbidden,CourcelleT07}. Despite much activity revolving these topics, FT labeling and routing schemes with \emph{sub-linear} space are currently known only for a limited collection of graph families. We next elaborate more on the state-of-the-art affairs, and our main objectives.
%
%

\paragraph{Fault-Tolerant Connectivity and Distance Labeling.} 
FT connectivity labeling schemes, also known in the literature as \emph{forbidden-set} labeling \cite{CourcelleT07}, assign labels to the vertices and the edges of the graph such that given the labels of a vertex pair $s,t$, and the labels of the faulty edges $F$, one can determine if $s$ and $t$ are connected in $G \setminus F$. 
Since their introduction, efficient FT labeling schemes have been devised only for a restricted collection of graph families such as graphs with bounded tree-width and planar graphs \cite{CourcelleT07,AbrahamCGP16}. In the lack of any FT connectivity labeling schemes for general graphs with sub-linear label length (for any $f\geq 2$ faults\footnote{While there is no \emph{explicit} construction of FT labeling for general graphs, for $f=1$, the centralized distance sensitivity oracle of \cite{khanna2010approximate} might be modified to provide approximate distance labels against a single fault.}), we ask:

\begin{question}\label{q:label}
Is it possible to design FT connectivity labeling scheme resilient to at most $f$ edge faults, for general graphs with label length of $\poly(\log n)$ bits, or even $\poly(\log n,f)$ bits? 
\end{question}

FT connectivity labels are also closely related to \emph{connectivity sensitivity oracles} \cite{patrascu2007planning}, which are low-space centralized data-structures that handle efficiently $\langle s, t, F \rangle$ connectivity queries using $S(n)$ space. Our main goal is in providing a \emph{distributed} variant of such constructions, e.g., where each vertex or edge in the graph ``holds" only $S(n)/n$ bits of information, such that an $\langle s,t, F \rangle$ query can be addressed using only the information stored by $s,t$ and $F$. 

An important step towards designing FT compact routing schemes involves the computation of \emph{FT approximate distance labels}. In this setting, given the labels of $s,t$ and the faulty edges $F$, it is required to report an approximation for the $s$-$t$ shortest path distance in $G \setminus F$. 
FT approximate distance labels can be viewed as the distributed analog of $f$-FT \emph{distance sensitivity oracles} \cite{khanna2010approximate,WeimannY10}. 
These are global succinct data-structures that given an $\langle s,t,F \rangle$ query report fast an estimate for the approximate $s$-$t$ distance in $G \setminus F$. Our goal is to provide FT approximate labeling schemes that match the state-of-the-art space vs. stretch tradeoff of the centralized data structures. 

\paragraph{Fault-Tolerant Routing.} A desirable requirement in most communication networks is to provide efficient routing protocols in the presence of faults. Specifically, an $f$-FT routing protocol is a distributed algorithm that, for any set of at most $f$ faulty edges $F$, allows a vertex $s$ to route a message to a destination vertex $t$ along an approximate $s$-$t$ shortest path in $G \setminus F$ (without knowing $F$ in advance). The routing scheme consists of two algorithms: (i) a preprocessing algorithm which computes (succinct) routing tables and labels for each vertex in the graph; and (ii) a routing algorithm that given the received message and the routing table of vertex $v$ determines the next-hop (specified as a port number) on the $v$-$t$ (approximate) shortest path in $G \setminus F$. The efficiency of the scheme is determined by the tradeoff between the \emph{stretch} (i.e., the ratio between the weighted length of the $s$-$t$ route in $G\setminus F$ to the corresponding shortest path distance) and the \emph{space} of the routing tables, labels and messages. 
%
%
%
While the stretch vs. space tradeoff of routing schemes is fully understood in the non-faulty setting, the corresponding bounds in the FT setting are still far from optimal. 
So far, in all the prior schemes, the space of the individual routing tables could be linear in the worst case, even when allowing a large stretch bound. This is in strike contrast to the standard (non-faulty) compact routing schemes, e.g., by Thorup and Zwick \cite{thorup2001compact}, which provide each vertex a table of $\widetilde{O}(n^{1/k})$ bits, while guaranteeing a route stretch of $2k-1$. The current large gap in the quality of FT routing schemes compared to their non-faulty counterparts leads to the following question.

\begin{question}\label{q:route}
Is it possible to design $f$-fault-tolerant compact routing scheme for general graphs with \emph{sub-linear} table size and with a sub-logarithmic stretch?
\end{question}



\subsection{Our Results}
We provide space-efficient labeling and routing schemes for any $n$-vertex graph. Our schemes are \emph{randomized} and provide a high probability guarantee\footnote{As standard, we use the term high-probability to indicate success guarantee of $1-1/n^c$ for any given constant $c>1$.} for any given triplet $\langle s, t, F\rangle$. In other words, the schemes can faithfully support polynomially many queries\footnote{The same type of guarantee is provided in the centralized sensitivity oracles, e.g., of \cite{DuanConnectivitySODA17}. Providing a high probability guarantee over all possible triplets is possible upon increasing the space bound by a factor of $f$ (largest number of faults supported).}. 

Our first key result presents two independent schemes for FT connectivity labels. These are the first FT connectivity labels for \emph{general graphs}.
These two constructions yield the following theorem, addressing Question \ref{q:label}: 

\begin{theorem}\label{thm:conn-labels}[FT Connectivity Labeling Schemes, Informal]
For any $n$-vertex graph and a bound $f$ on the number of edge faults, there is a \emph{randomized} $f$-FT connectivity labeling scheme with label length of $O(\min\{f+\log n, \log^3 n\})$ bits. The labels are computed in $\widetilde{O}(m)$ time, and the decoding algorithm takes $\poly(f,\log n)$ time. 
\end{theorem}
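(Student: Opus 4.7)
The plan is to give two independent constructions, one producing labels of length $O(f+\log n)$ via cycle-space sampling, and one producing labels of length $O(\log^3 n)$ via linear graph sketches. Both fix a spanning tree $T$ of $G$ and use randomized edge sketches so that the effect of removing the faults $F$ can be subtracted from precomputed cut-sketches, thereby deciding $s$--$t$ connectivity in $G\setminus F$.

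For the first scheme, I would follow Pritchard--Thurimella and assign each non-tree edge $e$ an independent uniform random vector $r_e \in \{0,1\}^{\ell}$ with $\ell=\Theta(f+\log n)$. For every tree edge $e^\star$, the cover sketch
\[
\psi(e^\star) \;=\; \bigoplus_{e \in \mathrm{cross}(e^\star)} r_e
\]
aggregates the random vectors of the non-tree edges whose fundamental cycle contains $e^\star$. Each vertex's label packages its depth, an LCA/ancestry label in $T$, and the cover sketches of the tree edges on its root path, packed via heavy-path decomposition so the total bit-count is $O(f+\log n)$; each non-tree edge's label stores $r_e$ together with identifiers of its endpoints. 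Given a query $\langle s,t,F\rangle$, decompose $F = F_T \sqcup F_N$. The forest $T\setminus F_T$ has at most $|F_T|+1 = O(f)$ sub-trees, and by $\mathbb{F}_2$-linearity of XOR the sketch $\bSketch(S)$ of the non-faulty non-tree edges leaving any union $S$ of these sub-trees is a combination of cover sketches recoverable from the labels of $s,t$ and the endpoints of $F$, minus the $F_N$-contributions. A standard random-subset-sum argument ensures $\bSketch(S)=0$ iff $S$ is a union of connected components of $G\setminus F$ with error at most $2^{-\ell}$, and a $\poly(f,\log n)$ decoder recovers the partition of the $O(f)$ sub-trees and answers the query.

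For the second scheme, I would invoke the Ahn--Guha--McGregor linear sketches. Each vertex receives $\Sigma(v)$, an $\ell_0$-sampling linear sketch of its incidence vector of total size $O(\log^3 n)$ bits, viewed as $O(\log n)$ independent sub-sketches of $O(\log^2 n)$ bits each, one per Boruvka round. Linearity yields $\bigoplus_{v \in S}\Sigma(v) = \mathrm{Sketch}(\partial_G S)$, and the contribution of every $F$-edge can be subtracted at query time using the sketch seeds stored in the edge's label, giving a sketch of $\partial_{G\setminus F}(S)$. Connectivity of $s,t$ in $G\setminus F$ is then decided by simulating Boruvka starting from $\{s\},\{t\}$: in each round an $\ell_0$-sampler is applied to the adjusted sketch of every current component to obtain a non-faulty boundary edge; $O(\log n)$ rounds suffice, explaining the three logarithmic factors.

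The principal obstacle in both constructions is that $F$ need not be oblivious to the randomness: we must argue correctness over \emph{all} admissible queries, not just a fixed $F$. As in the high-probability guarantee stated in the theorem, this is handled by committing to the polynomially-many-queries regime and applying a union bound: $\ell = \Theta(f+\log n)$ absorbs the $\mathrm{poly}(n)\cdot 2^{O(f)}$ error events across subsets of sub-trees for the first scheme, and the independence of the $O(\log n)$ sketches across Boruvka rounds, combined with an $O(\log^2 n)$-bit per-round failure budget, absorbs the corresponding error events for the second. The construction time is dominated by one DFS over $T$ that accumulates cover sketches and by the $\widetilde{O}(m)$-time computation of the AGM sketches; decoding runs in $\poly(f,\log n)$ time as claimed.
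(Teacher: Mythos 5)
Your high-level plan matches the paper's: two independent constructions (cycle-space sampling for $O(f+\log n)$ bits, linear sketches for $O(\log^3 n)$ bits), a fixed spanning tree, and a union bound over the $2^{O(f)}$ relevant subsets to handle the fact that $F$ is not oblivious to the randomness. However, both schemes as you describe them have concrete gaps. In the first scheme, you propose that each \emph{vertex} store ``the cover sketches of the tree edges on its root path, packed via heavy-path decomposition'' within $O(f+\log n)$ bits. This cannot work: a root path may contain $\Omega(n)$ tree edges, each carrying an essentially independent $\Theta(f+\log n)$-bit vector, and heavy-path decomposition still leaves $\Theta(\log n)$ incompressible sketches, so no packing fits the budget. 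The fix (and what the paper does) is to store $\phi(e)$ in the label of the \emph{edge} $e$ itself; the decoder only ever needs the sketches of the faulty edges, and those labels are part of the query. More importantly, you assert a $\poly(f,\log n)$ decoder without constructing it. The natural test --- is some union $S$ of the $O(f)$ subtrees of $T\setminus F_T$, containing $s$'s subtree but not $t$'s, boundary-free in $G\setminus F$ --- ranges over $2^{O(f)}$ candidates, and the parity sketches only answer emptiness queries (they do not hand you an outgoing edge, so you cannot run Boruvka here). The paper's key step, which is missing from your writeup, is to observe that the relevant condition is $\mathbb{F}_2$-\emph{linear} in the characteristic vector of the chosen subset, and to append two coordinates tracking the parity of cut edges on the root-to-$s$ and root-to-$t$ tree paths; deciding the query then reduces to the solvability of $Ax=w_1$ or $Ax=w_2$, computable by Gaussian elimination in $O((f+\log n)f^2)$ time.

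In the second scheme, the fatal step is ``simulating Boruvka starting from $\{s\},\{t\}$.'' After the first round the decoder would need the sketch of a merged component such as $\{s,w\}$, hence $\Sigma(w)$ for an arbitrary vertex $w$ --- information that is not derivable from the labels of $s$, $t$ and $F$. The paper instead initializes Boruvka from the at most $f+1$ components of $T\setminus F$, and makes this possible by storing, in the label of each \emph{tree edge} $(u,v)$, the subtree sketches $\Sketch(V(T_u))$, $\Sketch(V(T_v))$ and $\Sketch(V)$: the component sketches are then assembled by XOR-ing subtree sketches along the component tree, the faulty edges' contributions are cancelled using the seeds, and every outgoing edge discovered by an $\ell_0$-sample lands in one of these $O(f)$ components, identifiable from the ancestry labels embedded in the extended edge identifiers. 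Without this relocation of the sketch information from vertices to tree edges, and without starting from the components of $T\setminus F$, your decoder cannot proceed past the first merge. The remaining ingredients you omit (distinguishing a genuine edge identifier from a XOR of several via $\epsilon$-biased identifiers, and fresh randomness per Boruvka phase) are standard but also needed for the stated $\widetilde{O}(f)$ decoding time and high-probability correctness.
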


By the tightness of the label length of fault-free connectivity labels, 
our scheme is optimal for $f=O(\log n)$. Moreover, the label length is nearly-optimal for any $f$. Our actual scheme provides more information then merely a single bit (connected or not connected). Specifically, we augment the connectivity labels with additional information so that the decoding algorithm, given the labels of $s,t$ and $F$, can also output a succinct description of an $s$-$t$ path in $G\setminus F$ (if such a path exists). This succinct path representation finds applications in the context of our FT routing schemes. 
%

We next consider the task of reporting also approximate $s$-$t$ distances in $G\setminus F$ using the labels of $s,t$ and $F$. We employ the reduction of Chechik et al. \cite{chechik2012f} to convert the FT connectivity labels into FT approximate distance labels, providing nearly the same space vs. stretch tradeoff as in the centralized data-structures of \cite{chechik2012f}. Specifically, we show:

\begin{theorem}\label{thm:dist-labels}[FT Approximate Distance Labeling Schemes]
For any $n$-vertex (possibly weighted) graph, a bound $f$ on the number of edge faults, and a stretch parameter $k$, there is a randomized $f$-FT approximate distance labeling scheme with label length of $O(k \cdot n^{1/k}\cdot \log (nW) \cdot \log^3 n)$. Given the labels of $s,t$ and $F$ the scheme returns a distance estimate 
$$\dist_{G\setminus F}(s,t)\leq \delta(s,t,F) \leq (8k-2)(|F|+1)\dist_{G\setminus F}(s,t)~.$$
\end{theorem}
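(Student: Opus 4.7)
The plan is to apply the reduction of Chechik et al.\ \cite{chechik2012f}, which lifts any FT connectivity labeling scheme to an FT approximate distance labeling scheme via a Thorup--Zwick style hierarchical sampling. Plugging in the $O(\log^3 n)$-bit connectivity labels from Theorem~\ref{thm:conn-labels} will yield the claimed label length $O(k \cdot n^{1/k} \cdot \log(nW) \cdot \log^3 n)$.

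Concretely, I would first build a Thorup--Zwick hierarchy: sample sets $A_0 = V \supseteq A_1 \supseteq \cdots \supseteq A_k = \emptyset$, where each vertex of $A_i$ joins $A_{i+1}$ independently with probability $n^{-1/k}$, and for each vertex $v$ and level $i$ compute (in the fault-free graph $G$) the pivot $p_i(v) \in A_i$ and the bunch $B(v)$ in the usual way. Next, discretize edge weights into $O(\log(nW))$ geometric scales $\rho = 2^j$, and for each scale attach to $v$'s label the FT connectivity label of $v$ in the auxiliary graph $H_\rho$ whose connectivity encodes the predicate ``$\dist_G(\cdot,\cdot) \leq \rho$'' (standardly realized by subdividing each edge into unit-weight edges and truncating at hop-distance $\rho$). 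Finally, for every $w \in B(v) \cup \{p_i(v)\}_{i<k}$, attach to $v$'s label the distance $\dist_G(v,w)$ together with $w$'s FT connectivity labels at every scale. Since $\E[|B(v)|] = \widetilde{O}(n^{1/k})$ and there are $k$ pivots, the total length matches the stated bound.

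To answer a query $\langle s, t, F\rangle$, I would emulate the Thorup--Zwick distance decoding: iterate $i = 0, 1, \ldots, k-1$, alternating the roles of $s$ and $t$, and whenever the current pivot $w = p_i(s)$ lies in $B(t)$, invoke the FT connectivity primitive on the triples $(s, w, F)$ and $(w, t, F)$ across the scales $\rho$, binary searching for the smallest $\rho$ under which the relevant pair remains connected in $H_\rho \setminus F$; the sum of these two scale estimates yields a candidate $s$-$t$ distance, and we return the minimum over all iterations. The fault-free Thorup--Zwick bound of $2k - 1$, together with a factor of $4$ absorbed from the geometric scale discretization, yields the $8k-2$ prefactor. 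The $(|F|+1)$ factor then emerges from a standard hybrid argument: each of the at most $|F|$ failures along the would-be shortest route can force the decoding to fall back to a pivot one hierarchy level higher, degrading the approximation by at most a multiplicative constant per failure.

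The main technical obstacle is verifying that pivots and bunches chosen with respect to the fault-free $G$ remain adequate surrogates for the true nearest surviving pivot in $G\setminus F$; this is precisely the content of the Chechik et al.\ reduction and goes through unchanged provided that the FT connectivity primitive correctly reports connectivity in $H_\rho \setminus F$ for every triplet actually queried during decoding, which Theorem~\ref{thm:conn-labels} guarantees with high probability for any fixed query. A minor subtlety is that the label of a failure edge $e \in F$ must package the connectivity-label contribution of $e$ in every auxiliary graph $H_\rho$; since each $e$ appears as a path of at most $\rho$ subdivision edges in $H_\rho$ and connectivity labels compose, this inflates the edge labels only by the $\log(nW) \cdot \log^3 n$ factors already accounted for.
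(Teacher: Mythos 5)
There is a genuine gap. The reduction of Chechik et al.\ that the paper uses (Lemma~\ref{lem:reduction}) is not a Thorup--Zwick pivot/bunch construction: it is built on \emph{tree covers}. For each scale $2^i$ one computes $\TreeCover_i=\TreeCover(G\setminus H_i,\omega,2^i,k)$, so that every $2^i$-ball is contained in some tree $T_{i,i^*(v)}$ of radius at most $(2k-1)2^i$ while each vertex lies in only $O(kn^{1/k})$ trees; the FT connectivity labels are then applied to each induced subgraph $G_{i,j}=G[V(T_{i,j})]$ together with its tree $T_{i,j}$, and the decoder simply returns $(4k-1)(|F|+1)2^i$ for the smallest $i$ at which $s$ and $t$ are connected in $G_{i,i^*(s)}\setminus F$. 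Your auxiliary graphs $H_\rho$ cannot replace this: a single global graph whose \emph{connectivity} encodes the predicate ``$\dist_G(\cdot,\cdot)\le\rho$'' does not exist, because connectivity is transitive while bounded distance is not; and if you instead truncate a ball around each source, you need a separate graph per source and the label length blows up to $\Omega(n)$ per scale. The tree cover is precisely the device that resolves this tension, and it is absent from your construction. Relatedly, you attribute to the Chechik et al.\ reduction a statement about fault-free pivots remaining adequate under faults and defer the ``main technical obstacle'' to it, but that reduction contains no such pivot argument.

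The second gap is the source of the $(|F|+1)$ factor. In the paper it arises from a concrete geometric fact: if $s$ and $t$ are connected in $G_{i,i^*(s)}\setminus F$, the surviving path can be rerouted to visit at most $|F|+1$ connected components of $T_{i,i^*(s)}\setminus F$, each of diameter at most $(4k-2)2^i$, joined by at most $|F|$ edges of weight at most $2^i$, giving length at most $(4k-1)(|F|+1)2^i$; combined with $2^{i-1}<\dist_{G\setminus F}(s,t)$ this yields the $(8k-2)(|F|+1)$ bound. Your proposed mechanism --- each failure forcing the decoder one hierarchy level higher, ``degrading the approximation by at most a multiplicative constant per failure'' --- would give a bound of the form $c^{|F|}$ or a stretch depending on $k+|F|$ levels, not a multiplicative $(|F|+1)$, and no argument is given for why a failure costs only one level. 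As written, neither the label-length bound nor the stretch bound follows from the proposal.
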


For the purpose of routing, we exploit the extra information provided by our connectivity labels, in order to output, in addition to the distance estimate $\delta(s,t,F)$, also a succinct description of the approximate $s$-$t$ 
shortest path in $G \setminus F$. Our second key result provides FT compact routing schemes, with an almost optimal tradeoff between the space and stretch, for constant number of faults $f$. We answer Question \ref{q:route} by showing:

\begin{theorem}\label{thm:routing}[FT Compact Routing]
For every integers $k,f$, there exists an $f$-sensitive compact routing scheme that given a message $M$ at the source vertex $s$ and the routing label of the destination $t$, in the presence of at most $f$ faulty edges $F$ (unknown to $s$) routes $M$ from $s$ to $t$ in a distributed manner over a path of length at most $32k (|F|+1)^2\cdot \dist_{G \setminus F}(s,t)$. The routing labels have $\widetilde{O}(f)$ bits, the table size of each vertex is $\widetilde{O}(f^3 \cdot n^{1/k} \log(nW))$, the header size (also known as message size) is bounded by $\widetilde{O}(f^3)$ bits. 
\end{theorem}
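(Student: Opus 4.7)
The plan is to bootstrap the FT approximate distance labeling scheme of Theorem~\ref{thm:dist-labels} into a distributed routing protocol. The first step is to promote the scheme's \emph{distance} output into an actual \emph{path descriptor}: as noted after Theorem~\ref{thm:conn-labels}, the connectivity labels decode not only a single bit but also a succinct description of a concrete $s$-$t$ path in $G \setminus F$, and since the approximate distance scheme is built on top of those connectivity labels plus a Thorup--Zwick-style pivot hierarchy, the same decoder can be upgraded to return a sequence of pivots $s=p_0,p_1,\dots,p_\ell=t$ with $\ell = O(k)$ such that the concatenation of the decoded $p_i \to p_{i+1}$ paths lies in $G \setminus F$ and has total weight at most $(8k-2)(|F|+1)\cdot \dist_{G \setminus F}(s,t)$. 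This pivot list serves as the descriptor the source computes before emitting the first packet.

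Next I would design the per-vertex table so that each intermediate vertex can locally implement one step of the descriptor. The table is a $k$-level Thorup--Zwick structure: each vertex $v$ stores FT routing handles for the $\widetilde{O}(n^{1/k})$ pivots nearest to it in $G$. A handle consists of (i) the FT connectivity/sketch label of Theorem~\ref{thm:conn-labels} for routing towards that pivot against a known fault set, and (ii) a next-hop port map used for the first hop. Each handle takes $\widetilde{O}(f^2)$ bits (the $f$-overhead accounts for combining two $\widetilde{O}(f)$-length labels), so the total space per vertex is $\widetilde{O}(f^3 \cdot n^{1/k}\log(nW))$, matching the stated bound.

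The routing algorithm itself is iterative and fault-driven. The header carries the current target pivot, the label of the final destination $t$, and the set of edge faults $F$ discovered so far, which altogether fits in $\widetilde{O}(f^3)$ bits. At each intermediate vertex $v$, the message is forwarded one hop toward the current pivot $p_i$ using $v$'s table and the current $F$. If the chosen port is faulty, the discovering vertex appends the bad edge to $F$, reruns the decoder of Theorem~\ref{thm:dist-labels} on the updated $F$ using the label of $t$ carried in the header, and replaces the target pivot with the new first pivot of the recomputed descriptor. Upon reaching $p_i$, $v$ updates the current pivot to $p_{i+1}$ and continues. Because each reroute strictly enlarges the known fault set, at most $|F|+1$ reroutes occur.

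The final step, and the main obstacle, is bounding the stretch. The $i$-th segment, computed from the current vertex $u_i$ against the known fault set $F_i \subseteq F$, has length at most $(8k-2)(|F|+1)\cdot \dist_{G \setminus F}(u_i,t)$ by Theorem~\ref{thm:dist-labels}. Since $\dist_{G \setminus F}(u_i,t) \le \dist_{G \setminus F}(s,t)$ plus the distance already traversed, a telescoping argument combined with the triangle inequality yields a total route length of $O(k(|F|+1)^2)\cdot \dist_{G \setminus F}(s,t)$, and the constant can be tightened to $32$. The delicate points I expect to spend most effort on are (a) arguing that the ``backtracking'' distance between detecting a fault and reaching the next recomputed pivot is itself charged against $\dist_{G\setminus F}(\cdot,t)$, not accumulated independently, and (b) ensuring that when the decoder is invoked mid-route with only partial knowledge of $F$, the succinct descriptor it returns really is consistent with the next-hop information stored in the downstream vertices' tables, which will require a small amount of common randomness to be fixed at preprocessing time and reused across all reruns of the decoder.
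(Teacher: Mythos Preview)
Your proposal has a genuine gap in the stretch analysis, and several structural choices that diverge from the paper in ways that make the argument break.

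\textbf{The telescoping stretch argument fails.} You bound the $i$-th segment by $C\cdot \dist_{G\setminus F}(u_i,t)$ with $C=\Theta(k|F|)$, and then use $\dist_{G\setminus F}(u_i,t)\le \dist_{G\setminus F}(s,t)+T_i$ where $T_i$ is the distance already traversed. This recurrence is $T_{i+1}\le (1+C)T_i + C\cdot \dist_{G\setminus F}(s,t)$, which after $|F|+1$ iterations gives a bound of order $(1+C)^{|F|}\cdot \dist_{G\setminus F}(s,t)$, i.e.\ exponential in $|F|$, not $O(k|F|^2)$. The paper avoids this entirely: after detecting a fault the message is routed \emph{back to $s$} along the reverse of the path just taken, and a fresh attempt is launched from $s$. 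Because every attempt starts at $s$, each attempt in phase $i$ (distance scale $2^i$) costs at most $2\cdot(4k-1)(|F|+1)2^i$ regardless of what happened before, and summing $|F|+1$ attempts over all phases $j\le i$ with $2^{i-1}<\dist_{G\setminus F}(s,t)$ gives the clean $32k(|F|+1)^2$ bound.

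\textbf{The path descriptor and the underlying structure are not pivot-based.} The distance labels of Theorem~\ref{thm:dist-labels} are built from \emph{tree covers} (the Chechik et al.\ reduction), not a Thorup--Zwick pivot hierarchy. The succinct $s$--$t$ path produced by the sketch-based connectivity decoder (Lemma~\ref{lem:useful-recovery-edges}) is an alternating sequence of $O(|F|)$ tree-path segments inside a single tree $T_{i,j}$ and $O(|F|)$ recovery edges found from the sketches; it is not a list of $O(k)$ pivots. Consequently the routing tables are not ``handles to nearby pivots'' but rather, for each tree $T_{i,j}$ containing $v$, the connectivity labels of the tree edges incident to $v$ (with a load-balancing trick that spreads a high-degree vertex's incident labels over $f{+}1$ children to cap the per-vertex table at $\widetilde O(f^3 n^{1/k}\log(nW))$).

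\textbf{Randomness must be fresh, not common.} You propose fixing common randomness so that reruns of the decoder are consistent with downstream tables. The paper does the opposite: it stores $f{+}1$ \emph{independent} copies of the sketch-based connectivity labels and uses the $\ell$-th copy in the $\ell$-th iteration, precisely because the faults $F_\ell$ fed to the $\ell$-th decoding are correlated with the randomness used in earlier iterations. Reusing the same randomness would invalidate the w.h.p.\ guarantee of the sketch decoder.

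Finally, rerunning the decoder at the vertex that discovers the fault (rather than at $s$) is problematic: the decoder for phase $i$ needs the connectivity label of the initiating vertex in the specific instance $(G_{i,i^*(t)},T_{i,i^*(t)})$, and an arbitrary intermediate vertex does not carry that in its table.
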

This improves over the state-of-the-art construction of Chechik \cite{chechik2011fault} that 
obtained routing schemes with stretch of $O(f^2(f+\log^2 n)k)$ and tables of size $O(\deg(v)n^{1/k}\log{(nW)})$ for every vertex $v$. We note that the construction of Chechik \cite{chechik2011fault} has a bounded global space of $\widetilde{O}(n^{1+1/k} \log{(nW)})$, but the individual tables might have even super-linear space (e.g., when $k=O(1)$ and $\deg(v)=O(n)$).  For the special case of $f=2$, Chechik et al. \cite{ChechikLPR10,chechik2012f} provide a stretch bound of $O(k)$, and total space of $\widetilde{O}(n^{1+1/k} \log{(nW)})$, where the space of each table is bounded by $O(\deg(v)n^{1/k})$, thus super-linear in the worst case. Our scheme provides an improved bound on the individual tables, nearly matching the fault-free constructions for $f=O(1)$. We also show an improved scheme if one only aims to optimize for the global space, rather than optimizing for the largest table size for a vertex. For comparison of our results to prior work see Table \ref{table_routing}. 

\begin{table}[h!]
\centering
\begin{tabular}{ |p{4.2cm}|p{4cm}|p{6cm}|p{0.5cm}|}
 \hline	
 \multicolumn{4}{|c|}{Constructions of Fault-Tolerant Routing Schemes}\\
 \hline
  Reference & Stretch & Table Size & $|F|$ \\
 \hline
  Rajan \cite{rajan2012space} & $O(k^2)$ & $\widetilde{O}(k \deg(v)+ n^{1/k})$ per vertex & 1\\
  Chechik et al. \cite{chechik2012f} & $O(k)$ & $\widetilde{O}(n^{1+1/k} \log(nW))$ total size & 2\\
  Chechik \cite{chechik2011fault}  & $O(|F|^2(|F|+\log^2 n)k)$ & $\widetilde{O}(n^{1+1/k} \log(nW))$ total size & $f$\\
  Chechik \cite{chechik2011fault}  & $O(|F|^2(|F|+\log^2 n)k)$ & $\widetilde{O}(\deg(v)n^{1/k} \log(nW))$ per vertex & $f$\\
  \textbf{Here} & $O(|F|^2 k)$ & $\widetilde{O}(f \cdot n^{1+1/k} \log(nW))$ total size & $f$\\
  \textbf{Here} & $O(|F|^2 k)$ & $\widetilde{O}(f^3 \cdot n^{1/k} \log(nW))$ per vertex & $f$\\
 \hline
 \end{tabular}
 \caption{Comparison between FT routing schemes with a set of failures $F$}
\label{table_routing}
\end{table}

Finally, we provide a lower bound result on the minimal stretch regardless for the \emph{space} of the routing scheme, e.g., even if all vertices store all the graph edges. 

\begin{theorem}[Stretch Lower-Bound for FT Routing]\label{thm:lb-routing}
Any FT routing randomized scheme resilient to $f$ faults induces an expected stretch of $\Omega(f)$ regardless of the size of the routing tables and labels. In particular, this holds even if each routing table contains a complete information on the graph. 
\end{theorem}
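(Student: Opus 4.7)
The plan is to apply Yao's minimax principle: it suffices to exhibit a graph $G$ and a distribution $D$ over fault sets $F$ with $|F|\le f$ such that every \emph{deterministic} $f$-FT routing scheme on $G$ routes some pair $(s,t)$ along an expected path of length $\Omega(f)\cdot \dist_{G\setminus F}(s,t)$ under $D$. Construction: let $G$ consist of terminals $s,t$ together with $f+1$ ``switch'' vertices $v_1,\ldots,v_{f+1}$ connected by the edges $(s,v_i)$ and $(v_i,t)$ for every $i\in[f+1]$, padded with isolated vertices if needed to reach $n$. Thus $G$ is a bouquet of $f+1$ internally disjoint length-$2$ paths between $s$ and $t$, and each $v_i$ has degree exactly $2$. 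The distribution $D$ picks $i^\star\in[f+1]$ uniformly at random and returns the fault set $F:=\{(v_j,t):j\ne i^\star\}$; so $|F|=f$ and $\dist_{G\setminus F}(s,t)=2$.

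Fix any deterministic scheme, with arbitrary (possibly unbounded) tables and labels. Since the scheme's state at $s$ does not depend on $F$, the first outgoing port at $s$ is a fixed index $\pi(1)$. If $\pi(1)\ne i^\star$, the message arrives at $v_{\pi(1)}$, whose only neighbors are $s$ and $t$; the scheme's forward attempt along $(v_{\pi(1)},t)$ discovers that this edge has failed, and the only way to continue toward $t$ is to backtrack along $(v_{\pi(1)},s)$, costing exactly $2$ traversed edges per failed attempt (extra wandering can only lengthen the route). The key invariant, which I would prove by induction on the number of failed probes, is that just before the $(k+1)$-st attempt the scheme's observed history reduces to the set $\{\pi(1),\ldots,\pi(k)\}$ of indices already tried; hence the posterior distribution of $i^\star$ remains uniform on the remaining $f+1-k$ indices, and the scheme's next choice $\pi(k+1)$ is a fixed new index. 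Consequently the trajectory is an ordered linear search through the $v_i$'s, reaching $t$ via $s\to v_{i^\star}\to t$ after $2k$ traversed edges when $i^\star=\pi(k)$.

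Taking expectation over $i^\star\sim D$ yields $\E[k]=(f+2)/2$, so the expected routed length is $f+2$ and the expected stretch is $(f+2)/2=\Omega(f)$. Yao's minimax principle then lifts this to an $\Omega(f)$ lower bound on the expected stretch of any randomized scheme against some worst-case fixed $F$, regardless of the size of its tables or labels. The main technical obstacle is the inductive invariant in the middle paragraph: one has to rule out the possibility that the scheme extracts information about $i^\star$ from anything other than the list of already-observed failures. This reduces to observing that the labels and tables are assigned before $F$ is drawn, that the graph offers no alternative exits from any $v_j$ besides $s$ and (the failed) $t$, and that repeating a previously attempted switch is strictly wasteful; with these remarks the problem collapses to uniform-average linear search over $f+1$ items, for which the $\Omega(f)$ lower bound is immediate.
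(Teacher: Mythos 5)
Your proof is correct and takes essentially the same approach as the paper: $f+1$ vertex-disjoint $s$-$t$ paths whose last edges all fail except for one chosen uniformly at random, forcing the router into a linear search whose expected cost is $\Omega(f)$ times the surviving distance. The paper uses paths of length $\Theta(n/f)$ instead of $2$ and leaves the Yao-style reduction and the ``posterior stays uniform'' invariant implicit, but the construction and counting are the same.
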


\paragraph{Open Problems.} Our work leaves several interesting open ends. One natural direction is to provide labeling and routing schemes resilient to \emph{vertex} faults. The major challenge in handling vertex failure is that even a single faulty vertex might disconnect the graph into $\Omega(n)$ disconnected components. Another interesting direction is to derandomize our constructions. Currently there are no deterministic constructions of FT labeling schemes for general graphs. Finally, it will be also important to provide FT distance approximate labeling schemes whose stretch bound is independent in the number of faults $f$. This problem is also open in the corresponding setting of approximate distance sensitive oracles.

\subsection{Our Techniques} \label{sec:techniques}

For our FT labeling schemes, we present two constructions based on different techniques. 
The first construction uses the \emph{cycle-space sampling} technique of Pritchard and Thurimella \cite{pritchard2011fast} to determine if $s$ and $t$ are disconnected by a set of failures $F$. This technique has been applied in the past mainly in the context of computing small cuts in the distributed setting. 
The second construction uses the tool of \emph{linear sketches} by Ahn et al. \cite{ahn2012analyzing} to try to find a path that connects $s$ and $t$ in $G \setminus F$. This scheme is also useful for routing.
We next give an overview of the two approaches, and the applications for routing. Throughout, we assume that the graph $G$ is originally connected, otherwise the scheme can be applied to each connected component of $G$, which can be indicated in the label of the vertex.

\paragraph{Connectivity Labels Based on Cycle Space Sampling.} The cycle space sampling technique, introduced by Pritchard and Thurimella \cite{pritchard2011fast}, allows one to detect cuts in a graph by exploiting the interesting connection between cuts and cycles in a graph. This technique was used in \cite{pritchard2011fast} to design distributed algorithms for identifying small cuts in a graph. In more details, the technique is based on the relation between \emph{induced edge cuts} and \emph{binary circulations}, defined as follows. For a subset of vertices $S$, we denote by $\delta(S)$ the set of edges with exactly one endpoint in $S$. An \emph{induced edge cut} is a set of edges of the form $\delta(S)$ for some $S$. A \emph{binary circulation} is a set of edges in which every vertex has an even degree. For example, a cycle is a binary circulation. Note that if $F$ is an induced edge cut, and $\phi$ is a cycle, the number of edges in the intersection $|F \cap \phi|$ is even, as the cycle crosses the cut even number of times. This is also true for any binary circulation $\phi$. The cycle space technique extends this observation and shows that if $\phi$ is a random binary circulation and $F \subseteq E$, then
$$Pr[|F \cap \phi| \ is \ even] = \left\{
                \begin{array}{ll}
                  1,\ if\ F\ is\ an\ induced\ edge\ cut\\
                  1/2,\ otherwise
                \end{array}
              \right. $$ 
Hence, by choosing a \emph{random} binary circulation, one can detect if a set of edges $F$ is an induced edge cut with probability $1/2$.  To increase the success probability, we can choose $b$ random binary circulations. 
Based on these ideas, \cite{pritchard2011fast} showed how to assign the edges of the graph $b$-bit labels with the following property.  See Appendix \ref{sec:cycle_space_overview} for an overview.

\begin{restatable}{lemma}{cycle} \label{cycle_space_lemma}
There is an algorithm that assigns the edges of a graph $G=(V,E)$, $b$-bit labels $\phi(e)$ such that given a subset of edges $F \subseteq E$, we have:
$$Pr[\Moplus_{e \in F} \phi(e) = 0] = \left\{
                \begin{array}{ll}
                  1,\ if\ F\ is\ an\ induced\ edge\ cut\\
                  2^{-b},\ otherwise
                \end{array}
              \right. $$ 
Where $0$ is the all-zero vector. The time complexity for assigning the labels is $O((m+n)b)$.
\end{restatable}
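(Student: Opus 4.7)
The plan is to follow the cycle-space sampling construction of Pritchard and Thurimella and derive the two-sided probability bound from the classical cycle-cut duality over $\mathbb{F}_2$. First I would build a spanning forest $T$ of $G$ in $O(m+n)$ time. The cycle space $\mathcal{Z} \subseteq \mathbb{F}_2^E$ is the set of edge subsets inducing even degree at every vertex; it is a linear subspace of dimension $m - n + c$ (where $c$ is the number of connected components of $G$), with a basis given by the fundamental cycles $\{C_e : e \notin T\}$, where $C_e$ consists of $e$ together with the unique $T$-path between its endpoints.

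To form the labels I would independently sample $b$ uniform elements $\phi_1, \ldots, \phi_b \in \mathcal{Z}$, and set the $i$-th bit of the label $\phi(e)$ to the indicator $[e \in \phi_i]$. Each $\phi_i$ is drawn by (i) assigning an independent uniform bit to each non-tree edge, and (ii) running a bottom-up leaf-stripping pass on $T$ that sets each tree edge's bit so that every vertex ends with even degree in the selected edge set. Once the free bits on non-tree edges are fixed, there is a unique completion to an element of $\mathcal{Z}$, so $\phi_i$ is uniform on $\mathcal{Z}$. Each pass runs in $O(m+n)$ time, giving a total of $O((m+n)b)$.

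For correctness, identify subsets with their indicator vectors in $\mathbb{F}_2^E$, so that $\bigoplus_{e \in F}\phi_i(e) = \langle \mathbf{1}_F, \mathbf{1}_{\phi_i}\rangle$ over $\mathbb{F}_2$. The central algebraic fact is $\mathcal{Z} = \mathcal{C}^\perp$, where $\mathcal{C}$ is the cut space spanned by $\{\delta(S) : S \subseteq V\}$. The inclusion $\mathcal{Z} \subseteq \mathcal{C}^\perp$ is the elementary observation that any cycle crosses any cut an even number of times; equality then follows from the dimension count $\dim \mathcal{C} + \dim \mathcal{Z} = (n-c) + (m-n+c) = m$. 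Consequently, if $F$ is an induced edge cut then $\mathbf{1}_F \in \mathcal{C}$ so $\langle \mathbf{1}_F, \mathbf{1}_{\phi_i}\rangle = 0$ deterministically, and the $b$-bit XOR is $0$ with probability $1$; otherwise $\mathbf{1}_F \notin \mathcal{Z}^\perp$, so the linear functional $\mathbf{1}_\phi \mapsto \langle \mathbf{1}_F, \mathbf{1}_\phi\rangle$ is nontrivial on $\mathcal{Z}$ and vanishes on exactly half of it, giving probability $1/2$ per bit and hence $2^{-b}$ for the full $b$-bit vector by independence across the $b$ samples.

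The main delicate point is the uniformity claim: that the leaf-stripping procedure, after the $m - n + c$ free bits on non-tree edges are drawn, has a well-defined completion on tree edges that produces a binary circulation. I would check this by induction on $T$, noting that each time a degree-$1$ leaf is stripped its unique tree edge's bit is forced by the parity constraint at that vertex, and that the final constraint at the root is automatically consistent because the sum of all vertex-parities in any edge subset is $0 \bmod 2$. With this in hand $\phi_i$ is a uniform sample from $\mathcal{Z}$, and the duality argument above delivers exactly the claimed probabilities.
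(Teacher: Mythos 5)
Your proposal is correct and follows essentially the same route as the paper: sample $b$ independent uniform binary circulations by putting uniform bits on the non-tree edges of a spanning forest and uniquely completing on tree edges, then invoke cycle--cut orthogonality to get probability $1$ for induced edge cuts and $1/2$ per bit otherwise. The only difference is cosmetic --- you prove the per-sample probability claim in-line via the dimension count $\dim\mathcal{C}+\dim\mathcal{Z}=m$, whereas the paper cites it directly from Pritchard--Thurimella.
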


\noindent\textbf{The connectivity labels.} We next explain how to use this technique to build FT connectivity labels. Our goal is to assign labels to the vertices and edges of the graph, such that given the labels of two vertices $s,t$ and a set of failures $F$, we can check if $s$ and $t$ are disconnected by $F$. It is easy to show that $s$ and $t$ are disconnected by $F$ iff there is an \emph{induced edge cut} $F' \subseteq F$ that disconnects $s$ and $t$. While we can use the cycle space labels to check if a subset of edges $F' \subseteq F$ is an induced edge cut, this is still not enough to solve FT connectivity. To do so, we should check if an induced edge cut $F'$ \emph{disconnects} the vertices $s$ and $t$. To check this, we bring to our construction \emph{ancestry labels} in trees, and show that we can determine if $s$ and $t$ are in the same side of cut (induced by $F'$) based on the ancestry labels of $s,t$ and $F'$. The key observation is that a spanning tree $T$ of the graph is disconnected to at most $|F'|+1$ connected components, upon removing $F'$, where for any $e \in F'$ both its endpoints reside on two different sides of the induced edge cut defined by $F'$.
We can use this to identify which components of $T \setminus F'$ are on the same side of the induced edge cut. Moreover, we show that the ancestry labels allow us to determine the connected components of $s$ and $t$ in $T \setminus F'$. A brute-force implementation of this approach leads to a decoding time that is \emph{exponential} in $|F|$. I.e., the algorithm should check for any subset $F' \subseteq F$ if $F'$ is an induced edge cut. To overcome it, we show an efficient way to find $F' \subseteq F$ that disconnects $s$ and $t$ if exists, by translating our problem to a system of linear equations. This results in a decoding time polynomial in $|F|$ and $\log{n}$. The size of the labels is $O(f+\log{n})$, to guarantee that the cycle space labels are correct for any $F' \subseteq F$ w.h.p.      

\paragraph{Connectivity Labels Based on Graph Sketches.} We next provide some flavor of our labels based graph sketches. The length of the labels obtained in this technique is $O(\log^3{n})$ bits, which is dominated by the sketching information. A \emph{graph sketch} of a vertex $v$ is a randomized string of $\widetilde{O}(1)$ bits that compresses $v$'s edges. The linearity of these sketches allows one to infer, given the sketches of subset of vertices $S$, an outgoing cut edge $(S, V \setminus S)$. Graph sketches have numerous applications in the context of connectivity computation under various computational settings, e.g., \cite{kapron2013dynamic,kapralov2014spanners,GibbKKT15,DBLP:conf/podc/KingKT15,DBLP:conf/wdag/MashreghiK18,GhaffariP16,DuanConnectivitySODA17}. More concretely, our sketch-based labels are inspired by the centralized connectivity sensitivity oracles of Duan and Pettie \cite{DuanConnectivitySODA17}.   A common approach for deducing the graph connectivity merely from the sketches of the individual vertices is based on the well-known Boruvka algorithm \cite{Boruvka}. This algorithm works in $O(\log n)$ phases, where in each phase, from each growable component an outgoing edge is selected. All these outgoing
edges are added to the forest, while ignoring cycles. Each such phase reduces the number of
growable components by a $2$ factor, thus within $O(\log n)$ phases, a maximal forest is computed. Since this algorithm only requires the computation of outgoing edges it can simulated using $O(\log n)$ independent sketches for each of the vertices. 

Our high level approach for determining the $s$-$t$ connectivity in $G \setminus F$ mimics this above mentioned procedure. For simplicity assume that $G$ is connected and let $T$ be some spanning tree in $G$. Using ancestry labels, one can infer the components of $T \setminus F$. Moreover, by augmenting the labels with graph sketching information, one can also deduce the sketch of each component in $T \setminus F$. 
Note however that these sketches are in $G$ and therefore might encode outgoing edges that belong to $F$. To overcome this technicality, our sketching scheme allows us to cancel out the effect of the faulty edges $F$ from the sketching information. Consequently, we obtain the sketches of each $T \setminus F$ component in the surviving graph $G \setminus F$. We can then apply the Boruvka's algorithm on the components of $T \setminus F$, and infer the $s$-$t$ connectivity in $G \setminus F$. The actual implementation of this labeling scheme is somewhat more delicate. We note that some of these technicalities are for the sake of our later extension of these labels into compact routing schemes.

\remove{
\mertodo{This text can be omitted now, see if want to move elsewhere. We start by illustrating the underlying intuition for sketch. For a vertex $v$ and a subset of edges $E' \subseteq E$, let $\Sketch_{E'}(v)$ be the bitwise XOR of all the IDs of $E'$ edges adjacent to $v$.  
For a subset of vertices $S$, define $\Sketch_{E'}(S) = \oplus_{v \in S} \Sketch_{E'}(v)$. The useful property of sketches is that all edges of $E'$ that have both endpoints in $S$ are cancelled, and thus $\Sketch_{E'}(S)$ corresponds to the XOR of the identifiers of the $E'$ edges outgoing from $S$. If there is only one such edge, then
its ID corresponds to the value of $\Sketch_{E'}(S)$. By combining this idea with a basic sampling trick one can 
identify one outgoing edge from any subset $S$. But here we should also require special edge IDs in order to distinguish between an illegal ID, obtained by XORing IDs of several edges, and a true ID of a single edge.
Intuitively, we first define $O(\log{m})$ sets of edges $E_j$, where $E_j$ is obtained from $E$ by sampling each edge with probability $1/2^j$. Next, we define $\Sketch(v) = (\Sketch_{E_0}(v),...,\Sketch_{E_{\log{m}}}(v))$, and $\Sketch(S) = (\Sketch_{E_0}(S),...,\Sketch_{E_{\log{m}}}(S))$. These $O(\log^2{n})$-bit sketch units have the property that given $\Sketch(S)$, with constant probability there is a sketch unit that holds the identifier of exactly one outgoing edge of $S$. In our algorithm, we use $\Theta(\log{n})$ sketch units (each time with different sampled sets $E_j$) to be able to eventually find outgoing edges w.h.p. 
A crucial point in this regard is to be able to distinguish between sketch units that hold the XOR of at least two edge identifiers vs. units that store the identifier of \emph{exactly} one edge (i.e., an outing edge). For that purpose we employ the computation of edge identifiers by \cite{GhaffariP16}, that have the property that the XOR of any two edge identifiers is not a \emph{legal} edge identifier of a single edge. We also show that identification of the legal edge can be done efficiently, with no global information.
\noindent\textbf{FT Connectivity from graph sketches.} We use the graph sketches together with the well-known Boruvka algorithm \cite{Boruvka} to identify the connected components of the graph $G \setminus F$. This eventually allows us to check if $s$ and $t$ are connected in $G \setminus F$ based on their components. We start by describing our general approach, and then explain how to simulate it based only on labels of $s,t$ and $F$. Let $T$ be a spanning tree of the graph $G$. If the edges $F$ are removed from $G$, it breaks the tree $T$ to at most $|F|+1$ connected components. To figure out the connected components in $G \setminus F$, we apply Boruvka algorithm on the connected components of $T \setminus F$. In this algorithm, at each iteration we have a set of components, and our goal is to find an outgoing edge from each connected component, and then merge components connected by an edge. To find outgoing edges, we use the sketches of the components. After repeating the process for $O(\log{n})$ iterations, we find the connected components of $G \setminus F$ w.h.p. The vertices $s$ and $t$ are connected in $G \setminus F$ iff they are in the same component. 
\noindent\textbf{The connectivity labels.} At a high-level, to simulate the algorithm using only the information provided by the connectivity labels, we include in the labels of vertices and edges ancestry labels in a spanning tree $T$. In addition, for any tree edge $e=\{u,v\} \in T$ we include in its label the sketch information of $T_v$ and $T_u$, as well as the sketch information of $T$, where $T_v,T_u$ are the subtrees of $T$ rooted at $v$ and $u$, respectively. 
We show that this information allows us to identify the connected components in $T \setminus F$, and compute the sketch information of each one of the components. Moreover, the ancestry labels give us an efficient way to identify the connected component of any vertex $v$. This is useful both for identifying the connected components of $s$ and $t$, and to find the connected components of any outgoing edge that the algorithm finds. When we merge components, the sketch information of the new component can be obtained by XORing the sketches of the components we merge. One delicate point in the algorithm is that we originally compute sketches in the original graph $G$, where our algorithm works in the graph $G \setminus F$, and so we need the sketch information in $G \setminus F$. To obtain this, we cancel the information about edges from $F$ in the sketches by XORing their IDs in the relevant places. This can be done efficiently by using pairwise independence hash functions to generate the sketches. 
}
}

\paragraph{Applications for Routing Schemes.} The starting point to our routing scheme is given by our (sketch-based) labeling scheme. These labels allows one to deduce also a succinct description of an $s-t$ path in $G \setminus F$ if exists, by following the component merging procedure of the Boruvka algorithm. This description is composed of $O(f)$ path segments, where each segment $\{u,v\}$ either corresponds to an outgoing (non-tree) edge found in the algorithm using the sketch information, or to a tree path between two vertices $u$ and $v$ in the same connected component in $T \setminus F$. Given the connectivity labels of $s,t$ and $F$, we can find this description, and use it for routing. Routing across an edge $\{u,v\}$ just requires sending a message over the edge, while routing on a tree path between $u$ and $v$ can be done using a routing scheme for trees. 
While this approach allows to send a message from $s$ to $t$, there is no bound on the length of the path traversed. Additionally, this approach assumes that the set of failures $F$ is known in advance. We next explain how to overcome these issues.
\\
\noindent\textbf{Bounding the stretch.} To route messages on low-stretch paths we use the notion of \emph{tree covers}, following the approach in \cite{chechik2012f}. This approach also allows us to translate our connectivity labels to approximate distance labels as we discuss in Section \ref{sec:ft-distance}. Here, instead of applying our connectivity scheme on just one graph $G$, we apply it on many subgraphs $G_{i,j}$ of $G$ with the following properties. 
\begin{enumerate}
\item Each vertex $v$ is contained in $\widetilde{O}(n^{1/k})$ subgraphs.
\item For any $1 \leq i \leq \log(nW)$, and any vertex $v$, there is a subgraph $G_{i,i^*(v)}$ that contains all the vertices in the $2^i$-neighborhood of $v$.
\item If $v$ and $u$ are connected in the graph $G_{i,i^*(v)} \setminus F$, then there is a path between them of length at most $O(k|F| 2^i)$ in the graph $G_{i,i^*(v)} \setminus F$.\label{prop_path}
\end{enumerate}
By applying our connectivity scheme on each one of the subgraphs $G_{i,j}$, we can route a message from $s$ to $t$ on a path of stretch $O(k|F|)$. The size of the labels and routing tables of vertices is $\widetilde{O}(n^{1/k})$ as each vertex and edge participate in $\widetilde{O}(n^{1/k})$ subgraphs.
%
%
\\
\noindent\textbf{Faulty edges are unknown.} The scheme we described assumes that the routing algorithm knows the labels of $s,t$ and $F$ in advance, we next explain how to avoid this assumption. Our general approach is to work in phases, where in each phase we try to route a message from $s$ to $t$ according to the currently set of known faults. We either succeed, or learn about the label of a new faulty edge $e \in F$ and try again. The stretch of the scheme increases to $O(k|F|^2)$ because of the $|F|+1$ phases. Direct application of this approach may require large routing tables, as each vertex may need to know the labels of all edges adjacent to it, to be able to learn the labels of faulty edges found in the algorithm. To overcome it we use the following ingredients. 

First, recall that in our connectivity labeling scheme we use a spanning tree $T$. In the routing scheme, these are the trees of the tree cover. We show that it is enough for each vertex to store labels only of its adjacent \emph{tree} edges. 
%
Consequently, the total size of all routing tables can be bounded by $\widetilde{O}(fn^{1+1/k})$.\footnote{The $f$ term in the size comes from the fact we apply the connectivity labels $f+1$ times to support the $|F|+1$ phases.}
However, this alone is not enough to bound the size of individual routing tables of vertices, as the degree of a vertex in a tree may be linear. To overcome this, we show a clever way to load balance the labels' information between $v$ and its children in the tree. This results in tables of size $\widetilde{O}(f^3 n^{1/k})$ per vertex, while keeping the same stretch of the scheme. The increase in the total size of tables comes from the fact we now duplicate labels $f+1$ times, to be able to recover them in the presence of $f$ failures. 
\subsection{Additional Related Work}

\paragraph{Fault-Tolerant Labeling Schemes.} FT labels for connectivity were introduced by \cite{courcelle2007forbidden} under the term \emph{forbidden-set labeling}. Forbidden set refers to a subset $F$ of at most $f$ edges, such that given the labels of $s,t$ and $F$ one should determine if $s$ and $t$ are connected in $G \setminus F$. The forbidden edge set can be treated in this context as faulty edges\footnote{For routing, the forbidden-set scheme is slightly weaker than FT scheme as explained later.}.
Previous works study FT connectivity labels only in restricted graph families. For example, Courcelle et al. \cite{CourcelleT07} presented a labeling scheme with logarithmic label length for the families of $n$-vertex graphs with bounded clique-width, tree-width and planar graphs. For $n$-vertex graphs with doubling dimension at most $\alpha$, Abraham et al. \cite{AbrahamCGP16} designed FT labeling schemes with label length $O((1 + 1/\epsilon)^{2\alpha}\log n)$ that output $(1+\epsilon)$ approximation of the shortest path distances under faults. Recently, \cite{DBLP:journals/corr/abs-2102-07154} studied FT exact distance labels in planar graphs, and show that any directed weighted planar graph admits fault-tolerant distance labels of size $O(n^{2/3})$.

\paragraph{Connectivity and Distance Sensitivity Oracles.} 
Connectivity and distance sensitivity oracles are centralized data structures that support connectivity or distance queries in the presence of failures. 
The first construction of connectivity sensitivity oracles was given by Patrascu and Thorup \cite{patrascu2007planning} providing an $S(n)=\widetilde{O}(fn)$ space oracle that answers $\langle s,t, F \rangle$ connectivity queries in $\widetilde{O}(f)$ time. The state-of-the-art bounds of these oracles are given by Duan and Pettie \cite{DuanConnectivitySODA17}.
Chechik et al. \cite{chechik2012f} presented the first randomized construction of distance sensitivity oracle resilient to $f$ edge faults. 
Specifically, for any $n$-vertex weighted graph, stretch parameter $k$, and a fault bound $f$, they provide a data-structure with $O(f k n^{1+1/k}\log(nW))$ space, query time of $\widetilde{O}(|F|)$, and $O(f k)$ stretch, where $W$ is the weight of the heaviest edge in the graph. Their solution is based on an elegant transformation that converts the FT connectivity oracle of \cite{patrascu2007planning} into an FT approximate distance oracle.

While the main focus of this paper is in approximate distances, sensitivity oracles that report (possibly near) exact distances under faults have been studied also thoroughly in e.g., \cite{demetrescu2002oracles,bernstein2008improved,duan2009dual,WeimannY10,GrandoniW12,ChechikCFK17,van2019sensitive}. Since reporting exact distances requires linear label length already in the fault-free setting \cite{gavoille2004distance}, we focus on the approximate relaxation, where there is still hope to obtain labels of polylogarithmic length.

\paragraph{Fault-Tolerant Routing Schemes.}
The first formalization of FT routing schemes was given by the influential works of Dolev \cite{dolev1984new} and Peleg \cite{peleg1987fault}. These earlier works presented the first non-trivial solutions for general graphs supporting at most $\lambda$ faulty edges, where $\lambda$ is the edge-connectivity of the graph. Their routing labels had linear size, providing $s$-$t$ routes of possibly linear length (even in cases where the surviving $s$-$t$ path is of $O(1)$ length). In competitive FT routing schemes, it is required to provide $s$-$t$ routes of length that competes with the shortest $s$-$t$ path in $G \setminus F$, even in cases where $G \setminus F$ is not connected. Competitive FT routing schemes \cite{peleg2009good} for general graphs were given by Chechik et al. \cite{ChechikLPR10,chechik2012f} for the special case of $f\leq 2$ faults. 
Specifically, for a given stretch parameter $k$, they gave a routing scheme with a total space bound of $\widetilde{O}(n^{1+1/k})$ bits, polylogarithmic-size labels and messages, and a routing \emph{stretch} of $O(k)$. 
This scheme was extended later on for any $f$ by Chechik \cite{chechik2011fault}, at the cost of increasing the routing stretch to $O(f^2(f+\log^2 n)k)$. For a single edge failure, \cite{rajan2012space} showed a routing scheme with routing tables of size $\widetilde{O}(k \deg(v)+ n^{1/k})$ size per vertex, $O(k^2)$ stretch and $O(k+\log{n})$ size header.

\paragraph{Forbidden Set Routing.}
A more relaxed setting of FT routing scheme which has been studied in the literature is given by the \emph{forbidden set routing schemes}, introduced by Courcelle and Twigg \cite{CourcelleT07}. In that setting, it is assumed that the routing protocol knows in advance the set of faulty edges $F$. In contrast, in the FT routing setting, the failing edges are a-priori unknown to the routing algorithm,  and can only be detected upon arriving one of their endpoints. Forbidden set routing schemes have been devised to the same class of restricted graph families as obtained for the forbidden set labeling setting \cite{CourcelleT07,AbrahamCGP16,abraham2012fully}.
\section{Preliminaries}
Given a graph $G=(V,E)$, and vertex $u \in V$, let $\deg(u,G)$ be the degree of $u$ in $G$. 
Given a tree $T$ and $u, v \in T$, denote the $u$-$v$ path in $T$ by $\pi(u,v,T)$. When the tree $T$ is clear from the context, we may omit it and write $\pi(u,v)$. For a (possibly weighted) subgraph $G' \subseteq G$ and a vertex pair $s,t \in V$, let $\dist_{G'}(s,t)$ denote the length of the $s$-$t$ shortest path in $G'$. 

\paragraph{Fault-Tolerant Labeling Schemes.}
For a given graph $G$, let $\Pi: V\times V \times \mathcal{G} \to \mathbb{R}_{\geq 0}$ 
be a function defined on pairs of vertices and a subgraph $G' \subset G$, where $\mathcal{G}$ is the family of all subgraphs of $G$. For an integer parameter $f\geq 1$, an $f$-\emph{fault-tolerant labeling scheme} for a function $\Pi$ and a graph family $\mathcal{F}$ is a pair of functions $(L_{\Pi},D_{\Pi})$. The function $L_{\Pi}$ is called the \emph{labeling function}, and $D_{\Pi}$ is called the \emph{decoding function}. For every graph $G$ in the family $\mathcal{F}$, the labeling function $L_{\Pi}$ associates with each vertex $u \in V(G)$ and every edge $e \in E(G)$, a label $L_{\Pi}(u,G)$ (resp., $L_{\Pi}(e,G)$). It is then required that given the labels of any triplets $s,t, F \in V \times V \times E^f$, the decoding function $D_{\Pi}$ computes $\Pi(s,t, G \setminus F)$.  The primary complexity measure of a labeling scheme is the \emph{label length}, measured by the length (in bits) of the largest label it assigns to some vertices (or edges) in $G$ over all graphs $G \in \mathcal{F}$. An $f$-FT connectivity labeling scheme is required to output YES iff $s$ and $t$ are connected in $G \setminus F$.  In $f$-FT \emph{approximate distance labeling scheme} it is required to output an estimate for the $s$-$t$ distance in the graph $G \setminus F$. Formally, an $f$-FT labeling scheme is $q$\emph{-approximate} if the value $\delta(s,t,F)$ returned by the decoder algorithm satisfies that $\dist_{G \setminus F}(s,t)\leq \delta(s,t,F) \leq q \cdot \dist_{G \setminus F}(s,t)$.  Throughout the paper we provide randomized labeling schemes which provide a high probability guarantee of correctness for any fixed triplet $\langle s,t, F \rangle$.

\paragraph{Fault-Tolerant Routing Schemes.} In the setting of FT routing scheme, one is given a pair of source $s$ and destination $t$ as well as $F$ edge faults, which are initially unknown to $s$. The routing scheme consists of \emph{preprocessing} and \emph{routing} algorithms. The preprocessing algorithm defines labels $L(u)$ to each of the vertices $u$, and a header $H(M)$ to the designated message $M$. In addition, it defines for every vertex $u$ a routing table $R(u)$. The labels and headers are usually required to be short, i.e., of poly-logarithmic bits. 
The routing procedure determines at each vertex $u$ the port-number on which $u$ should send the messages it receives. The computation of the next-hop is done by considering the header of the message $H(M)$, the label of the source and destination $L(s)$ and $L(t)$ and the routing table $R(u)$. The routing procedure at vertex $u$ might also edit the header of the message $H(M)$. The failing edges are not known in advance and can only be revealed by reaching (throughout the message routing) one of their endpoints. The \emph{space} of the scheme is determined based on maximal length of message headers, labels and the individual routing tables. The stretch of the scheme is measured by the ratio between the length of the path traversed until the message arrived its destination and the length of the shortest $s$-$t$ path in $G \setminus F$. In the more relaxed setting of \emph{forbidden-set routing schemes} the failing edges are given as input to the routing algorithm.


\section{Fault-Tolerant (FT) Connectivity Labels}


We next discuss two labeling schemes for connectivity that are based on two different approaches. The first one uses the \emph{cycle space sampling} technique to try to find cuts that disconnect $s$ and $t$. The second one uses \emph{graph sketches} to try to find a path that connects $s$ and $t$. Since the second approach allows to find a path between $s$ and $t$ if exists, it is also useful later for routing. In terms of label size, the first approach gives labels of size $O(f + \log{n})$, which is near-optimal if the number of failures is $f=O(\log{n})$. On the other hand, the second scheme gives labels of size $O(\log^3{n})$, which is better when the number of failures is large.
We next discuss the labeling schemes. During this section, we assume that the input graph $G$ is connected. If not, we can add to the label of each vertex and edge the id of their connected component in $G$, and apply the labeling scheme to each one of the connected components separately. 

\subsection{Connectivity Labels Based on Cycle Space Sampling}

\subsubsection{The Labeling Algorithm}

Our labels are composed of two ingredients, that we review next.

\paragraph{Cycle Space Labels.}
The cycle space sampling technique, introduced in \cite{pritchard2011fast}, allows to give the edges of a graph short labels that allow to detect cuts in the graph. For a set of vertices $S$, $\delta(S)$ is the set of edges with exactly one endpoint in $S$. A subset of edges $F$ is called an \emph{induced edge cut} if $F = \delta(S)$ for some $S$.
The following is shown in \cite{pritchard2011fast} (see Corollary 2.9). 

 \cycle*

\remove{
\begin{restatable}{lemma}{cycle} \label{cycle_space_lemma}
There is an algorithm that assigns the edges of a graph $G=(V,E)$ $b$-bit labels $\phi(e)$ such that given a subset of edges $F \subseteq E$, we have:
$$Pr[\Moplus_{e \in F} \phi(e) = 0] = \left\{
                \begin{array}{ll}
                  1,\ if\ F\ is\ an\ induced\ edge\ cut\\
                  2^{-b},\ otherwise
                \end{array}
              \right. $$ 
Where $0$ is the all-zero vector. The time complexity for assigning the labels is $O((m+n)b)$.
\end{restatable}
}

For an overview of the technique, see Appendix \ref{sec:cycle_space_overview}. 
In our algorithm, given a subset of edges $F$ of size at most $f$, we want to be able to check for any subset $F' \subseteq F$ if $F'$ is an induced edge cut. To support all these $2^f$ queries w.h.p we choose $b=f+ c \log{n}$ for a constant $c$. This guarantees that the probability of error is at most $\frac{2^f}{2^{f+c\log{n}}}=\frac{1}{n^c}$. This will guarantee that given a query $\langle s,t,F \rangle$, our algorithm answers correctly w.h.p. We remark that if we increase the size of labels to $O(f \log{n})$ we can get an algorithm that is correct for \emph{all} queries w.h.p. 
The reason is that we can then check for any subset of edges $F$ of size at most $f$ if $F$ is an induced edge cut. As the number of subsets of size at most $f$ is bounded by $O(n^f)$, we get that the labels are correct for all such subsets w.h.p.


\paragraph{Ancestry Labels.} Our second ingredient are ancestry labels for trees.
To use them, we first fix a spanning tree $T$ of the graph rooted at $r$. The goal is to assign vertices short labels, such that given the labels of $u$ and $v$, we can infer if $u$ is an ancestor of $v$ in $T$. A simple labeling scheme based on a DFS scan solves the problem with labels of size $2 \lceil \log{n} \rceil$ per vertex \cite{kannan1992implicat}, the time for assigning the labels is $O(n)$ for the DFS scan of the tree. Labeling schemes with improved label size appear in \cite{abiteboul2006compact,alstrup2002improved,fraigniaud2010compact,fraigniaud2010optimal}.

\begin{lemma} \label{anc_labels}
For every tree $T$, there is an algorithm that assigns the vertices $u$ of the tree labels $\LCALabel_T(u)$ of $O(\log{n})$ bits, such that given the labels of $u$ and $v$ we can infer if $u$ is an ancestor of $v$ in $T$ in $O(1)$ time. The time for assigning the labels is $O(n)$. 
\end{lemma}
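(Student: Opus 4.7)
The plan is to use the classical interval-based ancestry labeling scheme that goes back to Kannan, Naor, and Rudich. First, I would root the tree $T$ at its designated root $r$ and run a depth-first search starting from $r$, maintaining a global counter that is incremented each time a vertex is first visited and each time a vertex is fully processed (all its subtree has been explored). During this traversal, for each vertex $u$ I record two integers in $\{1,\ldots,2n\}$: the \emph{entry time} $\text{in}(u)$ (the counter value when $u$ is first discovered) and the \emph{exit time} $\text{out}(u)$ (the counter value when the DFS backtracks out of $u$). The label is then $\LCALabel_T(u) := (\text{in}(u),\text{out}(u))$, which uses $2\lceil \log(2n)\rceil = O(\log n)$ bits.

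For the decoding, given $\LCALabel_T(u)$ and $\LCALabel_T(v)$, I would declare $u$ to be an ancestor of $v$ in $T$ iff
\[
\text{in}(u) \le \text{in}(v) \le \text{out}(u),
\]
which is a comparison of two pairs of integers and hence runs in $O(1)$ time. The correctness follows from the standard nesting property of DFS traversals: for any two vertices $u,v$, the intervals $[\text{in}(u),\text{out}(u)]$ and $[\text{in}(v),\text{out}(v)]$ are either disjoint (when neither is an ancestor of the other) or one is contained in the other (when one is an ancestor). Moreover, $[\text{in}(v),\text{out}(v)] \subseteq [\text{in}(u),\text{out}(u)]$ exactly when $u$ is an ancestor of $v$ (including the case $u=v$), since the DFS fully explores the subtree rooted at $u$ before exiting $u$. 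The condition $\text{in}(u)\le \text{in}(v)\le \text{out}(u)$ is equivalent to this containment.

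For the preprocessing time, a single DFS on $T$ touches each vertex and each edge once, so the traversal and the assignment of $(\text{in}(u),\text{out}(u))$ take $O(n)$ time in total. There is no real technical obstacle here; the only thing worth being careful about is making sure that the intervals actually nest, which is immediate once one uses a single global counter (as opposed to separate in/out counters) and increments it atomically at each event. A brief remark can point to the improved constructions in \cite{abiteboul2006compact,alstrup2002improved,fraigniaud2010compact,fraigniaud2010optimal} that shave the constant factor and achieve $\log n + O(\log\log n)$ bit labels, but since the lemma only asks for $O(\log n)$ bits, the elementary DFS-interval scheme above suffices.
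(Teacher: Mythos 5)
Your proposal is correct and is exactly the scheme the paper relies on: it cites the DFS-interval labeling of Kannan, Naor, and Rudich with $2\lceil\log n\rceil$-bit labels, and later (in the proof of Claim~\ref{claim_component_tree}) uses precisely the $(DFS_1(v),DFS_2(v))$ entry/exit times with the interval-containment test for ancestry. Nothing is missing.
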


\paragraph{The Final Labels.}
Our final labels contain the following ingredients:
\begin{enumerate}
\item The label of the edge $e=(u,v)$ is composed of $(\phi(e),\LCALabel_T(u),\LCALabel_T(v),j)$, where $j$ is a bit indicating if $e$ is a tree edge in $T$. In total, the label size is $O(f + \log{n})$.
\item The label of a vertex $v$ is its ancestry label $\LCALabel_T(v)$ of size $O(\log{n})$ bits.
\end{enumerate}

As discussed, the time for assigning the labels is $O((m+n)b)=\widetilde{O}((m+n)f)$, as $b=f+c\log{n}$.
We next explain how we use these labels to check FT connectivity.

\subsubsection{The Decoding Algorithm}

We next discuss several observations that allow us to check if $s$ and $t$ are disconnected by $F$.

\begin{claim} \label{obs_induced}
The vertices $s$ and $t$ are disconnected by $F$ if an only if there is an induced edge cut $F' \subseteq F$ that disconnects $s$ and $t$.
\end{claim}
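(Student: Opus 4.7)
The plan is to prove both directions of the biconditional, with the backward direction being essentially immediate and the forward direction requiring the natural construction of a cut from the connected component of $s$ in $G \setminus F$.

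For the backward direction ($\Leftarrow$), suppose there exists an induced edge cut $F' \subseteq F$ that disconnects $s$ and $t$. Then removing $F'$ already separates $s$ and $t$, and since $F' \subseteq F$, removing the (possibly larger) set $F$ also separates them. So $s$ and $t$ are disconnected by $F$.

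For the forward direction ($\Rightarrow$), I would work with the connected component of $s$ in the surviving graph. Formally, suppose $s$ and $t$ are disconnected in $G \setminus F$. Let $S$ denote the vertex set of the connected component of $s$ in $G \setminus F$; then $s \in S$ and $t \notin S$. Consider $F' := \delta(S)$, which by definition is an induced edge cut. I claim $F' \subseteq F$: if $e \in \delta(S)$ were not in $F$, then $e$ would still be present in $G \setminus F$ and would have one endpoint inside $S$ and one outside, contradicting the fact that $S$ is the full connected component of $s$ in $G\setminus F$. Moreover, $F'$ itself disconnects $s$ from $t$, since removing $\delta(S)$ from $G$ leaves no edge crossing between $S$ and $V\setminus S$, and $s$ and $t$ lie on opposite sides.

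The main obstacle here is really just being careful about the definition of ``induced edge cut'' — ensuring that the set we pick is exactly of the form $\delta(S)$ for some $S$, rather than a proper subset of such a cut. The choice $S = $ connected component of $s$ in $G\setminus F$ handles this cleanly because then every crossing edge in $G$ is necessarily a faulty edge, so $\delta(S)$ itself (not merely a subset of it) is contained in $F$. Together the two directions establish the claim.
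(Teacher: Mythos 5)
Your proof is correct. The backward direction is handled exactly as in the paper. For the forward direction, however, you take a more direct route than the paper does: you set $S$ to be the connected component of $s$ in $G\setminus F$ and observe that $\delta(S)$ is automatically an induced edge cut, is contained in $F$ (any surviving crossing edge would enlarge the component), and separates $s$ from $t$. The paper instead chooses $F'\subseteq F$ to be a \emph{minimal} subset whose removal disconnects $s$ and $t$, and then argues by contradiction (using minimality) that every edge of $F'$ crosses between $V_s$ and $V\setminus V_s$, where $V_s$ is the component of $s$ in $G\setminus F'$. Both arguments are sound; yours avoids the minimality/contradiction step entirely and is arguably cleaner for proving the claim as stated. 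The paper's choice has the minor side benefit of exhibiting a minimal cut, but nothing downstream (Claim \ref{obs_cut_sides} or Corollary \ref{cor_ft_connectivity}) relies on minimality, so your construction serves equally well. Your closing remark correctly identifies the only delicate point: one must produce a set that is exactly of the form $\delta(S)$, not merely a subset of $F$ that happens to disconnect $s$ and $t$, and your choice of $S$ handles this.
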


\begin{proof}
First, if $F' \subseteq F$ disconnects $s$ and $t$, then clearly $F$ disconnects $s$ and $t$.
On the other hand, if $s$ and $t$ are disconnected by $F$, let $F' \subseteq F$ be a minimal set of edges whose removal disconnects $s$ and $t$. We show that $F'$ is an induced edge cut. Let $V_s$ be the vertices in the connected component of $s$ in $G \setminus F'$. We show that all edges in $F'$ are between $V_s$ and $V \setminus V_s$, implying that $F'$ is an induced edge cut. Assume to the contrary that there is an edge $e \in F'$ with both endpoints in one of the sides, say $V_s$, then $F' \setminus \{e\}$ is still a cut that disconnects $s$ and $t$ (as $V_s$ is still disconnected from the rest of the graph if we add $e$), contradicting the minimality of $F'$. A symmetric argument shows that $e$ cannot have both its endpoints in $V \setminus V_s$.
\end{proof}

We next show that given an induced edge cut $F'$, there is a simple way to determine the two sides of the cut induced by $F'$ (see Figure \ref{cutSidesPic} for illustration). For a vertex $v$ and an induced edge cut $F'$, we denote by $n_v(F')$ the number of edges from $F'$ in the path from the root $r$ to $v$ in the spanning tree $T$. We show the following.

\remove{
\begin{claim}
Let $F'$ be an induced edge cut, and let $T$ be a spanning tree with root $r$. Let $V_0$ be all the vertices $v$ where in the path from $r$ to $v$ there is an even number of edges from $F'$, and let $V_1 = V \setminus V_0$. Then $(V_0,V_1)$ is the induced edge cut defined by $F'$.
\end{claim}
}

\begin{claim} \label{obs_cut_sides}
Let $F'$ be an induced edge cut. Let $$V_0=\{v \in V|\ n_v(F')\ is \ even \},$$ $$V_1=\{v \in V|\ n_v(F')\ is \ odd \}.$$ Then $(V_0,V_1)$ is the induced edge cut defined by $F'$.
\end{claim}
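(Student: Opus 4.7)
The plan is to prove the claim by a direct parity-of-crossings argument along tree paths rooted at $r$. Since $F'$ is an induced edge cut, by definition there exists some $S \subseteq V$ with $F' = \delta(S)$, and $(S, V \setminus S)$ is \emph{the} partition induced by $F'$. The goal is to identify $\{V_0, V_1\}$ with $\{S, V \setminus S\}$, so that $(V_0, V_1)$ names the same bipartition. Without loss of generality I would assume $r \in S$; the other case is symmetric because swapping $S$ with $V \setminus S$ leaves $\delta(S)$ unchanged, and, correspondingly, I would either show $V_0 = S$, $V_1 = V \setminus S$ directly, or swap the roles of $V_0$ and $V_1$.

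With the root fixed, consider an arbitrary vertex $v$ and the unique tree path $\pi(r, v) \subseteq T$. I would walk along $\pi(r,v)$ one edge at a time, tracking which side of the cut the current endpoint lies on. The key observation is: traversing an edge $e \in F' = \delta(S)$ switches the endpoint's side (by definition of $\delta(S)$), while traversing any edge $e \notin F'$ keeps the endpoint on the same side. Since we start at $r \in S$ and the path makes exactly $n_v(F')$ side-switches before arriving at $v$, a trivial induction on path length yields $v \in S$ iff $n_v(F')$ is even. Hence $V_0 = S$ and $V_1 = V \setminus S$.

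This immediately gives the claim: the partition $(V_0, V_1)$ coincides with $(S, V \setminus S)$, so the edges with exactly one endpoint in $V_0$ are exactly $\delta(S) = F'$, meaning $(V_0, V_1)$ is the induced edge cut defined by $F'$. There is essentially no obstacle here; the only care needed is to justify the WLOG choice $r \in S$ and to note that the parity assignment is well-defined because in a tree the path $\pi(r,v)$ is unique (so $n_v(F')$ is unambiguous). I would close with a one-line remark that the same argument shows the decoding algorithm can classify any vertex's side of $F'$ just from its ancestry label and the ancestry labels of the edges in $F'$, since ancestry labels are exactly what is needed to count how many edges of $F'$ lie on $\pi(r, v)$.
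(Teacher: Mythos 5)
Your proof is correct and follows essentially the same route as the paper's: both arguments fix the partition $(S, V\setminus S)$ with $F'=\delta(S)$ and track parity of $F'$-crossings along the unique root-to-$v$ tree path, using that edges of $F'$ switch sides while other edges do not. Your version merely spells out the induction and the well-definedness of $n_v(F')$ more explicitly.
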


\begin{proof}
Since $F'$ is an induced edge cut, the endpoints of every edge in $F'$ are on different sides of the cut. Hence, if we scan the tree $T$ from the root to the leaves, every time we reach an edge from $F'$ we change the side of the cut. It follows that one side of the cut contains all vertices $v$ such that $n_v(F')$ is even, and the other side has all vertices $v$ such that $n_v(F')$ is odd. Hence $V_0,V_1$ are the two sides of the cut.
\end{proof}

\setlength{\intextsep}{0pt}
\begin{figure}[h]
\centering
\setlength{\abovecaptionskip}{-2pt}
\setlength{\belowcaptionskip}{6pt}
\includegraphics[scale=0.55]{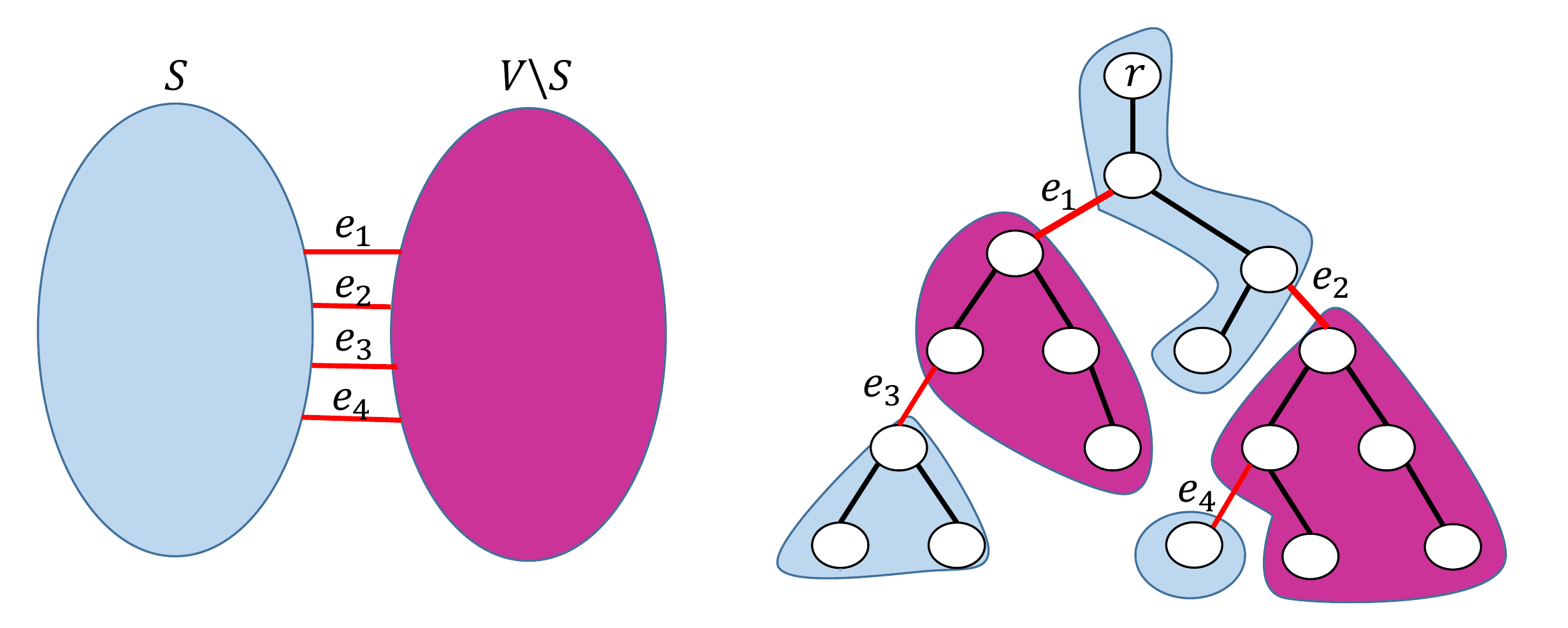}
 \caption{Here $F'=\{e_1,e_2,e_3,e_4\}$ is an induced edge cut. On the right, you can see the partition into sides in the tree. Every time we reach an edge from $F'$, we change the side of the cut.}
\label{cutSidesPic}
\end{figure}

From Claims \ref{obs_induced} and \ref{obs_cut_sides}, we get the following.

\begin{corollary} \label{cor_ft_connectivity}
The vertices $s$ and $t$ are disconnected by $F$ if an only if there is an induced edge cut $F' \subseteq F$, such that one of the values $n_s(F'),n_t(F')$ is even and the other is odd. 
\end{corollary}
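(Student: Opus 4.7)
The plan is to obtain the corollary by directly chaining Claims \ref{obs_induced} and \ref{obs_cut_sides}; no new technique is required, and the whole argument reduces to reading off parities from the two-sided partition produced by Claim \ref{obs_cut_sides}.

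For the forward direction, I would first assume that $s$ and $t$ are disconnected by $F$. Applying Claim \ref{obs_induced}, some $F' \subseteq F$ is itself an induced edge cut that disconnects $s$ from $t$. Claim \ref{obs_cut_sides} then identifies the two sides of this cut as $V_0 = \{v \in V : n_v(F') \text{ is even}\}$ and $V_1 = \{v \in V : n_v(F') \text{ is odd}\}$. Because $s$ and $t$ must lie on opposite sides of this cut, exactly one of $n_s(F'), n_t(F')$ is even and the other is odd, which is what the corollary asserts.

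For the reverse direction, I would start from an induced edge cut $F' \subseteq F$ with $n_s(F')$ and $n_t(F')$ of different parity. Applying Claim \ref{obs_cut_sides} again, the parity-based partition $(V_0,V_1)$ is precisely the pair of sides of the cut defined by $F'$, so $s$ and $t$ sit in different sides and hence $F'$ disconnects them in $G \setminus F'$. Since $F' \subseteq F$, any putative $s$-$t$ path in $G \setminus F$ would still avoid $F'$, so $s$ and $t$ are also disconnected in $G \setminus F$, as required.

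There is no real obstacle here: the corollary is essentially an assembly step. The only things worth double-checking are that $n_v(F')$ is well-defined for every vertex $v$ (which holds because of the standing assumption that $G$ is connected, so the spanning tree $T$ spans all of $V$) and that the partition $(V_0,V_1)$ from Claim \ref{obs_cut_sides} truly coincides with the unique bipartition whose edge boundary is $F'$, so the phrase ``opposite sides of the cut'' is unambiguous. Both points are immediate from the statements being invoked, so the overall proof is essentially a two-line deduction.
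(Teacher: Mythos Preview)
Your proposal is correct and matches the paper's approach exactly: the paper presents this corollary as an immediate consequence of Claims~\ref{obs_induced} and~\ref{obs_cut_sides} with no additional argument. Your explicit chaining of the two directions is precisely the intended deduction.
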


This gives a simple approach to detect if $s$ and $t$ are disconnected by $F$. We go over all subsets $F' \subseteq F$, for each one of them we first check if $F'$ is an induced edge cut using the cycle space labels. Second, if $F'$ is an induced edge cut, we compute the values $n_s(F'),n_t(F')$, if the number is even for one of them and odd for the second, we deduce that $F'$ disconnects $s$ and $t$. Note that we can use the ancestry labels to compute the values $n_s(F'),n_t(F')$. For example, for computing $n_s(F')$ we should check how many edges in $F'$ are in the tree path between $r$ to $s$. For this, for each tree edge $e=(u,v)$ in $F'$, we check if it is above $s$ in the tree, which happens if and only if both $u$ and $v$ are ancestors of $s$.
This simple approach requires time exponential in $|F|$ for going over all subsets of $F$, we next show a faster way to check the same condition.

\subsubsection{Faster Decoding Algorithm}

We next show that checking the condition from Corollary \ref{cor_ft_connectivity} boils down to solving a system of linear equations.
First, note that from Lemma \ref{cycle_space_lemma}, w.h.p, a set of edges $F' \subseteq F$ is an induced edge cut iff $\Moplus_{e \in F'} \phi(e) = 0$. Hence, if we want to check if there is a non-empty subset $F' \subseteq F$ that is an induced edge cut it is equivalent to checking if there exists a binary vector $x=(x_1,...,x_f) \neq 0$ such that $\Moplus_{1 \leq i \leq f} x_i \phi(e_i) = 0$, where $\{e_1,...e_f\}$ are the edges of $F$. Or equivalently checking if the vectors $\{\phi(e)\}_{e \in F}$ are linearly dependant. To check the condition from Corollary \ref{cor_ft_connectivity}, we generalize this idea. 

Let $b = O(f + \log{n})$ be the size of the cycle space labels.
Given a triplet $(s,t,F)$, we assign for each edge $e \in F$, a binary vector $\phi'(e)$ of length $b+2$, as follows.
\begin{enumerate}
\item If $e$ is a tree edge which is in the tree path $r-s$ but not in the path $r-t$, then $\phi'(e)=10\phi(e)$.
\item If $e$ is a tree edge which is in the tree path $r-t$ but not in the path $r-s$, then $\phi'(e)=01\phi(e)$.
\item In all other cases, $\phi'(e)=00\phi(e).$ 
\end{enumerate}

We denote by $w_1,w_2$ binary vectors of length $b+2$ such that $w_1=100..0,w_2=010...0$ (all right entries are equal to 0).
We show that the condition from Corollary \ref{cor_ft_connectivity} holds iff there is a binary vector $x=(x_1,...,x_f)$ and $j \in \{1,2\}$ such that $$\Moplus_{1 \leq i \leq f} x_i \phi'(e_i) = w_j.$$ This holds iff there is a solution to at least one of the linear systems $Ax=w_1,Ax=w_2$, where $A$ is a $(b+2) \times f$ matrix that has the vectors $\{\phi'(e)\}_{e \in F}$ as its column vectors, and $x,w_1,w_2$ are column vectors. All operations are modulo 2.

\begin{lemma}
With high probability, the vertices $s$ and $t$ are disconnected by $F$ if an only if there is a binary vector $x=(x_1,...,x_f)$ and $j \in \{1,2\}$ such that $\Moplus_{1 \leq i \leq f} x_i \phi'(e_i)  = w_j.$
\end{lemma}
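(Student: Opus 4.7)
The plan is to interpret the equation $\bigoplus_{i} x_i \phi'(e_i) = w_j$ as packaging two separate tests together: the last $b$ coordinates should encode the cycle-space induced-cut check of Lemma \ref{cycle_space_lemma}, while the leading two coordinates should encode the parity condition of Corollary \ref{cor_ft_connectivity}. Given any $x \in \{0,1\}^f$, write $F' = \{e_i : x_i = 1\} \subseteq F$. By construction of $\phi'$, the last $b$ coordinates of $\bigoplus_i x_i \phi'(e_i)$ are exactly $\bigoplus_{e \in F'} \phi(e)$. Since $b = f + c \log n$, Lemma \ref{cycle_space_lemma} together with a union bound over the $2^f$ possible subsets $F' \subseteq F$ shows that, with high probability, these last $b$ coordinates vanish if and only if $F'$ is an induced edge cut. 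This is the (only) probabilistic ingredient.

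For the leading two coordinates, I would argue combinatorially. Only tree edges in $F'$ can appear on the root paths $\pi(r,s)$ and $\pi(r,t)$, so partition the tree edges of $F'$ into $A$ (on $\pi(r,s)$ only), $B$ (on $\pi(r,t)$ only), $C$ (on both), and $D$ (on neither). Then $n_s(F') = |A| + |C|$ and $n_t(F') = |B| + |C|$, so the parities of $n_s(F')$ and $n_t(F')$ differ if and only if $|A| + |B|$ is odd, i.e., exactly one of $|A|, |B|$ is odd. By the three-case definition of $\phi'$, each edge of $A$ contributes $1$ to the first leading coordinate of $\bigoplus_i x_i \phi'(e_i)$ and $0$ to the second; each edge of $B$ does the reverse; edges in $C \cup D$ and non-tree edges of $F'$ contribute $0$ to both. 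Therefore, in $\mathbb{F}_2$, the first leading coordinate equals $|A| \bmod 2$ and the second equals $|B| \bmod 2$, so the leading two coordinates equal $w_1$ or $w_2$ precisely when exactly one of $n_s(F'), n_t(F')$ is odd.

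Combining, $\bigoplus_i x_i \phi'(e_i) \in \{w_1, w_2\}$ for some nonzero $x$ if and only if there exists $F' \subseteq F$ that is an induced edge cut and for which $n_s(F')$ and $n_t(F')$ have opposite parities; by Corollary \ref{cor_ft_connectivity} this is exactly the condition that $F$ disconnects $s$ from $t$. For the reverse direction, if $s,t$ are disconnected by $F$, take the $F'$ guaranteed by Corollary \ref{cor_ft_connectivity} and let $x$ be its indicator vector; the argument above shows $\bigoplus_i x_i \phi'(e_i) \in \{w_1, w_2\}$. The main bookkeeping obstacle is making sure that edges in $C$ (tree edges of $F'$ appearing on both root paths) and non-tree edges of $F'$ contribute $0$ to the two leading bits, which is precisely why the third case of the definition of $\phi'$ sets the prefix to $00$; I would state this case split carefully up front so that the XOR accounting in the second paragraph goes through cleanly.
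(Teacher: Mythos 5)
Your proposal is correct and follows essentially the same route as the paper: the last $b$ coordinates reduce to the cycle-space test (made reliable for all $2^f$ subsets by the choice $b = f + c\log n$), and the two leading coordinates track the parities of the edges on $\pi(r,s)$ only and on $\pi(r,t)$ only, which the paper calls $n'_s(F')$ and $n'_t(F')$ and you call $|A|$ and $|B|$. Your observation that edges on both root paths (your set $C$) cancel in the parity comparison is exactly the paper's step $n'_s(F') = n_s(F') - y$, $n'_t(F') = n_t(F') - y$, so the two arguments coincide.
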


\begin{proof}
We assume for the proof that the cycle space labels are correct, i.e., a set of edges $F' \subseteq F$ is an induced edge cut iff $\Moplus_{e \in F'} \phi(e) = 0$. This happens w.h.p from Lemma \ref{cycle_space_lemma} and the choice of $b=O(f + \log{n})$.

First we show that if $s$ and $t$ are disconnected by $F$, the condition of the lemma holds.
From Corollary \ref{cor_ft_connectivity}, $s$ and $t$ are disconnected by $F$ iff there is an induced edge cut $F' \subseteq F$, such that one of the values $n_s(F'),n_t(F')$ is even and the other is odd. Denote by $n'_s(F')$ the number of edges from $F'$ in the $r-s$ tree path that are not in the $r-t$ path, and denote by $n'_t(F')$ the number of edges from $F'$ in the $r-t$ tree path that are not in the $r-s$ path. Note that if one of the values $n_s(F'),n_t(F')$ is even and the other is odd, then also one of $n'_s(F'),n'_t(F')$ is even and the other is odd, as if we denote by $y$ the number of edges from $F'$ that are in both $r-s$ and $r-t$, we get that $n'_s(F') = n_s(F') - y, n'_t(F') = n_t(F') - y$. Assume first that $n'_s(F')$ is even and $n'_t(F')$ is odd. Let $x$ be the characteristic vector of $F'$. We show that $\Moplus_{1 \leq i \leq f} x_i \phi'(e_i)  = w_2$. First, as $F'$ is an induced edge cut, we have that $\Moplus_{e \in F'} \phi(e) = 0$. Hence, the $b$ last bits of $\Moplus_{1 \leq i \leq f} x_i \phi'(e_i)$ are equal to 0 as needed. $F'$ has even number of edges that are in the path $r-s$ and not $r-t$, as the labels $\phi'(e)$ of all these edges start in $10$, the XOR of the first 2 bits of these edges sums to $00$.  $F'$ has odd number of edges that are in the path $r-t$ but not $r-s$. The labels of all these edges start in $01$, as there is an odd number of them, the XOR of the first 2 bits of these edges sums to $01$. All other edges have labels that start in $00$, hence the XOR of their first 2 bits sums to $00$. Overall we get that $\Moplus_{1 \leq i \leq f} x_i \phi'(e_i)=\Moplus_{e \in F'} \phi'(e)=010...0=w_2$. The case that $n'_s(F')$ is odd and $n'_t(F')$ is even is symmetric and results in the equation $\Moplus_{1 \leq i \leq f} x_i \phi'(e_i)=100...0=w_1.$

On the other hand, if we have that $\Moplus_{1 \leq i \leq f} x_i \phi'(e_i)=w_j$ for a binary vector $x=(x_1,...,x_f)$ and $j \in \{1,2\}$, we can build from it $F'$ that satisfies the condition in Corollary \ref{cor_ft_connectivity}, as follows. We define $F'$ to be all edges $e_i \in F$ such that $x_i =1$. Since $\Moplus_{1 \leq i \leq f} x_i \phi'(e_i)=\Moplus_{e \in F'} \phi'(e)=w_j$, we have that $\Moplus_{e \in F'} \phi(e) = 0$, hence $F'$ is an induced edge cut. Additionally if $w_j=w_2$, it implies that the XOR of the first 2 bits of labels $\{\phi'(e)\}_{e \in F'}$ are equal to $01$. By the definition of the labels, this can only happen if $n'_s(F')$ is even and $n'_t(F')$ is odd. Similarly, if $w_j=w_1$, then $n'_s(F')$ is odd and $n'_t(F')$ is even. In both cases we get that one of the values $n_s(F'),n_t(F')$ is even and the other is odd, hence $s$ and $t$ are disconnected by $F$ from Corollary \ref{cor_ft_connectivity}. 
\end{proof}

To conclude, the question if $s$ and $t$ are disconnected by $F$ boils down to checking if there is a solution to at least one of the linear systems $Ax=w_1,Ax=w_2$, where $A$ is a $(b+2) \times f$ matrix, and $b=O(f + \log{n})$. Note that we can construct the labels $\phi'(e)$ and hence the matrix $A$ given the labels of $s,t,F$. For this, we need the labels $\phi(e)$ of edges in $F$, and also to distinguish for each edge in $F$ if it is in the $r-s,r-t$ paths in the tree. The latter can be deduced from the ancestry labels of $s,t,F$ and from the bits indicating which edges in $F$ are tree edges. A tree edge $e=(u,v) \in F$ is in the $r-s$ path iff both $u$ and $v$ are ancestors of $s$, this can be checked in $O(1)$ time using the ancestry labels of $u,v,s$. Hence we can build the matrix $A$ in $O(fb)$ time. To check if the linear systems have a solution we can use Gaussian elimination, that takes $O(MN^2)$ time for $M \times N$ matrix, in our case this is $O((f+\log{n})f^2)$. Alternatively, we can use $O(N^{\omega})$ algorithms for $N \times N$ matrices, where $\omega$ is the exponent of matrix multiplication.
For this, we add zero columns to our matrix $A$ to make it a $(b+2) \times (b+2)$ matrix $A'$ and increase the length of $x$ to $b+2$, the new system $A'x=w_i$ has a solution iff the original system $Ax=w_i$ has a solution. The complexity here is $O((b+2)^{\omega})=O((f+\log{n})^{\omega})$.
This gives the following. 
 
\begin{theorem}
There is a randomized $f$-FT connectivity labeling scheme that assigns the edges and vertices of the graph labels of size $O(\log{n})$ bits per vertex and $O(f + \log{n})$ bits per edge. The decoding time of the scheme is $\min\{O((f+\log{n})f^2),O((f+\log{n})^{\omega})\}$. The time complexity for assigning the labels is $\widetilde{O}((m+n)f).$
\end{theorem}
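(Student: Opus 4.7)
The plan is to assemble the theorem from the three pieces already established: the cycle-space labels of Lemma \ref{cycle_space_lemma}, the ancestry labels of Lemma \ref{anc_labels}, and the reduction of FT-connectivity to the pair of linear systems $Ax=w_1$, $Ax=w_2$ proved in the preceding lemma. I would fix the cycle-space label length to $b = f + c\log n$ so that, by Lemma \ref{cycle_space_lemma} and a union bound over the at most $2^f$ subsets $F' \subseteq F$, every cycle-space test relevant to the query $\langle s,t,F\rangle$ is simultaneously correct with probability at least $1 - 1/n^{c-1}$. With this choice, the edge label $(\phi(e),\LCALabel_T(u),\LCALabel_T(v),j)$ has $O(f+\log n)$ bits and the vertex label $\LCALabel_T(v)$ has $O(\log n)$ bits, as claimed.

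For the preprocessing time, I would build an arbitrary spanning tree $T$ in $O(m+n)$ time, invoke the DFS-based ancestry labeling of Lemma \ref{anc_labels} in $O(n)$ time, and finally apply the cycle-space sampling procedure of Lemma \ref{cycle_space_lemma}, whose running time is $O((m+n)b) = \widetilde{O}((m+n)f)$. The cycle-space step dominates, yielding the stated $\widetilde{O}((m+n)f)$ preprocessing bound.

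For the decoder, the main task is to reconstruct the matrix $A$ purely from the labels of $s$, $t$, and the edges of $F$. For each $e \in F$, the bit $j$ stored inside the edge's label declares whether $e$ is a tree edge; when it is, comparing the ancestry labels of its two endpoints with those of $s$ and $t$ via Lemma \ref{anc_labels} decides, in $O(1)$ per edge, whether $e$ lies on the $r$-$s$ or $r$-$t$ tree path, which in turn fixes the two prefix bits of $\phi'(e)$. Since each $\phi'(e)$ has $b+2 = O(f+\log n)$ bits, this step takes $O(f(f+\log n))$ time. I would then decide solvability of $Ax=w_1$ and $Ax=w_2$ over $\mathbb{F}_2$, either by Gaussian elimination on the $(b+2)\times f$ matrix in $O((f+\log n)f^2)$ time, or, after zero-padding to a square $(b+2)\times(b+2)$ matrix, by a fast matrix-multiplication based solver in $O((f+\log n)^{\omega})$ time; taking the minimum gives the claimed decoding bound.

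The main obstacle, namely the equivalence between FT-disconnection of $s,t$ and solvability of these two linear systems, is already handled by the preceding lemma; what remains to verify here is essentially bookkeeping. Concretely, I need to check that the labels carry exactly the local data required to reconstruct $A$ from $\{s,t\} \cup F$, that the parameter choice $b = f + c\log n$ is large enough to absorb the union bound over all subsets of $F$, and that the matrix-over-$\mathbb{F}_2$ solver correctly handles empty/trivial columns arising from the zero-padding step. All three are routine given the ingredients already in place.
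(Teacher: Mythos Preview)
Your proposal is correct and follows essentially the same approach as the paper: assemble the labels from cycle-space and ancestry pieces with $b = f + c\log n$, reconstruct the matrix $A$ from the labels of $s,t,F$ using the ancestry test for tree-path membership, and decide solvability of $Ax=w_1$, $Ax=w_2$ over $\mathbb{F}_2$ via Gaussian elimination or a zero-padded square system. The bookkeeping you outline (label sizes, preprocessing time dominated by the $O((m+n)b)$ cycle-space step, and the two decoding-time alternatives) matches the paper's analysis almost verbatim.
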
  

\remove{
\begin{theorem}
We can assign the edges and vertices of the graph labels of size $O(\log{n})$ bits per vertex and $O(f + \log{n})$ bits per edge, such that given the labels of $(s,t,F)$ we can check if $s$ and $t$ are disconnected by $F$ in $O(f^3 \log{n})$ \mtodo{check the complexity} time, w.h.p.
\end{theorem} 
} 
\subsection{Connectivity Labels Based on Graph Sketches}\label{sec:ftconn-sketch}
In this section, we show the following:
\begin{theorem}
For every undirected $n$-vertex graph $G=(V,E)$, a positive integer $f$, there is a randomized $f$-FT connectivity labels $\FTConnLabel_{G}: V \cup E \to \{0,1\}^{\ell}$ of length $\ell=O(\log^3 n)$ bits. The decoding time of the scheme is $\widetilde{O}(f)$, and the computation time for assigning the labels is $\widetilde{O}(m+n)$.
\end{theorem}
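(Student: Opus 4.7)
My plan is to build on the Boruvka-simulation approach sketched in Section~\ref{sec:techniques}, making precise the sketch machinery and label contents.

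I would first fix a rooted spanning tree $T$ of $G$ and compute the $O(\log n)$-bit ancestry label $\LCALabel_T(v)$ of Lemma~\ref{anc_labels} for every vertex, together with a unique $O(\log n)$-bit identifier for every edge (say, using the special identifiers of~\cite{GhaffariP16} so that the XOR of two distinct identifiers is never itself a legal identifier). On top of this I would attach $L = \Theta(\log n)$ independent copies of an $L_0$-sampling linear sketch of vertex neighborhoods à~la Ahn--Guha--McGregor, where each copy uses $O(\log^2 n)$ bits (one XOR-of-IDs accumulator per sampling level, $O(\log n)$ levels). The $L$ copies will be consumed one per Boruvka phase, so their total contribution is $O(\log^3 n)$ bits, which dominates the label length. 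Crucially, because $L_0$-sketches are linear functions of the incidence vector of a vertex, the sketch of a subset $S \subseteq V$ is simply the XOR of the per-vertex sketches, and in that XOR every edge with both endpoints in $S$ cancels out.

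The labels themselves are then defined as follows. For a vertex $v$ I store $\LCALabel_T(v)$ together with the $L$ per-vertex sketches of $v$. For a tree edge $e=(u,v)$ with $v$ the lower endpoint, I store a tree-edge flag, both ancestry labels $\LCALabel_T(u),\LCALabel_T(v)$, the edge identifier, and the $L$ \emph{subtree sketches} $\Sketch^{(i)}(T_v)$ obtained by summing the per-vertex sketches over the subtree $T_v$ rooted at $v$; these can be computed in $\widetilde O(m+n)$ time by a single bottom-up pass over $T$. For a non-tree edge I store just the flag, identifier, and ancestry labels of the two endpoints. To enable the XOR-cancellation of $F$ from the sketches, the sampling at each level must be implemented by pairwise-independent hash functions whose short seeds are broadcast into every label, so that given the identifier of $e\in F$ the decoder can evaluate every sketch coordinate on $e$ and subtract it.

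The decoder, given $L(s),L(t)$ and $\{L(e):e\in F\}$, first separates $F$ into tree and non-tree faults, $F_T$ and $F_N$. Removing $F_T$ splits $T$ into at most $|F|+1$ components; using only the ancestry labels of the endpoints of edges in $F_T$ (and of $s,t$), one can, for any query vertex, identify its component in $\widetilde O(f)$ time by a simple LCA/ancestor test against the set of cut edges. The sketch of each such component $C$ is recovered by a telescoping identity: $\Sketch^{(i)}(C)$ is obtained by taking the subtree sketch of the highest vertex of $C$ and XOR-ing out the subtree sketches at the tree edges of $F_T$ hanging immediately below $C$; at the root component one uses the global $\Sketch^{(i)}(V)$, which can be stored as part of a universal header replicated in every label. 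Finally, from every component sketch we XOR-cancel the contributions of all edges in $F$ (using the stored hash seeds and the identifiers in the edge labels of $F$), obtaining genuine sketches of the components in $G\setminus F$. Now I run Boruvka on the at most $|F|+1$ components: in phase $i$, use the $i$-th independent sketch copy of each current super-component (computed by XOR-ing the sketches of its constituents) to recover one outgoing edge of $G\setminus F$ via $L_0$-sampling, merge components along these edges, and repeat. After $O(\log n)$ phases the maximal spanning forest of $G\setminus F$ on the components is found w.h.p., and we answer YES iff $s$ and $t$ lie in the same final component.

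The main obstacle — and the place where the construction is most delicate — is guaranteeing correctness of the whole Boruvka simulation from a \emph{non-adaptive} label, i.e.\ with independence between the random choices in the sketch and the queried triple $\langle s,t,F\rangle$. One must show that (i) the $L_0$-sampling inside each sketch returns exactly one outgoing edge of the current super-component with constant probability per copy, so that $\Theta(\log n)$ fresh copies suffice w.h.p.; (ii) the use of the special edge identifiers of~\cite{GhaffariP16} lets the decoder recognize, in $O(\log n)$ time, whether a recovered sketch value is a genuine single-edge identifier (and hence a true outgoing edge) rather than a XOR of several, a check that must also be robust across all up to $|F|+1$ super-components processed in each of $O(\log n)$ phases; and (iii) the cancellation of $F$ from the sketches is well-defined because the hash seeds are encoded in the labels and applied consistently to the identifiers of the faulty edges. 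A union bound over the $O(f\log n)$ sketch-extraction events controls the error, keeping the scheme correct with high probability per fixed query, and the decoding cost per phase is $\widetilde O(f)$, summing to the claimed $\widetilde O(f)$ overall.
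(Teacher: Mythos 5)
Your construction is essentially identical to the paper's: spanning tree plus ancestry labels, $\epsilon$-biased edge identifiers from \cite{GhaffariP16}, $\Theta(\log n)$ independent pairwise-independent $L_0$-sketch units stored as subtree sketches on tree edges (with seeds and the global sketch replicated), telescoping to recover component sketches of $T\setminus F$, XOR-cancellation of $F$, and one fresh sketch unit per Boruvka phase; the correctness concerns you isolate (non-adaptivity, single-edge recognition, consistent cancellation) are exactly the ones the paper addresses. The only substantive shortfall is the $\widetilde O(f)$ decoding bound: identifying each vertex's component by ``a simple LCA/ancestor test against the set of cut edges'' costs $O(f)$ per lookup, and since you must do this for all endpoints of $F_T$ (to get the component tree for the telescoping step) and for the endpoint of every outgoing edge recovered across the $O(\log n)$ Boruvka phases, your decoder runs in $\widetilde O(f^2)$ time. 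The paper closes this by sorting the DFS intervals of the $F_T$ endpoints once in $O(f\log f)$ time, after which the component tree is read off in a single linear scan and each subsequent component lookup is an $O(\log f)$ binary search.
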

In Section \ref{sec:label-alg}, we present the labeling algorithm which assigns labels based on the notion of graph sketches. In Section \ref{sec:dec-alg} we present the decoding algorithm that given the label information determines if $s$ and $t$ are connected in $G \setminus F$. When the graph $G$ is clear from the context, we may omit it and simply write $\FTConnLabel$.

\subsubsection{The Labeling Algorithm}\label{sec:label-alg}
Given a connected graph $G$, let $T$ be an arbitrary rooted spanning tree in $G$ that is used throughout this section. In our future applications of this labeling scheme (e.g., routing), both the graph $G$ and the tree $T \subseteq G$ will be given as input to the labeling algorithm. In the latter case, we denote the output labels by $\FTConnLabel_{G,T}$. Throughout, all vertices have unique ids $\ID(v)$ between $\{1,\ldots,n \}$.  

\paragraph{Extended Edge Identifiers.} In our algorithm it is important to distinguish between an identifier of a single edge to the bitwise XOR of several edges. For this purpose, we define for each edge $e$ an extended edge identifier $\EID_T(e)$ that allows distinguishing between these cases, and serves as the identifier of the edge.
The extended edge identifier $\EID_T(e)$ consists of a (randomized) unique distinguishing identifier $\UID(e)$, as well as additional tree related information that facilitates the decoding procedure. The computation of $\UID(e)$ is based on the notion of $\epsilon$-\emph{bias} sets \cite{naor1993small}. The construction is randomized and guarantees that, w.h.p., the XOR of the $\UID$ part of each given subset of edges $S \subseteq E$, for $|S|\geq 2$, is not a legal $\UID$ identifier of any edge.
Let $\XOR(S)$ be the bitwise XOR of the extended identifiers of edges in $S$, i.e., $\XOR(S)=\oplus_{e \in S} \EID_T(e)$. In addition, let $\XOR_U(S)=\oplus_{e \in S} \UID(e)$. Missing proofs are deferred to Appendix \ref{sec:miss-proof}.

\begin{lemma}[Modification of Lemma 2.4 in \cite{GhaffariP16}]
\label{cl:epsbias}
There is an algorithm that creates a collection $\mathcal{I}=\{\UID(e_1), \ldots, \UID(e_{M})\}$ of $M=\binom{n}{2}$ random identifiers for all possible edges $(u,v)$, each of $O(\log n)$-bits using a seed $\mathcal{S}_{ID}$ of $O(\log^2 n)$ bits. These identifiers are such that for each subset $E' \subseteq E$, where $|E'|\neq 1$, we have $\Pr[\XOR_U(E') \in \mathcal{I}] \leq 1/n^{10}$. In addition, given the identifiers $\ID(u), \ID(v)$ of the edge $e=(u,v)$ endpoints, and the seed $\mathcal{S}_{ID}$, one can determine $\UID(e)$ in $\widetilde{O}(1)$ time.
\end{lemma}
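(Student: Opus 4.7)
The plan is to construct the identifiers using an $\epsilon$-biased sample space in the spirit of \cite{GhaffariP16,naor1993small}, but indexing by pairs of vertex identifiers rather than by single vertices. I would fix $L = c \log n$ for a sufficiently large constant $c$, and invoke Naor--Naor to sample a string $X \in \{0,1\}^{ML}$ from an $\epsilon$-biased distribution with $\epsilon = n^{-c'}$ for a suitable constant $c'$; this requires a seed $\mathcal{S}_{ID}$ of length $O(\log(ML/\epsilon)) = O(\log n)$, comfortably within the claimed $O(\log^2 n)$ bound. Since each possible edge $(u,v)$ has a natural index $i_{u,v} \in [M]$ computable directly from $\ID(u),\ID(v)$ (e.g., lexicographic rank of the pair), one defines $\UID((u,v))$ to be the $i_{u,v}$-th block of $L$ consecutive bits of $X$. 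Standard explicit constructions of $\epsilon$-biased sample spaces allow computing the $j$-th bit of $X$ in $\poly(\log n)$ time from $\mathcal{S}_{ID}$, so $\UID((u,v))$ is recoverable in $\widetilde{O}(1)$ time from $\ID(u), \ID(v)$ and $\mathcal{S}_{ID}$.

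For the probability bound, fix $E' \subseteq E$ with $|E'| \neq 1$. The event $\XOR_U(E') \in \mathcal{I}$ is equivalent to the existence of some candidate edge $e^*$ with $\UID(e^*) \oplus \XOR_U(E') = 0$, which rewrites as $\XOR_U(E' \Delta \{e^*\}) = 0$ since XOR tracks symmetric difference. The key observation is that the hypothesis $|E'| \neq 1$ forces $E' \Delta \{e^*\}$ to be non-empty for every candidate $e^*$: if $E' = \emptyset$, the set is $\{e^*\}$; if $|E'| \geq 2$ and $e^* \in E'$, its size is $|E'|-1 \geq 1$; and if $|E'| \geq 2$ and $e^* \notin E'$, its size is $|E'| + 1 \geq 3$. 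This non-emptiness is exactly what enables an $\epsilon$-bias argument to give a nontrivial bound.

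For any fixed non-empty subset $S$ of edge indices, $\XOR_U(S) \in \{0,1\}^L$ is a fixed $\mathbb{F}_2$-linear function of the entries of $X$. Invoking Vazirani's standard fact that the image of an $\epsilon$-biased distribution under a linear map into $\{0,1\}^L$ is $(2^{L/2}\epsilon)$-close to uniform in statistical distance, one obtains $\Pr[\XOR_U(S) = 0] \leq 2^{-L} + 2^{L/2}\epsilon$. Taking a union bound over the $M = \binom{n}{2}$ candidate edges $e^*$ yields $\Pr[\XOR_U(E') \in \mathcal{I}] \leq M \cdot (2^{-L} + 2^{L/2}\epsilon)$. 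Choosing, for example, $L = 14 \log n$ and $\epsilon = n^{-20}$ makes this at most $1/n^{10}$, and the seed length $O(\log(ML/\epsilon))$ remains $O(\log n)$.

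The main obstacle I anticipate is precisely the coupling between the random target $\XOR_U(E')$ and the random set $\mathcal{I}$: both depend on the same seed $\mathcal{S}_{ID}$, so one cannot treat them as independent, and any argument must route through the fact that, for every fixed $e^*$, the quantity $\UID(e^*)\oplus\XOR_U(E')$ is itself a linear image of the same $\epsilon$-biased string $X$. This is also where $\epsilon$-bias is essential rather than mere $k$-wise independence: the size $|E'|$ can be as large as $|E|$, so no fixed-$k$ argument applies. A secondary subtlety is ensuring $\widetilde{O}(1)$-time local evaluation of a single $\UID(e)$; this is handled by choosing an explicit $\epsilon$-biased construction (e.g.\ based on Justesen codes or algebraic-geometry codes, as in \cite{naor1993small}) whose $j$-th output bit is computable in $\poly(\log n)$ time, and the paper's looser $O(\log^2 n)$ seed bound easily absorbs any constant-factor overhead of the concrete construction.
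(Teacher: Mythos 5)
Your proof is correct and follows the same $\epsilon$-biased-set approach that the paper relies on: the paper's own ``proof'' simply defers the probability bound to Lemma 2.4 of \cite{GhaffariP16} and only adds the observation (via Theorem 3.1 of \cite{naor1993small}) that each bit of $\UID(e)$ is computable from the seed in $O(\log n)$ time. Your reduction of the event $\XOR_U(E')\in\mathcal{I}$ to $\XOR_U(E'\Delta\{e^*\})=0$ for a nonempty symmetric difference, followed by the full-rank linear-map argument and a union bound over the $M$ candidates, is exactly the argument hidden inside that citation, so there is nothing to object to.
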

\def\APPENDUNIQUEID{
\begin{proof}[Proof of Lemma \ref{cl:epsbias}]
The lemma is proved in \cite{GhaffariP16}, the only part that is not discussed there is the time to determine $\UID(e)$ that follows from \cite{naor1993small}. 
By Theorem 3.1 of \cite{naor1993small}, given the seed $\mathcal{S}_{ID}$ and the edge identifier $e_j=(\ID(u), \ID(v))$, determining the $i^{th}$ bit of $\UID(e_{j})$ can be done in $O(\log n)$ time. Thus, determining all $O(\log n)$ bits, takes $O(\log^2 n)$ time. 
\end{proof}
}
For every vertex $v \in G$, let $\LCALabel_T(v)$ be the ancestor label of $v$ computed for the given tree $T$ using Lemma \ref{anc_labels}. The extended identifier $\EID_T(e)$ is given by
\begin{equation}\label{eq:extend-ID}
\EID_T(e)=[\UID(e), \ID(u), \ID(v), \LCALabel_T(u), \LCALabel_T(v)]~.
\end{equation}
The identifiers of $\ID(u), \ID(v)$ are used in order to verify the validity of the unique identifier $\UID(e)$.  
When the tree $T$ is clear from the context, we might omit it and simply write $\EID(e)$. As we will see, the labeling scheme will store the seed $\mathcal{S}_{ID}$ as part of the labels of the tree edges. 


\paragraph{Fault-Tolerant Labels via Graph Sketches.} 
Graph sketches are a tool to identify outgoing edges. We start by providing an intuition for them. Say that $S$ is a connected component, and that there are $2^j$ edges outgoing from $S$. If we sample all edges in the graph with probability $1/2^j$, there is a constant probability that exactly one outgoing edge from $S$ is sampled, and our goal is to find it using local information stored at the vertices of $S$. This information is the \emph{sketch}. 
The sketch of each vertex stores the bitwise XOR of sampled edges adjacent to it. Now looking at the XOR of all the sketches of vertices of $S$ allows to detect an outgoing edge. This holds as any sampled edge that has both endpoints in $S$ gets cancelled out, and we are left with the XOR of sampled edges outgoing from $S$. If there is exactly one outgoing edge, we find it. To increase the success probability we can repeat the process $O(\log{n})$ times. We define sets of vertices $E_{i,j}$, where for $i \in \{1, \ldots, c \log n\}$, the set $E_{i,j}$ is obtained by sampling each edge with probability $2^{-j}$. Since we repeat the process $O(\log{n})$ times for each $j$, then w.h.p we can use the sketches to identify outgoing edge from any component. To use this approach in our context, it is crucial to be able to simulate the sampling process using a small random seed. To do this, we follow \cite{DuanConnectivityArxiv16,DuanConnectivitySODA17} and use pairwise independent hash functions to decide whether to include edges in sampled sets.
We choose $L=c\log n$ 
pairwise independent hash functions $h_1, \ldots, h_{L}:\{0,1\}^{\Theta(\log n)} \to \{0, \ldots, 2^{\log m}-1\}$, 
and for each $i \in \{1, \ldots, L\}$ and $j \in [0,\log m]$, define the edge set 
$$E_{i,j} =\{ e \in E ~\mid~ h_i(e) \in [0,2^{\log m-j})\}~.$$ 
Each of these hash functions can be defined using a random seed of logarithmic length \cite{TCS-010}. Thus, a 
random seed $\mathcal{S}_h$ of length $O(L \log n)$ can be used to determine the collection of all these $L$ functions. As observed in \cite{DuanConnectivityArxiv16,GibbKKT15}, pairwise independence is sufficient to guarantee that for any set $E' \subset E$ and any $i$, there exists an index $j$, such that with constant probability $\XOR(E' \cap E_{i,j})$ is the name (extended identifier) of one edge in $E'$, for a proof see Lemma 5.2 in  \cite{GibbKKT15}.
\begin{lemma}\label{lem:hitting-pairwise}
For any edge set $E'$ and any $i$, with constant probability there exists a $j$ satisfying that $|E' \cap E_{i,j}|=1$.
\end{lemma}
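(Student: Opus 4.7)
The plan is to choose the scale $j$ so that $X := |E' \cap E_{i,j}|$ has expectation of order one, and then argue via a Bonferroni / second-moment bound that $X$ actually equals $1$ with constant probability. Write $s := |E'|$, which we may assume to be at least $1$. The trivial case $s=1$ is handled by taking $j=0$, since then $E_{i,0}=E$ and $|E' \cap E_{i,0}|=1$ deterministically. For $s \geq 2$, let $j^*$ be the unique integer with $2s \leq 2^{j^*} < 4s$; since $s \leq m$ this $j^*$ lies in $\{1,\ldots,\log m\}$. For each edge $e$ the event $e \in E_{i,j^*}$ coincides with $h_i(e) \in [0,\, 2^{\log m - j^*})$, which by the uniformity of $h_i$ over its range has probability $2^{-j^*}$. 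Consequently
\[
\mu \;:=\; \mathbb{E}[X] \;=\; s \cdot 2^{-j^*} \;\in\; (1/4,\; 1/2].
\]

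Next I would decompose $X = \sum_{e \in E'} Y_e$ with $Y_e := \mathbf{1}[e \in E_{i,j^*}]$, and exploit that pairwise independence of $h_i$ makes $\{Y_e\}_{e \in E'}$ a pairwise independent family; in particular $\Pr[Y_e = Y_{e'} = 1] = 2^{-2j^*}$ for distinct $e,e' \in E'$. A Bonferroni lower bound on $\Pr[\bigcup_{e \in E'} \{Y_e=1\}]$ then gives
\[
\Pr[X \geq 1] \;\geq\; \sum_{e \in E'} \Pr[Y_e=1] \;-\; \sum_{\{e,e'\} \subseteq E'} \Pr[Y_e = Y_{e'} = 1] \;=\; \mu - \binom{s}{2}\, 2^{-2 j^*} \;\geq\; \mu - \tfrac{\mu^2}{2},
\]
while the union bound yields $\Pr[X \geq 2] \leq \binom{s}{2}\, 2^{-2 j^*} \leq \mu^2/2$. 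Subtracting these two estimates gives $\Pr[X = 1] \geq \mu - \mu^2 = \mu(1-\mu)$. Since $\mu \in (1/4,\, 1/2]$, the map $\mu \mapsto \mu(1-\mu)$ is bounded below by $3/16$ on this interval, so $\Pr[X=1] > 3/16 = \Omega(1)$, which proves the lemma.

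There is no real obstacle here: this is the textbook Bonferroni / second-moment argument for pairwise independent indicators, and the entire proof uses only that $h_i$ is pairwise independent on distinct edges, which matches the seed $\mathcal{S}_h$ allocated in the labeling algorithm. The only bookkeeping is to verify the existence of a scale $j^* \in \{0,1,\ldots,\log m\}$ hitting the target mean and to dispense with the tiny case $|E'|=1$; the particular constant $3/16$ is unimportant and can be freely optimized by tuning the range of $\mu$.
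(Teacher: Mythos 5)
Your argument is the standard Bonferroni/second-moment proof for pairwise independent indicators, and it is essentially the right one: the paper itself gives no proof of this lemma, deferring entirely to Lemma 5.2 of the cited reference, which runs the same computation, so your write-up usefully makes the argument self-contained. The decomposition into $\Pr[X=1]\ge \Pr[X\ge 1]-\Pr[X\ge 2]\ge \mu-\mu^2$ is correct, as is the observation that only pairwise independence of $h_i$ is used.

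There is, however, one step that fails as written: the claim that $s\le m$ forces $j^*\in\{1,\ldots,\log m\}$. Your $j^*$ satisfies $2^{j^*}\ge 2s$, so when $s> 2^{\log m-1}$ (e.g.\ $E'=E$, or any cut containing more than half the edges, which can occur since $E'$ is instantiated as $E(S,V\setminus S)$) you need $j^*=\log m+1$, which is outside the range of levels the paper actually defines ($j\in[0,\log m]$, with minimum sampling rate $2^{-\log m}$). In that regime every available level has $\mu=s2^{-j}>1/2$, and the bound $\mu(1-\mu)$ degenerates as $\mu\to 1$; indeed, with only pairwise independence one cannot in general conclude $\Pr[X=1]=\Omega(1)$ when $\mu=1$ (a pairwise independent family with $X\in\{0,2\}$ is consistent with all the moment constraints you use). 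The fix is trivial but should be stated: extend the sampling levels by two, i.e.\ take $j\in[0,\log m+2]$ so that rates go down to $\Theta(1/m)\le 1/(4|E'|)$, which is how the cited source arranges matters. This is arguably an off-by-one in the paper's own definition of $E_{i,j}$ as much as in your proof, but your proof is the place where the missing levels are actually needed, so the range must be adjusted (or the large-$|E'|$ case argued separately) for the argument to go through.
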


We also need to be able to tell that a bit string of $\XOR(E' \cap E_{i,j})$ is a legal edge ID or not. Here we exploit the extended ids. See Appendix \ref{sec:miss-proof} for a proof.
\begin{lemma} \label{lemma_unique}
Given the seed $\mathcal{S}_{ID}$, one can determine in $\widetilde{O}(1)$ time if $\XOR(E' \cap E_{i,j})$ corresponds to a single edge ID in $G$ or not, w.h.p.
\end{lemma}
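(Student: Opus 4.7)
The plan is to exploit the structured form of the extended identifier defined in Equation~\eqref{eq:extend-ID} together with the $\epsilon$-bias guarantee of Lemma~\ref{cl:epsbias}. Write $Y = \XOR(E' \cap E_{i,j})$. By construction, $Y$ is the bitwise XOR of strings of the shape $[\UID(e), \ID(u), \ID(v), \LCALabel_T(u), \LCALabel_T(v)]$, and XOR respects the field boundaries. So I would first parse $Y$ into five putative components $(U, I_1, I_2, A_1, A_2)$ of the appropriate bit lengths; if $Y$ is the all-zero string, immediately output ``not a single edge'' to handle the empty-intersection case.

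Next I would use the seed $\mathcal{S}_{ID}$ to recompute, via Lemma~\ref{cl:epsbias}, the unique identifier $U^\star := \UID((u^\star,v^\star))$ that the scheme would have produced for the pair of endpoints $(u^\star,v^\star)$ whose IDs are $I_1$ and $I_2$. By Lemma~\ref{cl:epsbias}, this recomputation takes $\widetilde{O}(1)$ time. I then declare $Y$ to be a legal single-edge identifier if and only if $U = U^\star$ (and, for safety, if $I_1 \neq I_2$ so that the pair is a valid edge slot).

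For correctness: if $|E' \cap E_{i,j}| = 1$, then $Y = \EID_T(e)$ for the unique edge $e$ in the intersection, so $I_1, I_2$ are precisely the IDs of its endpoints and the recomputed UID matches $U$ exactly, making the test accept. Conversely, if $|E' \cap E_{i,j}| \geq 2$, then the UID part of $Y$ equals $\XOR_U(E' \cap E_{i,j})$, which by Lemma~\ref{cl:epsbias} lies outside $\mathcal{I}$ except with probability at most $1/n^{10}$. Since the string $U^\star$ computed from the parsed $I_1, I_2$ is itself an element of $\mathcal{I}$, having $U = U^\star$ would force $\XOR_U(E' \cap E_{i,j}) \in \mathcal{I}$, a w.h.p.\ impossible event. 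Hence the test rejects w.h.p.

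The main potential obstacle is the subtlety that $U^\star$ is defined for \emph{every} pair of vertex IDs, not only for edges of $G$, so one might worry that a spurious XOR of many edges happens to parse into a consistent UID/ID triple for a non-edge. The argument above shows that this is exactly the low-probability event covered by Lemma~\ref{cl:epsbias}: any consistency between the UID part and the recomputed UID from the parsed IDs would witness a UID of XOR of $\ge 2$ edges lying in $\mathcal{I}$. A union bound over the $O(\log^2 n)$ indices $(i,j)$ and the polynomially many queries keeps the overall error probability at $1/\poly(n)$, giving the claimed $\widetilde{O}(1)$ decoder.
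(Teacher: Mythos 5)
Your proposal is correct and follows essentially the same route as the paper's proof: parse the XOR into its fields, verify that the endpoint-ID fields are legal, recompute the UID from the seed via Lemma~\ref{cl:epsbias}, and compare, with the $\epsilon$-bias property ruling out a spurious match for $|E' \cap E_{i,j}| \geq 2$ w.h.p. The only cosmetic difference is your explicit handling of the all-zero/empty case, which the paper's argument already covers since Lemma~\ref{cl:epsbias} applies to all subsets of size $\neq 1$, including the empty set.
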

\def\APPENDLEMMUNIQUE{
\begin{proof}[Proof of Lemma \ref{lemma_unique}]
Let $X=\XOR(E' \cap E_{i,j})$. Letting $E''=E' \cap E_{i,j}$, then $X$ can be written as the concatenation of $\XOR_1(E'')$ and $\XOR_2(E'')$, where $\XOR_1(E'')=\XOR_U(E'')$ is the bit-wise XOR of the unique identifiers $\UID(e)$ for $e \in E''$ and $\XOR_2(E'')$ is the bit-wise XOR of the remaining information in the extended identifiers of $E''$.  We now show how using the seed and $\XOR_2(E'')$, one can test the validity of $\XOR_1(E'')$.
The algorithm detects the case that $|E''| \geq 2$ as follows. First, in the case that $E''$ is a single edge, $\XOR_2(E'')$ should contain legal ids $\ID(u),\ID(v)$. If this is not the case, it follows that $|E''| \neq 1$. If $\XOR_2(E'')$ contains legal ids $\ID(u),\ID(v)$, we use them and the seed $\mathcal{S}_{ID}$ to determine $\UID(e)$ for $e = (u,v)$, and we check if $\XOR_1(E'')=\ID_1(e)$. We have two options, either $E'' = \{e\}$ is the single edge $e$, in which case $\XOR_U(E'')=\UID(e) \in \mathcal{I}$, and the verification succeeds. Otherwise $|E''| \geq 2$, in which case, from Lemma \ref{cl:epsbias}, $\Pr[\XOR_U(E'') \in \mathcal{I}] \leq 1/n^{10}$, hence w.h.p $\XOR_U(E'') \neq \UID(e) \in \mathcal{I}$ and we identify that $|E''| \geq 2$.
\end{proof}
}

For each vertex $v$ and indices $i,j$, let $E_{i,j}(v)$ be the edges incident to $v$ in $E_{i,j}$. 
The $i^{th}$ \emph{basic sketch unit} of each vertex $v$ is then given by:
\begin{equation}
\label{eq:vsketch}
\Sketch_{G,i}(v)=[\XOR(E_{i,0}(v)),\ldots,\XOR(E_{i,\log m}(v))].
\end{equation}
The sketch of each vertex $v$ is defined by a concatenation of $L=\Theta(\log n)$ basic sketch units: 
$$\Sketch_G(v)=[\Sketch_{G,1}(v),\Sketch_{G,2}(v), \ldots\Sketch_{G,L}(v)]~.$$ 
For every subset of vertices $S$, let 
$\Sketch_G(S)=\oplus_{v \in S}\Sketch_G(v).$ When the graph $G$ is clear from the context, we may omit it and write $\Sketch_{i}(v)$ and $\Sketch(v)$. 

We are now ready to define the fault-tolerant connectivity labels of vertices and edges. 
The label of each vertex $u$ is given by:
\begin{equation}\label{eq:conn-vertex}
\FTConnLabel_{G,T}(u)=\langle \LCALabel_T(u), \ID(u) \rangle~,
\end{equation}
where $\LCALabel_T(u)$ is the ancestry label of $u$ with respect to the tree $T$. 
For every $u \in V(T)$, let $T_u$ be the subtree rooted at $u$.  The label $\FTConnLabel_{G,T}(e)$ of each \emph{edge} $e=(u,v)$ is given by:
\begin{equation*}
    \FTConnLabel_{G,T}(e)=
    \begin{cases}
      \langle \EID_T(e), \Sketch(V(T_u)), \Sketch(V(T_v)), \Sketch(V), \mathcal{S}_{ID}, \mathcal{S}_h\rangle ,& \mbox{~for~} e \in T \\
     \langle \EID_T(e) \rangle,& \mbox{~Otherwise}.
    \end{cases}
\end{equation*}

We complete this subsection by bounding the label size and computation time of the labeling algorithm. For proofs see Appendix \ref{sec:miss-proof}. 
\begin{claim}\label{cl:label-length}
The label length is $O(\log^3 n)$ bits.
\end{claim}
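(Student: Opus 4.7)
The plan is to bound the size of each component appearing in the label definitions and observe that the dominant contribution comes from the graph sketches stored with tree edges. I would proceed by a straightforward accounting of bits.

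First, I would handle the easy pieces. By Lemma \ref{anc_labels}, every ancestry label $\LCALabel_T(u)$ takes $O(\log n)$ bits, and each vertex identifier $\ID(u)$ lies in $\{1,\ldots,n\}$, also $O(\log n)$ bits. So a vertex label has size $O(\log n)$. For an edge $e=(u,v)$, the extended identifier $\EID_T(e)$ defined in \eqref{eq:extend-ID} is the concatenation of $\UID(e)$ (which is $O(\log n)$ bits by Lemma \ref{cl:epsbias}), the two endpoint identifiers, and the two ancestry labels, giving $O(\log n)$ bits in total. Hence a non-tree edge label has size $O(\log n)$.

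Next, I would bound the seeds and then the sketches stored at tree edges. The seed $\mathcal{S}_{ID}$ from Lemma \ref{cl:epsbias} has $O(\log^2 n)$ bits, and the hash-function seed $\mathcal{S}_h$ determining the $L=\Theta(\log n)$ pairwise independent hash functions $h_1,\ldots,h_L$ takes $O(L\log n)=O(\log^2 n)$ bits. Now for the sketches: each entry $\XOR(E_{i,j}(v))$ of a basic sketch unit is a bitwise XOR of extended identifiers, so it has the same bit-length as one extended identifier, namely $O(\log n)$ bits. A basic sketch unit $\Sketch_{G,i}(v)$ in \eqref{eq:vsketch} concatenates $\log m + 1 = O(\log n)$ such entries, so it has size $O(\log^2 n)$. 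The full sketch $\Sketch_G(v)$ is the concatenation of $L=\Theta(\log n)$ basic units, giving $O(\log^3 n)$ bits. Taking the XOR over any subset $S \subseteq V$ preserves this length, so $\Sketch(V(T_u))$, $\Sketch(V(T_v))$, and $\Sketch(V)$ are each $O(\log^3 n)$ bits.

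Summing over the components of a tree-edge label, we get $O(\log n) + 3 \cdot O(\log^3 n) + O(\log^2 n) = O(\log^3 n)$ bits, which dominates the non-tree edge and vertex label sizes. I do not foresee any real obstacle here; the only subtlety is making sure that the XOR of many $O(\log^3 n)$-bit sketches does not blow up the length (it does not, since bitwise XOR preserves bit-length), and that we count the seeds $\mathcal{S}_{ID},\mathcal{S}_h$ correctly as $O(\log^2 n)$ each. This yields the stated bound of $O(\log^3 n)$ bits on the maximum label size.
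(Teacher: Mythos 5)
Your accounting is correct and matches the paper's argument: the dominant term is the sketch information, where each basic sketch unit is $O(\log^2 n)$ bits (being $O(\log n)$ entries of $O(\log n)$-bit XORs of extended identifiers) and the full sketch concatenates $L=\Theta(\log n)$ such units, giving $O(\log^3 n)$ bits; all other pieces (extended identifiers, ancestry labels, and the two seeds at $O(\log^2 n)$ bits each) are lower order. The paper's own proof is just a terser version of the same calculation.
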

\def\APPENDLABELCONSISE{
\begin{proof}[Proof of Claim \ref{cl:label-length}]
The label size is dominated by the sketching information $\Sketch(V(T_u))$, which is made of a concatenation of the bitwise XOR of $O(\log n)$ basic sketch units $\Sketch_i(u)$. By Eq. (\ref{eq:vsketch}), each unit has $O(\log^2 n)$ bits, and thus overall, the label has $O(\log^3 n)$ bits.
\end{proof}
}

We show that assigning the labels takes  $\widetilde{O}(m+n)$ time.
\begin{claim}\label{cl:time-conn-labelsketch}
The time complexity of the labeling algorithm is $\widetilde{O}(m+n).$
\end{claim}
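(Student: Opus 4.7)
The plan is to account for each component of the labeling algorithm separately and verify that the totals fit in $\widetilde{O}(m+n)$ time. First, build the spanning tree $T$ in $O(m+n)$ time and compute the ancestry labels $\LCALabel_T(v)$ for all $v \in V$ in $O(n)$ time by Lemma \ref{anc_labels}. Sample the two seeds $\mathcal{S}_{ID}$ and $\mathcal{S}_h$, each of $O(\log^2 n)$ bits, in $\widetilde{O}(1)$ time. Then, using Lemma \ref{cl:epsbias}, compute each $\UID(e)$ from $\mathcal{S}_{ID}$ and the endpoint identifiers in $\widetilde{O}(1)$ time, and assemble $\EID_T(e)$ per Eq.~(\ref{eq:extend-ID}). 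This stage takes $\widetilde{O}(m+n)$ time in total.

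The dominant step is computing the vertex sketches $\Sketch(v)$. For each edge $e$ and each hash function index $i \in \{1,\ldots,L\}$ with $L=O(\log n)$, the value $h_i(e)$ is evaluated in $\widetilde{O}(1)$ time from $\mathcal{S}_h$. Crucially, the set of $j \in \{0,\ldots,\log m\}$ for which $e \in E_{i,j}$ forms a prefix that can be read off directly from $h_i(e)$ (since the thresholds $2^{\log m - j}$ decrease geometrically). So for each $(e,i)$ we XOR $\EID_T(e)$ into at most $\log m + 1$ entries of each endpoint's $i$-th basic sketch unit; each such XOR is on an $O(\log n)$-bit string and so costs $O(1)$ word operations. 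Summing over all edges and all $i$ gives $O(m \cdot L \cdot \log m) = \widetilde{O}(m)$ time for all vertex sketches.

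The subtree sketches $\Sketch(V(T_v))$ for every $v$ can then be computed in one post-order traversal of $T$ via the recursion
\[
\Sketch(V(T_v)) \;=\; \Sketch(v) \;\oplus\; \bigoplus_{c \in \mathrm{children}(v)} \Sketch(V(T_c)),
\]
using only $n-1$ XORs of $O(\log^3 n)$-bit sketches, which costs $\widetilde{O}(n)$ time; the global sketch $\Sketch(V)$ is just $\Sketch(V(T_r))$ for the root $r$. Finally, assembling labels costs $\widetilde{O}(m)$ for the non-tree edges (each just $\EID_T(e)$) and $\widetilde{O}(n)$ for the tree edges (each additionally carrying the two seeds and three $O(\log^3 n)$-bit sketches).

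The only point where care is really needed is the sketching phase, where a naive implementation that iterates over every $(i,j)$ for each edge and tests $e \in E_{i,j}$ would cost $O(m \log^3 n)$ which is still $\widetilde{O}(m)$, so no real obstacle arises. Combining all contributions yields the claimed $\widetilde{O}(m+n)$ bound.
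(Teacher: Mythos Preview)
Your proof is correct and follows essentially the same approach as the paper's own argument: account for the ancestry labels, the extended identifiers via Lemma~\ref{cl:epsbias}, the per-vertex sketches by iterating over edges and hash indices, and the subtree sketches via a single tree scan. Your write-up is in fact more explicit than the paper's (e.g., the prefix observation for the $j$-indices and the post-order recursion for $\Sketch(V(T_v))$), but the structure and the accounting are the same.
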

\def\APPENDCONNLABELSKETCH{
\begin{proof}[Proof of Claim \ref{cl:time-conn-labelsketch}]
To compute the labels of vertices we assign ids to vertices in $O(n)$ time, and compute ancestry labels in $O(n)$ time using Lemma \ref{anc_labels}. To compute the extended identifiers $\EID_T(e)$, we also choose the random seed $\mathcal{S}_{ID}$ and compute $\UID(e)$ using Lemma \ref{cl:epsbias}, this takes $\widetilde{O}(1)$ time per edge, and $\widetilde{O}(m)$ time for all edges. Lastly, we should compute the sketch values $\Sketch(V(T_u))$. For this, first, we choose the random seed $\mathcal{S}_h$, and compute the values $\Sketch_G(v)$. For this, we should identify for each vertex the adjacent edges in $E_{i,j}$. For each edge we can identify the sets it belongs to in $\widetilde{O}(1)$ time using Fact \ref{fc:pairwise}. This allows us computing the sketch values of all vertices in $\widetilde{O}(m+n)$ time. We can then compute the values $\Sketch(V(T_u))$ by scanning the tree in $\widetilde{O}(n)$ time.   
\end{proof}
}
Finally, the subsequent decoding algorithm will be based on the following useful property of the graph sketches, stored by our labels. 
\begin{lemma}\label{lem:sketch-property}
For any subset $S$, given one basic sketch unit $\Sketch_i(S)$ and the seed $\mathcal{S}_{ID}$ one can compute, with constant probability, an outgoing edge $E(S, V \setminus S)$ if such exists. The complexity is $\widetilde{O}(1)$ time.
\end{lemma}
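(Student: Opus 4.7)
}
The plan is to read off an outgoing edge from one of the $\log m + 1$ coordinates of the basic sketch unit $\Sketch_i(S)$, using the cancellation property of XOR together with the validity test of Lemma \ref{lemma_unique}.

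First, I would observe the standard XOR-cancellation identity. By definition,
\[
\Sketch_i(S) \;=\; \bigoplus_{v \in S} \Sketch_i(v) \;=\; \Bigl[\,\bigoplus_{v \in S}\XOR(E_{i,0}(v)),\ \ldots,\ \bigoplus_{v \in S}\XOR(E_{i,\log m}(v))\,\Bigr].
\]
An edge with both endpoints in $S$ is counted twice in each coordinate and therefore cancels, while an edge with exactly one endpoint in $S$ is counted once. Letting $\partial S = E(S,V\setminus S)$, the $j^{\mathrm{th}}$ coordinate of $\Sketch_i(S)$ therefore equals $\XOR(\partial S \cap E_{i,j})$, i.e., the bitwise XOR of the extended identifiers of the outgoing edges that survived the sampling into $E_{i,j}$.

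Next, I would invoke the sampling guarantee. By Lemma \ref{lem:hitting-pairwise} applied to $E' = \partial S$ (assuming $\partial S \neq \emptyset$), with constant probability there exists some index $j^\star \in [0,\log m]$ for which $|\partial S \cap E_{i,j^\star}| = 1$. For that index, the $j^{\star\,\mathrm{th}}$ coordinate of $\Sketch_i(S)$ is exactly the extended identifier $\EID_T(e^\star)$ of a single outgoing edge $e^\star \in \partial S$.

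The decoding algorithm is then straightforward: iterate over $j = 0,1,\ldots,\log m$, and for each coordinate $X_j$ of $\Sketch_i(S)$ run the validity test of Lemma \ref{lemma_unique} (which uses the seed $\mathcal{S}_{ID}$) to decide whether $X_j$ is the extended identifier of a single edge; if so, output that edge and halt. Correctness follows because the $j^\star$ guaranteed above will be detected w.h.p., and the test of Lemma \ref{lemma_unique} rejects all other coordinates (which XOR together two or more identifiers) w.h.p. The running time is $O(\log m)$ coordinates times $\widetilde{O}(1)$ per validity check, which is $\widetilde{O}(1)$ overall. The only subtle point, handled entirely by Lemma \ref{lemma_unique}, is ruling out false positives on coordinates where $|\partial S \cap E_{i,j}| \geq 2$; this is exactly what the $\epsilon$-biased extended identifiers of Lemma \ref{cl:epsbias} were designed to enable.
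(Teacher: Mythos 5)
Your proposal is correct and follows essentially the same route as the paper's proof: XOR-cancellation reduces each coordinate to the XOR of sampled outgoing edges, Lemma \ref{lem:hitting-pairwise} supplies a coordinate containing exactly one such edge with constant probability, and Lemma \ref{lemma_unique} validates it in $\widetilde{O}(1)$ time. Your added remarks on iterating over all $\log m +1$ coordinates and ruling out false positives are just a more explicit rendering of the same argument.
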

\def\APPENDSKETCHPROP{
\begin{proof}[Proof of Lemma \ref{lem:sketch-property}]
The proof follows from Lemma \ref{lem:hitting-pairwise}. Note that by definition of the sketch values $\Sketch_i(S)=\oplus_{v \in S}\Sketch_i(v)=[\XOR(E_{i,0}(S)),\ldots,\XOR(E_{i,\log m}(S))],$ where $E_{i,j}(S)$ are the outgoing edges from $S$ in $E_{i,j}$ (edges that have both endpoints in $S$ are cancelled out by the XOR operation). Let $E'$ be all the outgoing edges from $S$. From Lemma \ref{lem:hitting-pairwise}, with constant probability there exists a $j$ such that $|E' \cap E_{i,j}|=1$. In this case, $\XOR(E_{i,j}(S))$ corresponds to an extended id of a single outgoing edge from $S$. We can check if this happens in $\widetilde{O}(1)$ time using Lemma \ref{lemma_unique}.   
\end{proof}
}

\subsubsection{The Decoding Algorithm} \label{sec:dec-alg}
We next describe the decoding algorithm where given a triplet $s,t, F \in V \times V \times E^f$ along with their labels, it determines whether $s$ and $t$ are connected in $G\setminus F$, w.h.p. 
The decoding algorithm has four key steps: The first step identifies the at most $f+1$ components $\mathcal{C}_0=\{C_1,\ldots, C_\ell\}$ of $T \setminus F$, as well as the components of $s$ and $t$ in $\mathcal{C}_0$. The second step uses the label information to compute the sketch value $\Sketch(C_i)$ of each component $C_i \in \mathcal{C}_0$. The third step modifies this sketch information into $\Sketch_{G \setminus F}(C_i)$, by subtracting the information related to the faulty edges. The forth and final step uses the sketch information in order to simulate $L=O(\log n)$ steps of the Boruvka algorithm. At the end of these steps, the decoding algorithm identifies the connected components of both $s$ and $t$ in $G \setminus F$. In the case where $s$ and $t$ are indeed connected in $G \setminus F$, the algorithm also outputs a succinct representation of an $s$-$t$ path in $G \setminus F$. This extra information would be used later on by our compact routing scheme. We next describe these steps in details. 

\paragraph{Step 1: Identification of the connected components $\mathcal{C}_0$ in $T \setminus F$.} 
Let $F_T=F \cap T$ be the faulty tree edges and let $F_{NT}=F \setminus F_T$ be the faulty non-tree edges. Let $Q=\{s,t\} \cup V(F_T)$.  Each component $C_i$ of $T \setminus F$ will be identified by the maximum vertex ID in $C_i \cap V(F_T)$. Note that in the case where $F_T=\emptyset$, $T \setminus F=T$ and thus $s$ and $t$ are connected iff $s,t \in V(T)$.  From now on, we therefore assume that $F_T \neq \emptyset$. 

We next show that although we do not have full information about the tree $T$ and the vertices of each connected component, the ancestry labels of $V(F_T)$ give us enough information to identify the connected components of $T \setminus F$. Additionally, given an ancestry label of a vertex $u$, we can identify the connected component of $u$. To obtain this, it is helpful to look at the \emph{component tree} that is obtained by contracting each connected component of $T \setminus F$ to one vertex, as follows. Let $\ell = |F_T|+1.$ The component tree $T_C = (\mathcal{C}_0, E_C)$ is a tree of $\ell$ vertices representing the connected components in $T \setminus F$, and $|F_T|=\ell-1$ edges corresponding to the edges of $F_T$. There is an edge $(C_i,C_j) \in E_C$ iff there is an edge $(u,v) \in F_T$ where $u \in C_i, v \in C_j$. See Figure \ref{componentTreePic} for an illustration.

\setlength{\intextsep}{0pt}
\begin{figure}[h]
\centering
\setlength{\abovecaptionskip}{-2pt}
\setlength{\belowcaptionskip}{6pt}
\includegraphics[scale=0.55]{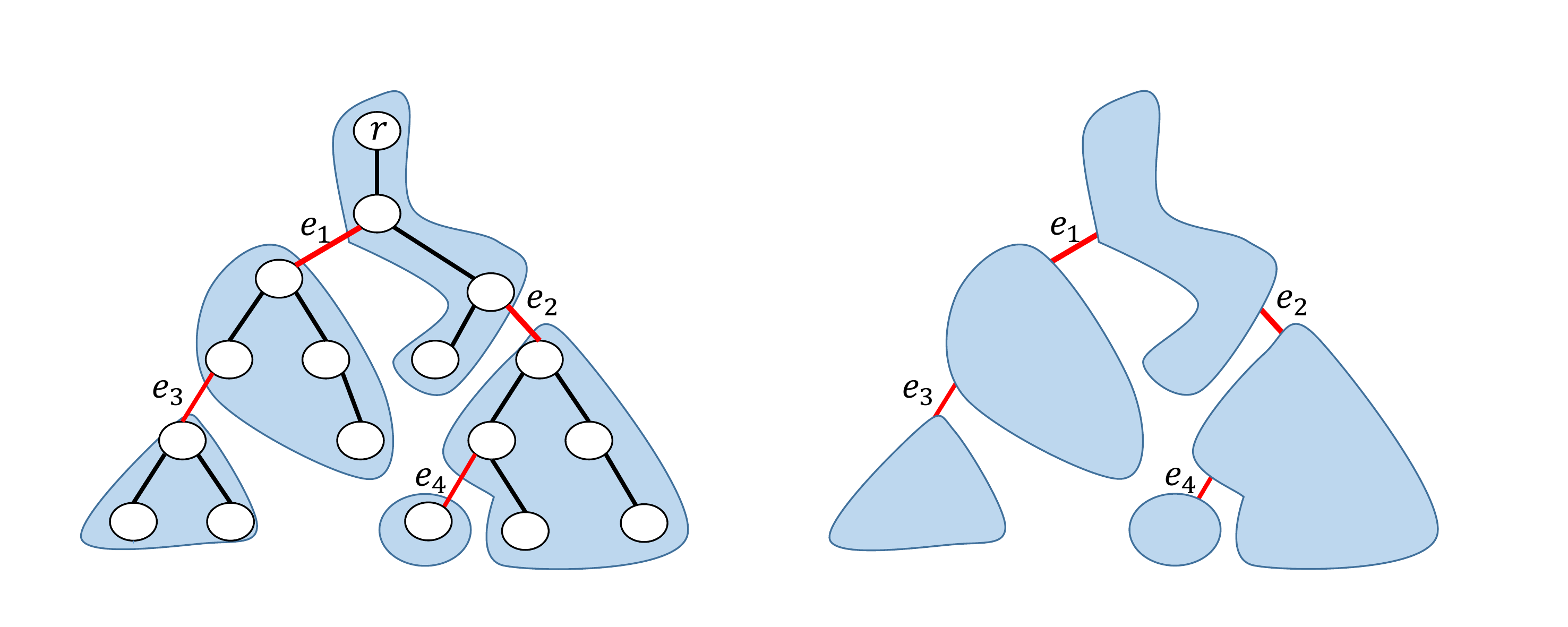}
 \caption{Illustration of the component tree where $F=\{e_1,e_2,e_3,e_4\}$. Each connected component of $T \setminus F$ is contracted to one vertex on the right.}
\label{componentTreePic}
\end{figure}

We can construct the tree $T_C$ using the ancestry labels of the edges $F_T$.  For this, for each edge $e \in F_T$ we just need to identify the set of edges from $F_T$ above $e$ in $T$. Moreover, for a given vertex $v$, its connected component is exactly determined by the set of edges in $F_T$ above it in $T$, which can again be identified using the ancestry labels of $v \cup V(F_T)$. In particular, we can identify the connected components of $s$ and $t$. The component tree can be constructed in $O(f^2)$ time by checking for any pair of edges $e,e' \in F_T$, if $e$ is above $e'$ in the tree. We next show a faster algorithm taking only $\widetilde{O}(f)$ time by exploiting properties of the ancestry labels. Moreover, we show that the component of each vertex can be identified in $O(\log{f})$ time.

\begin{claim} \label{claim_component_tree}
The component tree can be constructed in $O(f \log{f})$ time. Additionally, given $\LCALabel_T(v)$, we can identify the connected component of $v$ in $T \setminus F$ in $O(\log{f})$ time.
\end{claim}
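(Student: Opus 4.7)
The plan is to exploit the fact that the ancestry labels of Lemma \ref{anc_labels} are based on a DFS-interval scheme, so that each label $\LCALabel_T(v)$ effectively carries a preorder/postorder pair $[l(v),r(v)]$ with $u$ an ancestor of $v$ iff $[l(v),r(v)]\subseteq[l(u),r(u)]$, checkable in $O(1)$ time. The key structural observation is that the subtree intervals of the lower endpoints of the $F_T$ edges form a \emph{laminar family}: any two are nested or disjoint, since each is a subtree of $T$. Under this viewpoint, the component tree $T_C$ is (the Hasse diagram of) this laminar family, augmented by a virtual root corresponding to the component containing the root of $T$.

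For the construction, first sort the edges of $F_T$ by the preorder value $l(u_e)$ of their lower endpoint in $O(f\log f)$ time. Then perform a single left-to-right sweep with a stack: on processing the $i$-th edge $e_i$, pop every element whose interval does not contain $[l(u_{e_i}),r(u_{e_i})]$; declare the new top of the stack (if any) to be the parent of $e_i$ in $T_C$, otherwise attach $e_i$ to the virtual root; then push $e_i$. By the laminar property each edge is pushed and popped at most once, so the sweep costs $O(f)$ and the total construction time is $O(f\log f)$. As a final $O(f\log f)$ preprocessing step, equip $T_C$ with a binary-lifting table that also stores, for each $2^k$-th ancestor, the $r$-value of its lower endpoint.

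For the query, given $\LCALabel_T(v)$ extract $[l(v),r(v)]$ and binary-search the sorted list for the largest index $i^{*}$ with $l(u_{e_{i^{*}}})\leq l(v)$ in $O(\log f)$ time. If $r(u_{e_{i^{*}}})\geq r(v)$, then $u_{e_{i^{*}}}$ is an ancestor of $v$ in $T$ and, by laminarity, is the deepest such among $F_T$, so the component of $v$ is the one hanging below $e_{i^{*}}$. Otherwise the subtree of $u_{e_{i^{*}}}$ is disjoint from that of $v$, and the deepest $F_T$-ancestor of $v$ (if any) must lie strictly above $e_{i^{*}}$ in $T_C$. Since $r$-values of lower endpoints are monotonically non-decreasing along root-ward paths of $T_C$, a single binary-lifting descent from $e_{i^{*}}$ locates in $O(\log f)$ time the deepest $T_C$-ancestor $e_j$ with $r(u_{e_j})\geq r(v)$, or certifies that $v$ lies in the root component.

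The main obstacle, which the binary-lifting step is designed to overcome, is that the binary-search candidate $e_{i^{*}}$ may fail to be an ancestor of $v$ whenever its subtree is a left-sibling subtree of $v$ in $T$; a naive upward walk in $T_C$ could then cost $\Omega(f)$ in the worst case, so the monotonicity of $r$-values along $T_C$-ancestor chains is essential to reduce the query cost to $O(\log f)$. Once the correct $F_T$ edge (or the virtual root) is identified, labelling components by, say, the identifier of the $F_T$ edge immediately above them yields the claimed component identifier for $v$.
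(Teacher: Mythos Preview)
Your argument is correct and achieves the stated bounds. The construction via the laminar family of subtree intervals and the stack sweep is sound, and your query procedure is valid: when the binary-search candidate $e_{i^*}$ fails to be an ancestor of $v$, the deepest $F_T$-ancestor of $v$ is indeed a $T_C$-ancestor of $e_{i^*}$ (your containment argument shows this), and the monotonicity of $r$-values along root-ward $T_C$-paths makes binary lifting applicable.

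The paper reaches the same bounds by a somewhat different and slightly slicker device. Instead of sorting only the opening values $l(u_{e_i})$, it inserts \emph{both} endpoints of each DFS interval into the sorted list, i.e., tuples $(DFS_1(v_i),v_i,1)$ and $(DFS_2(v_i),v_i,2)$ for every component representative $v_i$ (plus a virtual root). The parent of each $v_i$ in $T_C$ is then read off in a single linear scan: if the predecessor of $(DFS_1(v_i),v_i,1)$ is an opening tuple $(DFS_1(u),u,1)$, then $u$ is the parent; if it is a closing tuple $(DFS_2(u),u,2)$, then $u$'s (already computed) parent is the parent. For a query on an arbitrary $v$, a single binary search for $DFS_1(v)$ in this combined list suffices: if the predecessor is an opening tuple of $u$, then $u$ is the deepest $F_T$-ancestor of $v$; if it is a closing tuple of $u$, then $u$'s parent in $T_C$ is. The key point is that interleaving closing times eliminates precisely the ``left-sibling'' failure case you identified, so no binary lifting is needed and the query is a binary search plus an $O(1)$ lookup.

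In short, both approaches exploit the same laminar structure; yours handles the bad case with an auxiliary $O(f\log f)$ binary-lifting table on $T_C$, while the paper sidesteps it by doubling the sorted list to include interval closings. The asymptotics are identical, but the paper's version avoids the extra data structure.
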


\begin{proof}
Our algorithm uses ancestry labels based on DFS from \cite{kannan1992implicat}. In this scheme, the label of each vertex $v$ is composed of two numbers $(DFS_1(v),DFS_2(v))$ that represent the first and last times a DFS scan of the tree visits $v$. A vertex $u$ is an ancestor of a vertex $v$ iff the interval $(DFS_1(u),DFS_2(u))$ contains the interval $(DFS_1(v),DFS_2(v))$. To build the component tree, we sort the labels of $V(F_T)$, as described next. First, for each component $C \in T \setminus F$, we use the highest vertex in the component to represent the component. For the highest component, this is the root $r$. For any other component, we have that the highest vertex of the component, $v$, is in $V(F_T)$. This holds as the edge connecting $v$ to its parent $p(v)$ is necessarily in $F_T$ (otherwise, $v$ is not the highest vertex in its component), see Figure \ref{componentTreePic} for illustration. Hence, for any edge $(v,p(v)) \in F_T$, we have that the vertex $v$ represents one component (we can identify which of the vertices is the parent using the ancestry labels). Hence, we have $|F_T|+1$ vertices $v_i$ representing the components $C_i$ of the component tree, and we also know the ancestry labels $(DFS_1(v_i),DFS_2(v_i))$ of all vertices $v_i$, except $r$. For $r$ we can use the label $(1,M)$ where $M$ is a number greater than all values $DFS_2(v_i)$ of other vertices. We next use these labels to determine the structure of the component tree.
For this, we create for each vertex $v_i$ two tuples: $(DFS_1(v_i),v_i,1),(DFS_2(v_i),v_i,2)$, and we sort the $2(|F_T|+1)$ tuples according to their first coordinate. This takes $O(f \log{f})$ time. We next scan the sorted list, and when we reach the tuple $(DFS_1(v_i),v_i,1)$, we identify the parent of $v_i$ in the component tree, as follows. The first tuple is $(1,r,1)$ and $r$ is set to be the root of the component tree. For a vertex $v_i \neq r$, we identify its parent when we reach $(DFS_1(v_i),v_i,1)$. Let $(DFS_b(u),u,b)$ be the last tuple before $(DFS_1(v_i),v_i,1)$ in the sorted order. If $b=1$, then $u$ is the parent of $v_i$ in the component tree. If $b=2$, let $w$ be the parent of $u$ in the component tree, then $w$ is also the parent of $v$ in the component tree. Additionally, $w$ was already computed as $(DFS_1(u),u,1)$ appears before $(DFS_1(v_i),v_i,1)$. Hence, we can find the parent of $v$ in $O(1)$ time using the tuple before it. Scanning the list takes $O(f)$ time, and after it we know for each component its parent in the component tree, which gives the complete structure of the tree. We next prove the correctness of the algorithm.  

We first discuss the case that $b=1$. Here $(DFS_1(u),u,1)$ is the last tuple before $(DFS_1(v_i),v_i,1)$. This means that $u$ is necessarily an ancestor of $v$, because the entry $(DFS_1(v_i),v_i,1)$ is between the entries $(DFS_1(u),u,1)$ and $(DFS_2(u),u,2)$, and the DFS scan traverses exactly the subtree of $u$ in the time interval $(DFS_1(u),DFS_2(u))$, implying that $v_i$ is a child of $u$. Moreover, this is the closest ancestor to $v_i$ among the vertices $\{v_1,v_2,...,v_{\ell}\} \setminus \{v_i\}$, as the DFS scan traverses the ancestors of $v_i$ from the highest to the lowest. It follows that $u$ represents the closest component $C$ above $v_i$ in the component tree, as needed.  

We next discuss the case that $b=2$. Here $(DFS_2(u),u,2)$ is the last tuple before $(DFS_1(v_i),v_i,1)$. Note that now $u$ is not an ancestor of $v_i$, as the DFS scan finished scanning the subtree of $u$ before reaching $v_i$, but we claim that $u$ and $v_i$ have the same parent in the component tree. For this, we show they have exactly the same ancestors in the set $\{v_1,v_2,...,v_{\ell}\} \setminus \{u,v_i\}.$ For any ancestor $w\neq u$ of $u$, we have that $DFS_1(w) < DFS_1(u) < DFS_2(u) < DFS_2(w)$. As $(DFS_1(v_i),v_i,1)$ is the first tuple after $(DFS_2(u),u,2)$, it must hold that $DFS_1(w) < DFS_1(v_i) <  DFS_2(w)$, implying that $v_i$ is a child of $w$ as needed. Similarly, any ancestor $w \neq v_i$ of $v_i$ is also an ancestor of $u$, as we have $DFS_1(w) < DFS_2(u) < DFS_1(v_i) < DFS_2(v_i) < DFS_2(w)$.
Hence, the parent of $u$ in the component tree is also the parent of $v_i$ in the component tree, as needed.  
 
Lastly, we show that using similar ideas we can also identify the component of a vertex $v$ in $T \setminus F$. We create for $v$ the tuple, $(DFS_1(v),v,1)$, and use binary search to find the last tuple smaller or equal to it in the sorted list we computed before, denote it by $(DFS_b(u),u,b)$. 
If $b=1$ then $v$ is in the component of $u$, and else it is in the component of the parent of $u$ (that was computed before). The complexity of the binary search is $O(\log{f})$, we next prove correctness. 
One special case is that $v$ is a root of one of the components in the component tree. In this case, the entry $(DFS_b(u),u,b)$ we find is equal to $(DFS_1(v),v,1)$, and $u=v$ is indeed the component of $v$. Otherwise, $v$ is an internal vertex in its component, and the root of the component is the closest ancestor to $v$ in $\{v_1,...,v_{\ell}\}$.
If $b=1$, then as shown before, $u$ is the closest ancestor to $v$ in the component tree, as needed. If $b=2$, then as shown before, $u$ is not an ancestor of $v$, but has exactly the same ancestors in the component tree. Hence, the root $w$ of the component above $u$ is the root of the component of $v$, as needed.    
\end{proof}

\paragraph{Step 2: Computing the sketch values of each component $\mathcal{C}_0$ in $G$.} 
For each component $C_j \in \mathcal{C}_0$ the algorithm computes $\Sketch_G(C_j)$ using the sketch information of the vertices in $V(F_T)$.  The basic observation here is the following. Given $S' \subset S$ and $\Sketch(S), \Sketch(S')$, it holds that $\Sketch(S \setminus S')=\Sketch(S) ~\oplus~ \Sketch(S')$. To compute the sketch values, first, we define for each component a temporary value $\Sketch'_G(C_j)$ as follows. Let $v_j$ be the highest vertex (closest to the root in $T$) in the component $C_j$. For the component of the root $r$, this is $r$. For any other component $C_j$, let $(C_j,p(C_j))$ be the edge connecting $C_j$ to its parent in the component tree. This edge corresponds to an edge $(v_j,p(v_j)) \in F_T$, where $v$ is the highest vertex in $C_j$. We define $\Sketch'_G(C_j) = \Sketch_G(V(T_{v_j}))$. 
Since $(v_j,p(v_j)) \in F_T$, the sketch information $\Sketch'_G(C_j)$ can be obtained from the label of the tree edge $(v_j,p(v_j))$. We also know the temporary sketch value of the component of $r$, as $\Sketch_G(V_{r})=\Sketch_G(V)$ is part of the labels of all tree edges (and we assume that $F_T \neq \emptyset$). We next use the temporary sketch values to compute the sketch values of components using the following claim.

\begin{claim}
Let $C_j$ be a component in $T \setminus F$. If $C_j$ is a leaf in the component tree, we have $\Sketch_G(C_j) = \Sketch'_G(C_j).$ Otherwise, let $D=\{D_1,...,D_t\}$ be the children of $C_j$ in the component tree and let $\Sketch'(D)=\oplus_{1 \leq i \leq t} \Sketch'_G(D_i)$, then $\Sketch_G(C_j) = \Sketch'_G(C_j) \oplus \Sketch'(D).$ 
\end{claim}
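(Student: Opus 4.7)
The plan is to reduce the claim to a purely set-theoretic identity about vertex sets, and then invoke the linearity of the sketch operator under $\oplus$. Since $\Sketch_G(S)=\bigoplus_{v\in S}\Sketch_G(v)$, any disjoint decomposition $S=S_0\sqcup S_1\sqcup\cdots\sqcup S_t$ yields $\Sketch_G(S)=\Sketch_G(S_0)\oplus\Sketch_G(S_1)\oplus\cdots\oplus\Sketch_G(S_t)$ (each vertex contributes its basic sketch exactly once). Recalling that $\Sketch'_G(C_j)=\Sketch_G(V(T_{v_j}))$ by definition (and similarly for each $D_i$), the whole statement follows once I prove the identity
\[
V(T_{v_j}) \;=\; C_j \;\sqcup\; \bigsqcup_{i=1}^{t} V(T_{v_{D_i}}),
\]
where the right side is empty of the second summand exactly when $C_j$ is a leaf of the component tree.

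The first step is the leaf case. If $C_j$ has no children in the component tree, then no edge of $F_T$ lies strictly below $v_j$ in $T$: any such edge would, by the same reasoning used to build the component tree in Claim \ref{claim_component_tree}, separate from $C_j$ a new component $D$ whose highest vertex is the lower endpoint of that edge, making $D$ a child of $C_j$. Hence the path in $T$ from $v_j$ to every descendant uses no faulty edge, and so all of $T_{v_j}$ lies in $C_j$. Conversely $C_j\subseteq V(T_{v_j})$, because $v_j$ is by construction the highest (closest-to-root) vertex of $C_j$. This gives $V(T_{v_j})=C_j$ and therefore $\Sketch_G(C_j)=\Sketch_G(V(T_{v_j}))=\Sketch'_G(C_j)$.

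The second step is the internal case, and this is where the main work lies; establishing the disjointness and exhaustiveness of the decomposition is the most delicate point. Fix a vertex $u\in V(T_{v_j})$ and consider the $T$-path from $v_j$ down to $u$. If this path uses no edge of $F_T$, then $u\in C_j$. Otherwise, let $(p(w),w)$ be the first $F_T$-edge encountered on this path (so $p(w)\in C_j$). By definition of the component tree, removing $(p(w),w)$ exposes a component $D$ whose highest vertex is $w$, and $D$ is a child of $C_j$; thus $w=v_D$ for some $D\in\{D_1,\ldots,D_t\}$ and $u\in V(T_{v_D})$. Conversely, every vertex of every $V(T_{v_{D_i}})$ lies in $V(T_{v_j})$ because $v_{D_i}$ is a descendant of $v_j$ (its parent $p(v_{D_i})$ sits in $C_j\subseteq T_{v_j}$), and no such vertex lies in $C_j$ because the faulty edge $(p(v_{D_i}),v_{D_i})$ blocks it. Disjointness of the subtrees $V(T_{v_{D_i}})$ follows from the fact that the $v_{D_i}$ are incomparable in the ancestor order of $T$: if one were an ancestor of another, then in the component tree one $D_i$ would lie strictly below the other, contradicting that both are siblings and children of $C_j$.

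Combining the two cases with the XOR-linearity observation above completes the proof. Explicitly, in the internal case,
\[
\Sketch_G(V(T_{v_j})) \;=\; \Sketch_G(C_j)\;\oplus\;\bigoplus_{i=1}^{t}\Sketch_G(V(T_{v_{D_i}})),
\]
which, after substituting $\Sketch_G(V(T_{v_j}))=\Sketch'_G(C_j)$ and $\Sketch_G(V(T_{v_{D_i}}))=\Sketch'_G(D_i)$ and rearranging (using that $x\oplus x=0$ over $\mathbb{F}_2$), yields $\Sketch_G(C_j)=\Sketch'_G(C_j)\oplus\Sketch'(D)$, as required.
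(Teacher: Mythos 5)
Your proof is correct and follows essentially the same route as the paper's: identify $\Sketch'_G(C_j)$ with $\Sketch_G(V(T_{v_j}))$, establish the disjoint decomposition of $V(T_{v_j})$ into $C_j$ and the subtrees rooted at the children components' highest vertices, and conclude by XOR-linearity. You spell out the disjointness and exhaustiveness of the decomposition in more detail than the paper does, but the underlying argument is the same.
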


\begin{proof}
It holds that $\Sketch_G(C_j) = \oplus_{v \in C_j} \Sketch_G(v)$. By definition, $\Sketch'_G(C_j)=\Sketch_G(V(T_{v_j})) = \oplus_{v \in V(T_{v_j})} \Sketch_G(v)$ is the XOR of sketches of all vertices in the subtree of $v_j$. As $v_j$ is the highest vertex in $C_j$, if $C_j$ is a leaf component in the component tree, then the vertices in $C_j$ are exactly the vertices in $T_{v_j}$, and the claim follows. Otherwise, the vertices in $C_j$ are all vertices in $T_{v_j}$ that are not contained in any component below $C_j$. Hence, to compute the value $\Sketch_G(C_j)$, we should subtract from $\Sketch_G(V(T_{v_j}))$ the sketch values of vertices in components below $C_j$. Let $D_1,\ldots,D_t$ be the children of $C_j$ in the component tree, and let $u_1,\ldots,u_t$ be the highest vertices in the components $D_1,\ldots,D_t$, respectively. Any vertex that is in some component below $C_j$ is in exactly one of the subtrees $T_{u_1},\ldots,T_{u_t}$. Hence the sketch value of vertices in components below $C_j$ equals $\oplus_{1 \leq i \leq t} \Sketch_G(V(T_{u_i}))= \oplus_{1 \leq i \leq t} \Sketch'_G(D_i)=\Sketch'(D)$. To conclude, we get $\Sketch_G(C_j)=\Sketch_G(V(T_{v_j})) \oplus \Sketch'(D)=\Sketch'_G(C_j) \oplus \Sketch'_G(D)$, as needed.
\end{proof}

To conclude, from the values $\Sketch'_G(C_j)$, we can easily compute the values $\Sketch_G(C_j)$. The complexity is $\widetilde{O}(f)$, as for each component, the sketch $\Sketch'(C_j)$ participates in two computations, and we have at most $O(f)$ components and the sketches have poly-logarithmic size.

\paragraph{Step 3: Computing the sketch values of each component $\mathcal{C}_0$ in $G \setminus F$.} 
For each faulty edge $e \in F$ (both tree and non-tree edges), our goal is to subtract the sketch information of $e$ from the corresponding components of the endpoint of $e$. The step does not require the label information of the edges, and it would be sufficient to know only the seed $\mathcal{S}_h$ that determines the sampling of edges into the sketches, and the extended identifier of the failing edges. Since $F_T \neq \emptyset$, the algorithm holds the seed $\mathcal{S}_h$ (from the label of an edge $e \in F_T$), and it has the extended identifiers of all edges in $F$ as part of their labels. 

Using the extended identifier of the faulty edge $e=(u,v)$, one can determine in $O(\log{f})$ time the components in $\mathcal{C}_0$ to which its endpoints belong, from Claim \ref{claim_component_tree}. Using the identifier $\EID(e)$ and the seed $\mathcal{S}_h$, one can determine all the indices of the sketch to which the edge $e$ was sampled in $\widetilde{O}(1)$ time using Fact \ref{fc:pairwise}. 
Letting $C_u, C_v$ be the components of $u$ and $v$ in $T \setminus F$, respectively. If $C_u \neq C_v$, then the values $\Sketch_G(C_u),\Sketch_G(C_v)$ are updated by XORing them with the matrix that contains the extended identifier $\EID(e)$ in the relevant positions. The complexity is poly-logarithmic, as the matrix has poly-logarithmic size. In the case that $C_u=C_v$, as $e$ is an internal edge in the component, it is not part of $\Sketch_G(C_u)$, and there is no need to update the value. Overall, doing the computation for all edges in $F$ takes $\widetilde{O}(f)$ time.
From that point on, all sketches of the components $\mathcal{C}_0$ can be treated as sketches that have been computed in $G \setminus F$.

\paragraph{Step 4: Simulating the Boruvka algorithm.} Finally, our goal is to determine the identifiers of the maximal connected components of $s$ and $t$ of $G \setminus F$. The input to this step is the identifiers of the components $\mathcal{C}_0=\{C_{1}, \ldots, C_k\}$ in $T \setminus F$, along with their sketch information in $G \setminus F$. While the algorithm does not have information on the vertices of each component, it knows the component identifier of each vertex in $Q$. 

The algorithm consists of $L=O(\log n)$ phases of the Boruvka algorithm. Each phase $i \in \{1,\ldots, L\}$ will be given as input a partitioning $\mathcal{C}_i=\{C_{i,1}, \ldots, C_{i,k_i}\}$ of (not necessarily maximal) connected components in $G \setminus F$.
These components are identified by an $O(\log n)$ bit identifier, where for each vertex in $Q$, the algorithm receives its unique component identifier in  $\mathcal{C}_i$. In addition, the algorithm receives the sketch information of the components $\mathcal{C}_i$ in $G \setminus F$. The output of the phase is a partitioning $\mathcal{C}_{i+1}$, along with their sketch information in $G \setminus F$ and the identifiers of the components for each vertex in $U$. A component $C_{i,j} \in \mathcal{C}_i$ is \emph{growable} if it has at least one non-faulty outgoing edge to a vertex in $V \setminus C_{i,j}$. That is, the component is growable if it is strictly contained in some maximal connected component in $G \setminus F$. Letting $N_i$ denote the number of growable components in $\mathcal{C}_i$, the output partitioning $\mathcal{C}_{i+1}$ of the $i^{th}$ step guarantees that $N_{i+1}\leq N_i /2$ w.h.p. To obtain outgoings edges from the growable components in $\mathcal{C}_i$, the algorithm uses the $i^{th}$ basic-unit sketch $\Sketch_i(C_{i,j})$ of each $C_{i,j} \in \mathcal{C}_i$. By Lemma \ref{lem:sketch-property}, from every growable component in $\mathcal{C}_i$, we get one outgoing edge $e'=(x,y)$ with constant probability. Using the extended edge identifier of $e'$ the algorithm can also detect the component $C_{i,j'}$ to which the second endpoint, say $y$, of $e'$ belongs using Claim \ref{claim_component_tree}. 
That label allows us to compute the component of $y$ in the initial partitioning $T \setminus F$, i.e., the component $C_{0,q}$ of $y$ in $\mathcal{C}_0$. Thus $y$ belongs to the unique component $C_{i,j'} \in \mathcal{C}_i$ that contains 
$C_{0,q}$.


As noted in prior works \cite{ahn2012analyzing,kapron2013dynamic,DuanConnectivityArxiv16}, it is important to use fresh randomness (i.e., independent sketch information) in each of the Boruvka phases. The reason is that the cut query, namely, asking for a cut edge between $S$ and $V \setminus S$, should not be correlated with the randomness of the sketches. Note that indeed the components of $\mathcal{C}_i$ are correlated with the randomness of the first $(i-1)$ basic sketch units of the vertices. Thus, in phase $i$ the algorithm uses the $i^{th}$ basic sketch units of the vertices (which are independent of the other sketch units) to determine the outgoing edges of the components in $\mathcal{C}_i$.

%

The algorithm then computes the updated sketches of the merged components. This is done by XORing over the sketches of the components in $\mathcal{C}_i$ that got merged into a single component in 
$\mathcal{C}_{i+1}$. In expectation, the number of growable components is reduced by factor $2$ in each phase. Thus after $O(\log n)$ phases, the expected number of growable components is at most $1/n^5$, and using Markov inequality, we conclude that w.h.p there are no growable components. The final partitioning $\mathcal{C}_L$ corresponds w.h.p to the maximal connected components in $G \setminus F$. The pair $s$ and $t$ are connected in $G \setminus F$ only if the components $C_s,C_t$ of $s,t$ respectively in $T \setminus F$ are connected in the final component decomposition.
We next show that the complexity of the algorithm is $\widetilde{O}(f)$. This is also the decoding time of the whole algorithm, as all steps take $\widetilde{O}(f)$ time, as discussed above. 

\begin{claim}\label{cl:complexity-step-four}
The complexity of step 4 is $\widetilde{O}(f)$.
\end{claim}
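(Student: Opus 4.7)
The plan is to bound the cost of each of the $L=O(\log n)$ Boruvka phases separately by $\widetilde{O}(f)$ and then sum, using the fact that $\widetilde{O}$ absorbs the $\log n$ factor. The starting observation is that $|\mathcal{C}_0|\leq |F_T|+1 \leq f+1$ and the partitioning only coarsens across phases, so $|\mathcal{C}_i|\leq f+1$ for every $i$, and every phase processes at most $O(f)$ components.

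First I would account for the per-component work inside phase $i$. For each current component $C_{i,j}\in\mathcal{C}_i$, Lemma~\ref{lem:sketch-property} applied to the $i$-th basic sketch unit $\Sketch_i(C_{i,j})$ recovers a candidate outgoing edge $e'=(x,y)$ (if one exists) in $\widetilde{O}(1)$ time, using the seed $\mathcal{S}_{ID}$. The ancestry labels that are packed inside $\EID_T(e')$, per Eq.~(\ref{eq:extend-ID}), then let Claim~\ref{claim_component_tree} return the $\mathcal{C}_0$-component of $y$ in $O(\log f)$ time. Summed over the $O(f)$ components, the edge-discovery cost per phase is $\widetilde{O}(f)$.

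Next I would describe how to maintain the evolving partitioning and its sketches. I would keep a union--find structure whose ground set is $\mathcal{C}_0$, with each current class representative storing the sketch $\Sketch_i(\cdot)$ of size $\widetilde{O}(1)$ words. Translating a $\mathcal{C}_0$-identifier to its current $\mathcal{C}_i$-class takes $O(\log f)$ time, which is what we need both to test whether a discovered edge actually leaves $C_{i,j}$ and to locate the endpoint's current class. The number of merges in a phase is at most $|\mathcal{C}_i|-|\mathcal{C}_{i+1}|=O(f)$, and each merge is one union operation plus one $\widetilde{O}(1)$-time XOR of two sketches; the total merge cost is again $\widetilde{O}(f)$.

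Combining, phase $i$ takes $\widetilde{O}(f)$ time, and summing over $L=O(\log n)$ phases yields $\widetilde{O}(f)$ overall. The one point I would be careful about is that the $i$-th basic sketch unit is used only in phase $i$ to preserve independence between the cut queries and the sketch randomness, but this is already built into the labeling (each vertex carries $L$ independent units), and so it imposes no additional cost beyond retrieving the correct slice of the sketch vector, which is $\widetilde{O}(1)$ per access.
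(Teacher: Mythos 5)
Your proposal is correct and follows essentially the same route as the paper's proof: bound each of the $O(\log n)$ Boruvka phases by $\widetilde{O}(f)$ via Lemma~\ref{lem:sketch-property} for edge discovery, Claim~\ref{claim_component_tree} for locating endpoints' components, and a union--find structure over $\mathcal{C}_0$ with sketch XORs at each of the $O(f)$ merges. No gaps; the remark about using the $i$-th independent sketch unit in phase $i$ matches the paper's treatment as well.
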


\begin{proof}
The algorithm has $O(\log{n})$ phases, where in each phase the following is computed. First, given the sketch values of the current components we identify outgoing edges from the components. This takes $\widetilde{O}(1)$ time per component from Lemma \ref{lem:sketch-property}, and $\widetilde{O}(f)$ time for all components, as we have at most $f+1$ components. Next, for each outgoing edge we identify the components it connects using its ancestry labels, this takes $\widetilde{O}(1)$ time per edge using Claim \ref{claim_component_tree}. Then, we merge components accordingly and compute the sketch values of the new components by XORing the sketch values of merged components. Overall this takes $\widetilde{O}(f)$ time, as we have at most $O(f)$ merges. In more detail, we can use a union-find data structure to implement the merges, where every time we merge components we compute the sketch value of the new component. We also maintain for each original component in $T \setminus F$ its current component in phase $i$, this allows us to learn the current components connected by an outgoing edge $e$. This information can be maintained as follows. Let $C$ be a component in $T \setminus F$, and assume we know the component $C_{i,j}$ it belongs to at the beginning of phase $i$. After the merges of phase $i$, $C_{i,j}$ joins some component $C_{i+1,j'}$ of phase $i+1$. We can use the find operation to identify the id of the new component. Overall, we have $O(f)$ merges and $O(f)$ find operations to identify for each component $C \in T \setminus F$, the corresponding component $C_{i+1,j'}$ it belongs to, hence the complexity is bounded by $\widetilde{O}(f)$. 
\end{proof}

%
Finally, we show that the decoding algorithm can be slightly modified to output a compressed encoding of an $s$-$t$ path in $G \setminus F$, using $O(f\log n)$ bits. This encoding is represented by an $s$-$t$ path $\widehat{P}$ that has two type of edges, appearing in an alternate manner on  $\widehat{P}$: $G$-edges and edges $e'=(u,v)$ such that the $u$-$v$ tree path is intact in $T \setminus F$. See Figure \ref{fig:succ-paths}. 
\begin{lemma}\label{lem:useful-recovery-edges}
Consider a triplet $s,t,F$ such that $s$ and $t$ are connected in $G \setminus F$. 
The decoding algorithm can also output a set of at most $f$ recovery edges $R$ such $(T \setminus F) \cup R$ is a spanning tree. In addition, it outputs a labeled $s$-$t$ path $\widehat{P}$ of length $O(f)$ that provides a succinct description of the $s$-$t$ path. The edges of $\widehat{P}$ are labeled by $0$ and $1$, where $0$-labeled edges correspond to $G$-edges and $1$-labeled edges $e=(x,y)$ correspond to $x$-$y$ paths in $T \setminus F$. 
\end{lemma}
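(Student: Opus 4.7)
The plan is to augment the Boruvka simulation of Step~4 so that it bookkeeps, in addition to the merged component identifiers, the explicit outgoing edges that drive each merge. Specifically, whenever the simulation finds an outgoing edge $e=(x,y)$ from a growable component $C_{i,j}\in \mathcal{C}_i$ and uses it to merge $C_{i,j}$ with another component, we insert $e$ into a set $R$, together with the identifiers of the two components of $\mathcal{C}_0$ to which $x$ and $y$ belong (these are already computed via Claim~\ref{claim_component_tree}). Since we start from the partition $\mathcal{C}_0$ of size $\ell\le|F_T|+1\le f+1$ and every edge added to $R$ strictly reduces the number of components, we have $|R|\le f$. Moreover, since the sketches used in Step~4 are sketches in $G\setminus F$ (by Step~3), every edge inserted into $R$ is a genuine edge of $G\setminus F$. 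Consequently, restricting to the vertex set spanned by the final merged component that contains $s$ and $t$, the edge set $(T\setminus F)\cup R$ forms a spanning tree: $T\setminus F$ provides a spanning forest on $\mathcal{C}_0$, and $R$ contains exactly one edge per merge, resulting in a connected acyclic structure.

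To build the succinct $s$-$t$ path $\widehat{P}$, I will view $R$ as defining an auxiliary tree $T_R$ whose nodes are the components of $\mathcal{C}_0$ and whose edges are the recovery edges. Let $C_s,C_t\in\mathcal{C}_0$ be the components of $s$ and $t$, which are identified via Claim~\ref{claim_component_tree} from the labels $\FTConnLabel_{G,T}(s),\FTConnLabel_{G,T}(t)$. Since $s$ and $t$ are connected in $G\setminus F$, they lie in the same tree of $T_R$; extract the path $C_s=D_0,D_1,\dots,D_k=C_t$ in $T_R$, where the edge between $D_{i-1}$ and $D_i$ is some $e_i=(x_i,y_i)\in R$ with $x_i\in D_{i-1}$, $y_i\in D_i$. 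Necessarily $k\le|R|\le f$. The output path is
\[
\widehat{P}=\bigl\langle (s,x_1)_{1},(x_1,y_1)_{0},(y_1,x_2)_{1},(x_2,y_2)_{0},\ldots,(x_k,y_k)_{0},(y_k,t)_{1}\bigr\rangle,
\]
where the $0$-labeled edges are precisely the recovery edges of $R$ (genuine edges of $G\setminus F$), and each $1$-labeled edge $(u,v)$ has both endpoints in a common component of $T\setminus F$, so the $u$-$v$ tree path in $T$ is intact in $T\setminus F$. The total length is $2k+1=O(f)$, and each edge carries an $O(\log n)$-bit description (the endpoints' labels plus the orientation bit), yielding $O(f\log n)$ bits overall.

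The key implementation issue, and likely the main obstacle, is maintaining and querying $T_R$ within the $\widetilde{O}(f)$ decoding budget. I would build $T_R$ incrementally during the $O(\log n)$ Boruvka phases using a standard union-find over the $|{\cal C}_0|\le f+1$ initial components, so that after all phases $T_R$ has at most $f$ edges and can be stored explicitly in $\widetilde{O}(f)$ space. Extracting the $C_s$-$C_t$ path in $T_R$ then reduces to a single traversal of a tree of size $O(f)$, which is trivially $O(f)$ time. A second subtlety is that the Boruvka simulation may discover multiple outgoing edges between the same pair of components across different phases; only one such edge should be retained in $R$, which is enforced automatically by the union-find check ``merge only if the components' representatives differ.'' Finally, correctness of the alternating labeling relies on the fact that for every recovery edge $e_i=(x_i,y_i)\in R$, the components of $x_i$ and $y_i$ in $T\setminus F$ are exactly $D_{i-1}$ and $D_i$ by construction, so consecutive tree-path segments are well-defined; this is certified in $O(\log f)$ time per endpoint by invoking Claim~\ref{claim_component_tree} with the ancestry labels that are already part of the extended identifier $\EID_T(e_i)$.
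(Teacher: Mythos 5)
Your proposal is correct and follows essentially the same route as the paper: the recovery edges are exactly the outgoing edges used in the Boruvka merges (at most $f$ of them since $\mathcal{C}_0$ has at most $f+1$ components), and $\widehat{P}$ alternates recovery edges with intact tree paths inside components of $T\setminus F$, justified by the observation that consecutive endpoints $y_i,x_{i+1}$ lie in the same component of $\mathcal{C}_0$. Your additional bookkeeping details (union-find, the auxiliary component tree $T_R$, and the restriction to the component containing $s$ and $t$) are consistent with, and slightly more explicit than, the paper's argument.
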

\begin{proof}
Let $C_s, C_t$ be the components of $s$ and $t$ in the initial partitioning $\mathcal{C}_0$. In Step $4$ of the decoding algorithm, the Boruvka algorithm is simulated up to the point that $C_s$ and $C_t$ are connected. Therefore, the algorithm has computed a path $P$ that connects the components $C_s$ and $C_t$. Each vertex on that path corresponds to a component in $\mathcal{C}_0$, and each edge corresponds to an outgoing edge (discovered using the sketch information). Since $\mathcal{C}_0$ has at most $f+1$ components, $|P|\leq f+1$. 
Each such edge $e' \in P$ corresponds to an edge in $G$. Let $e_1=(x_1,y_1),\ldots, e_k=(x_k,y_k)$ be the $G$-edges corresponding to the edges of $P$ ordered from $C_s$ to $C_t$. Letting $y_0=s$ and $x_{k+1}=t$, we get that 
$y_i$ and $x_{i+1}$ belong to the same component in $\mathcal{C}_0$, for every $i \in \{0,\ldots, k\}$. 
The labeled path is given by $\widehat{P}=[s, x_1,y_1, x_2, y_2, \ldots y_k,t]$ where the edges $(y_i,x_{i+1})$ are labeled $1$ and the edges $(x_{i}, y_i)$ are labeled $0$. Each $0$-labeled edge is a real edge in $G$, and each $1$-labeled edge $(x_{i}, y_i)$ corresponds to a tree path $\pi(x_i, y_i)$ in $T \setminus F$. 
\end{proof}

\begin{figure}[h!]
\begin{center}
\includegraphics[scale=0.40]{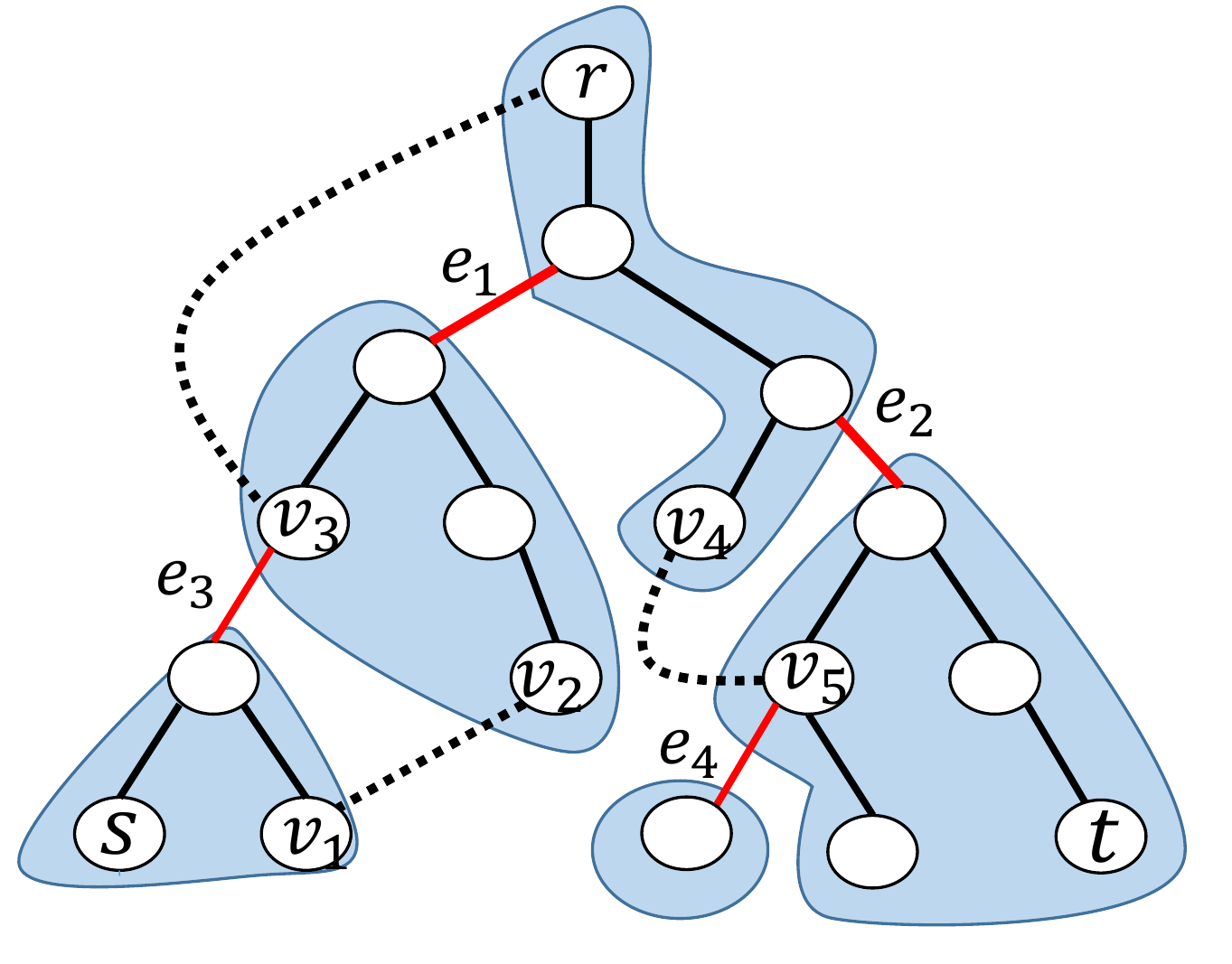}
\caption{\sf Shown is a tree $T$ with faulty edges $e_1,\ldots, e_4$. The $s$-$t$ path in $G \setminus F$ is represented by the path $\widehat{P}=[s,v_1]\circ (v_1,v_2) \circ [v_2,v_3] \circ (v_3,r) \circ [r,v_4] \circ (v_4, v_5) \circ [v_5,t]$. The recovery edges $(v_1,v_2), (v_3,r)$ and $(v_4, v_5)$ are shown in dashed lines.  \label{fig:succ-paths}
}
\end{center}
\end{figure}
\section{Fault-Tolerant Approximate Distance Labels}\label{sec:ft-distance}
Given integer parameters $f,k \geq 1$, an $(f,k)$ \emph{FT approximate distance labeling scheme} assigns labels $\FTDistLabel: V \cup E \to \{0,1\}^{q}$ such that given the labels of $s,t$ and a subset $F \subseteq E$, $|F|\leq f$, there exists a decoding algorithm that outputs a distance estimate $\delta_{G \setminus F}(s,t)$ satisfying:
$$\dist_{G\setminus F}(s,t) \leq \delta_{G \setminus F}(s,t) \leq k\cdot\dist_{G\setminus F}(s,t)~.$$

We next show that there is an efficient transformation from any FT connectivity labeling scheme into an FT approximate distance labeling scheme. This transformation increases the label size by a multiplicative factor of $\widetilde{O}(n^{1/k})$. This technique was first introduced by \cite{chechik2012f} in the context of distance sensitivity oracles, and it is based on the notion of tree covers.

\begin{definition}[Tree Covers]\label{def:tree-cover}
Let $G=(V,E)$ be an undirected graph with edge weights $\omega$, and let $\rho,k$ be two integers. Define $B_{\rho}(v)=\{ u \in V ~\mid~ \dist_G(u,v)\leq \rho\}$. A tree cover $\TreeCover(G, \omega, \rho,k)$ is a collection of rooted trees $\mathcal{T}=\{T_1,\ldots, T_\ell\}$ with root $r(T)$ for every $T \in \mathcal{T}$ such that:
\begin{enumerate}[noitemsep]
\item For every vertex $v$ there exists a tree $T \in \mathcal{T}$ such that $B_{\rho}(v) \subseteq T$.
\item The radius of each tree $T$ is at most $(2k-1)\cdot \rho$.
\item Each vertex participates in $(k \cdot n^{1/k})$ trees.
\end{enumerate}
Let $|\TreeCover(G, \omega, \rho,k)|$ denote the number of trees in the tree cover $\TreeCover(G, \omega, \rho,k)$.
\end{definition}

\begin{proposition}\cite{Peleg:2000}
For any $n$-vertex graph $G=(V,E, \omega)$, and any parameters $\rho,k$, one can compute tree covers $\TreeCover(G, \omega, \rho,k)$ in time $\widetilde{O}(|E(G)| \cdot n^{1/k})$.
\end{proposition}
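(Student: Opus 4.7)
The plan is to invoke the classical Awerbuch--Peleg sparse tree cover construction, whose heart is a \emph{ball-growing} subroutine. For a center $v$, I would initialize $Z^{(0)} = B_\rho(v)$ and iteratively set $Z^{(i)} = \{u : \dist_G(u, Z^{(i-1)}) \leq \rho\}$, continuing as long as $|Z^{(i)}| > n^{1/k} \cdot |Z^{(i-1)}|$. Each successful expansion multiplies the cluster size by more than a factor of $n^{1/k}$, and since the cluster cannot exceed $n$ vertices, the process must halt within at most $k-1$ expansions. The terminating cluster $C_v$ contains $B_\rho(v)$ (which was the seed) and has weighted radius $O(k \rho)$ from $v$; tightening the step size and bookkeeping yields the stated $(2k-1)\rho$ bound. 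The tree $T_v$ is then taken to be a BFS (shortest-path) tree of $C_v$ rooted at $v$; since tree distances to the root match shortest-path distances in $C_v$, property~$2$ is immediate.

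To assemble the cover, I would apply the ball-growing procedure once per vertex $v \in V$, yielding a dedicated tree $T_v$ containing $B_\rho(v)$; this delivers property~$1$ by construction. Alternatively, one can iterate greedily---pick an uncovered vertex, grow its cluster, mark its $\rho$-ball as covered, and repeat---which gives essentially the same guarantees while keeping the number of trees at most $n$.

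The main technical obstacle is property~$3$: bounding the number of trees containing any fixed vertex $u$ by $O(k \cdot n^{1/k})$. I would partition the trees $T_v \ni u$ by the iteration index $i \in \{1, \ldots, k-1\}$ at which their ball-growing halted, and for each level $i$ carry out a local packing argument that exploits the stopping criterion: the terminating inequality $|Z^{(i)}| \leq n^{1/k} \cdot |Z^{(i-1)}|$ means each such cluster has a sparse outer layer relative to an inner core, so only $O(n^{1/k})$ distinct centers $v$ at a given level can have $u$ in their final cluster (the cores partition a bounded region around $u$ up to multiplicity $n^{1/k}$). Summing over the $O(k)$ levels yields the $O(k \cdot n^{1/k})$ bound. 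This step is exactly the content of the Awerbuch--Peleg sparse cover theorem and is the hard part.

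For the runtime, each ball-growing invocation is implemented via a Dijkstra computation from $v$ that is halted as soon as the growth condition fails, taking $\widetilde{O}(|E(C_v)|)$ time. Summing over all $n$ trees and charging each edge of $G$ to the at most $O(k \cdot n^{1/k})$ trees whose cluster contains it (via property~$3$ applied to the edge endpoints) gives total work $\widetilde{O}(|E(G)| \cdot n^{1/k})$, matching the claim.
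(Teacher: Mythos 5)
The paper offers no proof of this proposition---it is imported verbatim from Peleg's book---so the only question is whether your sketch of the Awerbuch--Peleg construction actually works. Your radius and termination arguments are fine: each successful expansion multiplies the cluster size by more than $n^{1/k}$, so there are at most $k-1$ of them, and the radius grows by an additive $O(\rho)$ per round. The genuine gap is in property~3, and it is not just ``the hard part deferred to the literature'': your \emph{primary} plan, running the ball-growing procedure independently from every vertex $v$, provably violates the overlap bound. Take $G$ to be a clique (or a star with $\rho$ equal to the edge weight): then $B_\rho(v)=V$ for every $v$, every invocation terminates immediately with cluster $V$, and every vertex lies in all $n$ trees, not in $O(k\cdot n^{1/k})$ of them. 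The packing argument you gesture at needs the ``cores'' of the clusters containing a fixed $u$ to be pairwise disjoint, and independent per-vertex growth gives no disjointness whatsoever---all $n$ cores can be identical.

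The disjointness is exactly what the sequential structure of Awerbuch--Peleg supplies, and your parenthetical ``alternative'' is the real construction, but it is underspecified at the critical step. One must grow a cluster by repeatedly \emph{merging the $\rho$-balls of all still-uncovered vertices that intersect the current cluster} (this is also where the radius $(2k-1)\rho$ rather than $k\rho$ comes from: each merge round adds $2\rho$, not $\rho$), and upon termination one must mark as covered \emph{all} the balls absorbed into the kernel $Y$ (the penultimate layer), not merely the seed's ball, while the output cluster is the final layer $Z\supseteq Y$. The kernels produced within one pass are then pairwise disjoint, and the stopping rule $|Z|\le n^{1/k}|Y|$ bounds how many output clusters of a pass can contain a fixed vertex; iterating over the uncovered balls and summing over the $O(k)$ layer indices yields the $O(k\cdot n^{1/k})$ overlap. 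Without this bookkeeping the statement ``only $O(n^{1/k})$ centers at a given level can have $u$ in their final cluster'' is simply false, as the clique example shows. Your runtime accounting (Dijkstra restricted to the growing cluster, charging each edge to the $O(k\cdot n^{1/k})$ clusters containing its endpoints) is fine once property~3 is actually established, since it presupposes that bound.
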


\begin{lemma}[From Connectivity Labels to Approximate Distance Labels]\label{lem:reduction}
Let $G=(V,E, \omega)$ be a weighted undirected $n$-vertex graph where $\omega(e)\in [1,W]$, and let 
$\FTConnLabel: V \cup E \to \{0,1\}^s$ be an $f$-FT connectivity labeling scheme for $G$ with decoding time $t$. Then for every integer $k\geq 1$, there is an $(f,(8k-2)(|F|+1))$ FT approximate distance labeling scheme $\FTDistLabel: V \cup E \to \{0,1\}^{q}$ for $G$, where $q=O(s \cdot k \cdot n^{1/k}\cdot \log (nW))$, and with decoding time $\widetilde{O}(t \log{(nW)})$. 
\end{lemma}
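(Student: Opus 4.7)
The plan is to apply the given FT connectivity scheme to a hierarchy of tree-cover subgraphs at all distance scales, following the transformation of Chechik et al.~\cite{chechik2012f}. For each scale $i \in \{0, 1, \ldots, \lceil \log(nW) \rceil\}$, I would first compute a tree cover $\mathcal{T}_i := \TreeCover(G, \omega, 2^i, k)$ by the proposition above; each tree $T \in \mathcal{T}_i$ then has radius $R_i := (2k-1) \cdot 2^i$, and each vertex participates in $O(k \cdot n^{1/k})$ trees of $\mathcal{T}_i$. For each $T \in \mathcal{T}_i$, I define the subgraph $G_T$ consisting of $T$ together with every edge of $G$ whose two endpoints both lie in $V(T)$ and whose weight is at most $2R_i$; this weight cap is crucial for the stretch analysis. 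The given FT connectivity labeling scheme is then applied to the pair $(G_T, T)$, producing a label $\FTConnLabel_{G_T, T}$ of $s$ bits for each vertex and edge of $G_T$.

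The distance label $\FTDistLabel(v)$ is the concatenation, over all scales $i$ and all trees $T \in \mathcal{T}_i$ containing $v$, of the pair $\langle (i, \ID(T)), \FTConnLabel_{G_T, T}(v) \rangle$; I also include, for each scale $i$, a pointer to a tree $T_{i^*(v)} \in \mathcal{T}_i$ guaranteed by Definition~\ref{def:tree-cover} to satisfy $B_{2^i}(v) \subseteq V(T_{i^*(v)})$. Edge labels are defined analogously. Since each vertex or edge participates in $O(k \cdot n^{1/k})$ trees per scale and there are $O(\log(nW))$ scales, the resulting label length is $O(s \cdot k \cdot n^{1/k} \cdot \log(nW))$, matching the claim.

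Given a query $\langle s,t,F \rangle$, the decoder iterates $i = 0, 1, \ldots, \lceil \log(nW) \rceil$: it reads from $\FTDistLabel(s)$ the pointer to $T_i := T_{i^*(s)}$, extracts the relevant sub-labels of $s$, $t$, and of each edge in $F \cap E(G_{T_i})$, and invokes the FT connectivity decoder on $(G_{T_i}, T_i)$. Let $i^*$ be the smallest index for which the decoder reports that $s$ and $t$ are connected; the scheme returns $\delta(s,t,F) := \Theta(k \cdot (|F|+1)) \cdot 2^{i^*}$, with constants calibrated to yield the factor $(8k-2)(|F|+1)$. For the upper bound on $\delta$, set $d := \dist_{G \setminus F}(s,t)$ and $i' := \lceil \log d \rceil$; the shortest $s$-$t$ path in $G \setminus F$ uses only vertices of $B_{d}(s) \subseteq V(T_{i'})$ and only edges of weight $\leq d \leq 2R_{i'}$, so it lies entirely in $G_{T_{i'}} \setminus F$, forcing $i^* \leq i'$ and hence $\delta \leq O(k(|F|+1)) \cdot 2d \leq (8k-2)(|F|+1) \cdot d$. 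For the lower bound $\delta \geq d$, I invoke Lemma~\ref{lem:useful-recovery-edges}: at scale $i^*$ the connectivity decoder outputs a succinct $s$-$t$ path $\widehat{P}$ with $O(|F|)$ segments, each of which is either a single $G_{T_{i^*}}$-edge of weight $\leq 2R_{i^*}$ or a tree path in $T_{i^*} \setminus F$ of length $\leq 2R_{i^*}$; summing these gives a genuine $s$-$t$ walk in $G \setminus F$ of length at most $\delta$.

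The main obstacle I anticipate is calibrating the definition of $G_T$ so that both directions of the stretch analysis close simultaneously: $G_T$ must be rich enough that every short $G\setminus F$-path from $s$ to a nearby $t$ survives in some $G_{T_i} \setminus F$, yet structured enough that each single-edge segment produced by the connectivity decoder contributes at most $O(R_{i^*})$ to the returned path length. Capping edge weights at $2R_i$ realises both properties at once, since any edge on an $s$-$t$ shortest path of length $\leq d$ automatically satisfies this cap for all $i \geq \log d$. The decoding time is $\widetilde{O}(t \cdot \log(nW))$ because the connectivity decoder is invoked once per scale, and the preprocessing time is dominated by applying the connectivity labeler once to each tree-cover subgraph.
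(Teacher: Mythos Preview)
Your approach is essentially the paper's: apply the connectivity scheme to the tree-cover subgraphs at every scale and return the first scale at which $s$ and $t$ are declared connected. Two points deserve correction, however.

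First, your lower-bound argument appeals to Lemma~\ref{lem:useful-recovery-edges}, but that lemma is a property of the \emph{sketch-based} connectivity scheme of Section~\ref{sec:ftconn-sketch}, not of an arbitrary $f$-FT connectivity labeling scheme as hypothesised in Lemma~\ref{lem:reduction}. A black-box connectivity decoder is only promised to output a single bit. The paper avoids this by arguing purely structurally: once the decoder says $s,t$ are connected in $G_{T_{i^*}}\setminus F$, one observes that $T_{i^*}\setminus F$ has at most $|F|+1$ components, each of diameter at most $(4k-2)2^{i^*}$, and that these components can be stitched together by at most $|F|$ non-faulty edges of $G_{T_{i^*}}$. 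This yields an actual $s$--$t$ path in $G\setminus F$ of the required length without ever asking the decoder to produce one.

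Second, your edge-weight cap of $2R_i = (4k-2)2^i$ is too loose to recover the stated constant $(8k-2)(|F|+1)$: with that cap, each of the $|F|$ connecting edges can contribute up to $(4k-2)2^{i^*}$, so the total path length is $(2|F|+1)(4k-2)2^{i^*}$, and no ``calibration of constants'' will bring this down to $(4k-1)(|F|+1)2^{i^*}$. The paper instead removes all edges of weight at least $2^i$ \emph{before} building the $i$th tree cover (i.e., $\TreeCover_i$ is computed on $G\setminus H_i$) and takes $G_{i,j}$ to contain only such light edges; then each connecting edge contributes at most $2^{i^*}$, giving $(|F|+1)(4k-2)2^{i^*} + |F|\cdot 2^{i^*} \le (4k-1)(|F|+1)2^{i^*}$, and the upper bound $2^{i^*}\le 2\dist_{G\setminus F}(s,t)$ yields exactly $(8k-2)(|F|+1)$.
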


\paragraph{The labeling algorithm.}
For every vertex $u$, the label $\FTDistLabel(u)$ consists of $K=\log(nW)$ sub-labels of FT connectivity labels in distinct subgraphs of $G$ defined as follows. The $i^{th}$ sub-label addresses all distances that are at most $2^i$ in $G$. Let $H_i$ be set of heavy edges in $G$ of weight at least $2^i$, and define the $i^{th}$ tree-cover by 
\begin{equation}\label{eq:TC-i}
\TreeCover_i=\TreeCover(G\setminus H_i,\omega, 2^i,k)~.
\end{equation}
For each tree $T_{i,j} \in \TreeCover_i$, the algorithm applies the FT connectivity scheme on the graph $G_{i,j}=G[V(T_{i,j})]$. For every vertex $u$ and $i \in \{1,\ldots, K\}$, let $i^*(u)$ be an index of a tree in $\TreeCover_i$ that covers the $2^i$-ball of $u$. I.e., $B_{2^i}(v) \subseteq T_{i,i^*(u)}$. 
The label of every $u \in V$ is then given by:
$$\FTDistLabel(u)=\{\langle \FTConnLabel_{G_{i,j},T_{i,j}}(u), i,j \rangle ~\mid~ i \in [1,K], j \in \{1,\ldots, |\TreeCover_i|\}, u \in G_{i,j}\} \bigcup \{i^*(u) ~\mid~ i \in [1,K]\}~.$$

Similarly, the label of each edge $e \in G$ contains the FT connectivity label of $e$ in each of the instances $(G_{i,j}, T_{i,j})$:
$$\FTDistLabel(e)=\{\langle \FTConnLabel_{G_{i,j},T_{i,j}}(e), i,j \rangle ~\mid~ i \in [1,K], j \in \{1,\ldots, |\TreeCover_i|\}, e \in G_{i,j}\}.$$ 

The time for assigning the labels is the time for constructing the tree cover and computing the indexes $i^{*}(v)$, and the time for assigning the connectivity labels on each one of the trees. The first part requires polynomial time. The second depends on the connectivity labels. For example, using our scheme from Section \ref{sec:ftconn-sketch} the time complexity of the second part is $\widetilde{O}(mn^{1/k})$, as it is linear in the total number of vertices and edges in the trees. 


\paragraph{The decoding algorithm.}
Consider the query $\langle s,t,F\rangle$. 
The algorithm has $K$ phases, in each phase $i \in [1,K]$ the decoding algorithm of the FT connectivity labels is applied on the instance $G_{i,i^*(s)}, T_{i,i^*(s)}$ where $G_{i,i^*(s)}$ contains the $2^i$ ball of $s$ in $G$. 
If $t \notin G_{i,i^*(s)}$, the phase $i$ ends and we continue to phase $i+1$. 
Otherwise, the algorithm decides if $s$ and $t$ are connected in $G_{i,i^*(s)} \setminus F$ in the following manner. 
Let $F_i=F \cap G_{i,i^*(s)}$, this subset of edges can be obtained from the labels of the $F$ edges. 
Since the labels of $s,t$ and $F_i$ contain the FT connectivity labels in the subgraph $G_{i,i^*(s)}$ and the tree $T_{i,i^*(s)}$, the algorithm can apply the decoding algorithm of the FT connectivity scheme.
If $s$ and $t$ are indeed connected in $G_{i,i^*(s)}\setminus F_i$, the algorithm returns the estimate $\delta_{G \setminus F}(s,t)= (4k-1) \cdot (|F|+1) \cdot 2^{i}$. Otherwise, it proceeds to the next phase.

Overall, let $i$ be the minimum index in $\{1,\ldots, K\}$ for which $s$ and $t$ are connected in the subgraph $G_{i,i^*(s)} \setminus F$. Then the decoding algorithm returns the distance estimate $\delta_{G \setminus F}(s,t)=(4k-1) \cdot (|F|+1) \cdot 2^{i}$. If no such $i$ exists, the decoding algorithm returns $\delta_{G \setminus F}(s,t)=\infty$, which implies that $s$ and $t$ are not connected in $G \setminus F$. 

The decoding time is $\widetilde{O}(t \log{(nW)})$, where $t$ is the decoding time of the connectivity labels, as we use the decoding algorithm of the connectivity labels $K$ times on the graphs $G_{i,i^*(s)}$. To obtain this, we need to make sure that given the labels of $s,t,F$ we can easily find their connectivity label in the graph $G_{i,i^*(s)}$ if exist. This can be easily done if we store the connectivity labels in a sorted order. 

\paragraph{Analysis.}
We now analyze the construction, and start by bounding the size of the labels. By the properties of the tree-cover in Def. \ref{def:tree-cover}, each vertex appears in $O(K \cdot k \cdot n^{1/k})$ subgraphs. Thus, $\FTDistLabel(u)$consists of $O(K \cdot k n^{1/k})$ FT connectivity labels and the label size is bounded by $O(K \cdot k n^{1/k} \cdot s)$ bits, as desired. Next, we show correctness. By the correctness of the FT connectivity labeling scheme, it is sufficient to show the following. Let $P_{s,t,F}$ be an $s$-$t$ shortest path in $G \setminus F$ of length $(2^{i-1}, 2^{i}]$. By the properties of the tree cover, there is a tree $T_{i,i^*(s)} \in \TreeCover_i$ that contains all the vertices of the path $P_{s,t,F}$. Therefore, we have that $s$ and $t$ are connected in $G_{i,i^*(s)}\setminus F$. Since the labels of $s, t$ and $F_i=F \cap G_{i,i^*(s)}$ contain the FT connectivity labels in $G_{i,i^*(s)}$, we get that the distance estimate returned by the algorithm satisfies that
$$\dist_{G \setminus F}(s,t)\leq \delta_{G \setminus F}(s,t) \leq (4k-1)(|F|+1)  \cdot 2^i \leq (8k-2)(|F|+1) \cdot \dist_{G \setminus F}(s,t)~.$$
To see this, let $j \leq i$ be the first index such that $s$ and $t$ are connected in $G_{j,j^*(s)}\setminus F$. The algorithm returns the estimate $(4k-1)(|F|+1) \cdot 2^j \leq (4k-1)(|F|+1) \cdot 2^i = (8k-2)(|F|+1) \cdot 2^{i-1} \leq (8k-2)(|F|+1) \cdot \dist_{G \setminus F}(s,t)$. To prove the left inequality, we show that if $s$ and $t$ are connected in $G_{j,j^*(s)}\setminus F$, there is indeed a path between them in $G \setminus F$ of length at most $\delta_{G \setminus F}(s,t) = (4k-1)(|F|+1) \cdot 2^j$. First, from the tree cover properties, the radius of $T_{j,j^*(s)}$ is at most $(2k-1)2^j$, implying that any two vertices in $T_{j,j^*(s)}$ are at distance at most $(4k -2) \cdot 2^j$ from each other. Now the graph $T_{j,j^*(s)} \setminus F$ has at most $|F|+1$ connected components. Since $G_{j,j^*(s)}\setminus F$ is connected, it implies that there is a path between $s$ and $t$ in $G_{j,j^*(s)}\setminus F$. This path traverses at most $|F|+1$ different components in $T_{j,j^*(s)} \setminus F$, and at most $|F|$ edges connecting them, each one of weight at most $2^j$. As the diameter of each component is bounded by $(4k - 2) \cdot 2^j$, the length of the path is at most $(4k - 2) \cdot 2^j \cdot (|F|+1) + 2^j \cdot |F| \leq (4k - 1) \cdot 2^j \cdot (|F| + 1)$, as needed.

\remove{
Let $\mathcal{T}=\bigcup_{i=1}^{K}\TreeCover_i$ be the collection of tree covers with $K=O(\log (nW))$ scales of distances. We call an edge $(u,v)$ a \emph{tree edge} if it appears on at least one of the trees in $\mathcal{T}$. 
Using Lemma \ref{lem:useful-recovery-edges}, we have the following decoding algorithm, which becomes useful in the context of routing schemes, as described in the next section. \mtodo{there could be a problem with treating edges globally as tree edges or non-tree edges, I think for each edge $e \in G_{i,j}$ we need to know its label in this graph (either tree or non-tree label), extended identifiers are also different for different trees because of the ancestry labels.}
\begin{lemma}\label{lem:approx-dist-recovery}
Consider the $(f,(8k-2)(f+1))$ approximate distance labels $\FTDistLabel$ obtained by using Lemma \ref{lem:reduction} with the FT connectivity labeling scheme of Sec. \ref{sec:ftconn-sketch}. For any triplet $s,t,F \subseteq E$ and $|F|\leq f$, let $F_T$ be the tree edges of $F$. Then, given the labels $\{\FTDistLabel(w), w \in \{s,t\} \cup F_T\}$ and the extended edge identifiers of $F \setminus F_T$, the decoding algorithm can be modified to return a labeled $s$-$t$ path $\widehat{P}$ of length $O(f)$ that provides a succinct description of an $s$-$t$ path in $G \setminus F$, along with indices $i,j$. Each $G$-edge $e$ of $\widehat{P}$ is augmented with port information and the extended identifier of $e$; and each non-$G$ edge $e'=(u,v)$ corresponds to a $u$-$v$ path in $T_{i,j} \setminus F$. In addition, the length of the $s$-$t$ path encoded by $\widehat{P}$ is bounded by $(8k-2)(|F|+1)\cdot \dist_{G\setminus F}(s,t)$. 
\end{lemma}
\begin{proof}

\end{proof}
}
 
\section{Compact Routing Schemes}
In this section, we explain how to use our FT distance labels to provide compact and low stretch routing schemes. This is the first scheme to provide an almost tight tradeoff between the space and the multiplicative stretch, for a constant number of faults $f=O(1)$.  Throughout this section, tree routing operations are performed by using the tree routing scheme of Thorup and Zwick \cite{thorup2001compact}.
\begin{fact}\label{fc:route-trees}[Routing on Trees]\cite{thorup2001compact}
For every $n$-vertex tree $T$, there exists a routing scheme that assigns each vertex $v \in V(T)$ a label $L_T(v)$ of $(1+o(1))\log n$ bits. Given the label of a source vertex
and the label of a destination, it is possible to compute, in constant time, the port number of the edge from the source that heads in the direction of the destination.
\end{fact}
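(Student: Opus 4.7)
The plan is to obtain the scheme via a \emph{heavy-path decomposition} of $T$, following the approach of Thorup and Zwick. Root $T$ arbitrarily, and at each internal vertex $v$ declare the edge to the child with the largest subtree to be \emph{heavy} and all other edges to be \emph{light}. The heavy edges partition $V(T)$ into vertex-disjoint \emph{heavy paths} $P_1,P_2,\ldots$, and along any root-to-leaf path one traverses at most $\lfloor\log n\rfloor$ light edges, since each light edge more than halves the size of the current subtree.

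Next I would design the labels. Perform a DFS that traverses heavy children first, thereby assigning each vertex an interval $[\alpha(v),\beta(v)]\subseteq[1,n]$ such that $u$ is an ancestor of $v$ iff $[\alpha(v),\beta(v)]\subseteq[\alpha(u),\beta(u)]$. The crucial structural property is that each heavy path occupies a contiguous DFS interval, and each light child of $v$ launches its own contiguous sub-interval within $[\alpha(v),\beta(v)]$. The label $L_T(v)$ then stores $\alpha(v)$, $\beta(v)$, and, for each of the at most $\log n$ ancestors of $v$ that lie at the top of a heavy path on the root-to-$v$ path, a short entry recording the port number used to leave that heavy path toward $v$. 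By Huffman/elias-type prefix coding of these port numbers together with the observation that the total number of ports distributed along the root-to-$v$ path is at most $n$, one can squeeze the per-vertex label down to $(1+o(1))\log n$ bits; this is the delicate encoding step in \cite{thorup2001compact}.

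The routing rule from $s$ toward $t$ is then a single $O(1)$-time look-up. Using $\alpha,\beta$ in $L_T(s)$ and $L_T(t)$, test whether $t$ lies in the subtree rooted at $s$. If not, forward to the parent via a stored port. If yes, locate (by interval containment, which reduces to comparing two numerical intervals) the unique child subtree of $s$ containing $t$, and read the corresponding port number directly out of the label's port table. Because the port information is indexed by heavy-path position, the right entry is found in constant time without scanning the whole label.

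The main obstacle is not the correctness or the $O(1)$ decoding, both of which follow transparently from the heavy-path decomposition, but rather squeezing the label length down to $(1+o(1))\log n$ bits rather than the naive $O(\log^2 n)$ one gets from storing $\log n$ separate port numbers explicitly. This is handled by a global renaming scheme that assigns heavy-path identifiers and port numbers in a way whose total entropy across all vertices is $O(n\log n)$ bits, so that the amortized per-vertex cost is $\log n+o(\log n)$; the details of this encoding are precisely the technical contribution of \cite{thorup2001compact} that we are invoking as a black box.
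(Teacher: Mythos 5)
The paper offers no proof of this statement at all: it is imported verbatim as a black-box fact from Thorup and Zwick, and even the paper's own later refinement (Claim~\ref{cl:route-trees-port}) falls back on the \emph{simpler} $O(\log^2 n)$-bit heavy-light variant rather than the $(1+o(1))\log n$ one. So your decision to ultimately invoke \cite{thorup2001compact} for the delicate encoding is consistent with how the paper itself treats the fact, and your outline (heavy-path decomposition, DFS intervals for ancestry, ports of light edges carried in the destination's label) is the right family of ideas.

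That said, the sketch as written would not converge to the claimed bound if one tried to flesh it out, and it is worth seeing why. First, storing both endpoints $\alpha(v)$ and $\beta(v)$ already costs $2\log n$ bits; in the actual scheme the destination's label carries essentially a single (padded) DFS number, and the interval of the \emph{current} vertex is part of its own local information, not of the transmitted label. Second, the entropy argument for the port list does not go through: with the largest-child decomposition a root-to-$v$ path can contain $\Theta(\log n)$ light edges (e.g., in a balanced binary tree), each contributing at least one bit, so even an optimal prefix-free encoding of the port sequence costs $\Theta(\log n)$ additional bits, landing you at $(2+o(1))\log n$ rather than $(1+o(1))\log n$; the bound ``total ports along the path is at most $n$'' only gives $\sum_i \log(\mathrm{port}_i) = O(\log^2 n)$ in general. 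Thorup and Zwick avoid this by using a coarser heavy/light threshold (a child is light only if its subtree is a $1/b$ fraction of its parent's, so there are $O(\log_b n)$ light edges), by exploiting the designer-port freedom to give heavy children small port numbers, and by embedding the residual routing information into gaps of a padded DFS numbering over a range of size $n^{1+o(1)}$ --- which is also what makes the $O(1)$ decoding honest, since it becomes integer arithmetic on one word rather than a scan of a variable-length coded list. None of this is fatal to your write-up, since you explicitly delegate the encoding to the cited reference, but the specific compression mechanism you gesture at is not the one that works.
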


We slightly modify the connectivity label of the edges and vertices by augmenting them with routing information. 
First, we augment the extended identifier of an edge (see Eq. (\ref{eq:extend-ID})) with port information and tree routing information, by having:
\begin{equation}\label{eq:edge-extended-routing}
\EID_T(e)=[\UID(e), \ID(u), \ID(v), \LCALabel_T(u), \LCALabel_T(v), \port(u,v), \port(v,u), L_T(u), L_T(v)]~,
\end{equation}
where $\port(u,v)$ is the port number of the edge $(u,v)$ for $u$, and the labels $L_T(u), L_T(v)$ are the tree routing labels taken from Fact \ref{fc:route-trees}. 
We then slightly modify the connectivity label of Eq. (\ref{eq:conn-vertex}) to include also the tree label $L_T(u)$from Fact \ref{fc:route-trees}, by defining 
\begin{equation}\label{eq:conn-vertex-label-routing}
\FTConnLabel_{G,T}(u)=\langle \LCALabel_T(u), \ID(u), L_T(u)\rangle~.
\end{equation}

%
%
%
%
%
Throughout this section, when applying the connectivity labels from Section \ref{sec:ftconn-sketch} on a graph $G$ with a spanning tree $T$, we use these modified extended identifiers and labels. This will also be the basis for the application of the distance labels of Section \ref{sec:ft-distance}. 
Similarly to the distance labels of Section \ref{sec:ft-distance}, we will apply the connectivity labels with respect to the different trees of the tree cover as discussed in Section \ref{sec:ft-distance}. 
Let $T_{i,j} \in \TreeCover_i$, recall that $G_{i,j}=G[V(T_{i,j})]$ and that $\mathcal{T}=\bigcup_{i=1}^K \TreeCover_i$ for $K=O(\log (nW))$. 

\begin{lemma}\label{lem:succint_path_routing}
Consider a triplet $s,t,F$ such that $s,t,F \in G_{i,j}$. \\
Given the connectivity labels $\{\FTConnLabel_{G_{i,j},T_{i,j}}(w)\}_{w \in F \cup \{s,t\}}$, we can determine w.h.p if $s$ and $t$ are connected in $G_{i,j} \setminus F$. If they are connected, we can output a labeled $s$-$t$ path $\widehat{P}$ of length $O(f)$ that provides a succinct description of the $s$-$t$ path in $G_{i,j} \setminus F$. The edges of $\widehat{P}$ are labeled by $0$ and $1$, where $0$-labeled edges correspond to $G_{i,j}$-edges and $1$-labeled edges $e=(x,y)$ correspond to $x$-$y$ paths in $T_{i,j} \setminus F$. For each $G_{i,j}$-edge, the succinct path description has the port information of the edge, and for each $x-y$ path, the description has the tree routing labels $L_{T_{i,j}}(x),L_{T_{i,j}}(y)$.
The length of the $s$-$t$ path encoded by $\widehat{P}$ is bounded by $(4k-1)(|F|+1)\cdot 2^i$. 
\end{lemma}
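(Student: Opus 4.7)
The plan is to invoke the sketch-based FT connectivity labeling scheme of Section~\ref{sec:ftconn-sketch} verbatim on the instance $(G_{i,j},T_{i,j})$. Since the labels $\{\FTConnLabel_{G_{i,j},T_{i,j}}(w)\}_{w \in F \cup \{s,t\}}$ are given by hypothesis, the four-step decoding algorithm of Section~\ref{sec:dec-alg} runs as-is: it identifies the components $\mathcal{C}_0$ of $T_{i,j} \setminus F$, reconstructs their sketches in $G_{i,j}$, subtracts the contribution of the faulty edges to obtain sketches in $G_{i,j} \setminus F$, and then simulates $O(\log n)$ Boruvka phases. This determines $s$-$t$ connectivity in $G_{i,j} \setminus F$ w.h.p. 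If $s$ and $t$ are connected, Lemma~\ref{lem:useful-recovery-edges} (applied to this instance) yields the labeled path $\widehat{P}$ of length $O(f)$, whose $0$-labeled edges are the recovery edges discovered via the sketches during the Boruvka simulation, and whose $1$-labeled edges $(x,y)$ join vertices lying in a common component of $T_{i,j}\setminus F$.

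It remains to argue that $\widehat{P}$ carries the required routing annotations. This is where the augmentation of Eqs.~(\ref{eq:edge-extended-routing}) and (\ref{eq:conn-vertex-label-routing}) is used: each recovery edge appears in the decoder only via its extended identifier $\EID_{T_{i,j}}(\cdot)$, which now bundles the port numbers $\port(u,v),\port(v,u)$ and the tree routing labels $L_{T_{i,j}}(u),L_{T_{i,j}}(v)$, so these data ride along automatically with every $0$-labeled edge of $\widehat{P}$. For a $1$-labeled edge $(x,y)$, each endpoint arises either as one of $\{s,t\}$, as an endpoint of a faulty tree edge in $F_T$, or as an endpoint of a recovery edge; in all three cases the tree routing label $L_{T_{i,j}}$ is contained either in the augmented vertex label (\ref{eq:conn-vertex-label-routing}) or inside the extended identifier (\ref{eq:edge-extended-routing}) that was already collected by the decoder. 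Hence $\widehat{P}$ can be emitted together with the claimed routing annotations.

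For the length bound, recall that $T_{i,j}$ is drawn from $\TreeCover(G \setminus H_i,\omega,2^i,k)$ where $H_i$ consists of edges of weight at least $2^i$. The path $\widehat{P}$ traverses at most $|F|+1$ tree-subpaths inside distinct components of $T_{i,j}\setminus F$, connected by at most $|F|$ recovery edges. By Definition~\ref{def:tree-cover}, the radius of $T_{i,j}$ is at most $(2k-1)\cdot 2^i$, so every tree-subpath has weight at most $(4k-2)\cdot 2^i$; and since the recovery edges live in $G \setminus H_i$, each has weight at most $2^i$. Summing:
\[
  (4k-2)\cdot 2^i \cdot (|F|+1) \;+\; 2^i \cdot |F| \;\le\; (4k-1)(|F|+1)\cdot 2^i,
\]
which matches the claimed stretch.

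I expect the main subtlety, rather than any calculation, to be verifying that every endpoint $x,y$ arising in the $1$-labeled edges of $\widehat{P}$ genuinely has its $L_{T_{i,j}}$ label reachable from the inputs supplied to the decoder. This must be checked against the actual construction of $\widehat{P}$ in Lemma~\ref{lem:useful-recovery-edges}: the endpoints produced there are $s$, $t$, and the endpoints of the recovery $G$-edges, all of which carry the augmented data; the faulty tree edges in $F_T$ are only used to identify components via $\LCALabel_{T_{i,j}}$, so no extra information needs to be pulled from anywhere else. Once this bookkeeping is confirmed, the lemma follows.
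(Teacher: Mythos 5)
Your proposal is correct and follows essentially the same route as the paper: run the Section~\ref{sec:dec-alg} decoder on $(G_{i,j},T_{i,j})$, obtain $\widehat{P}$ via Lemma~\ref{lem:useful-recovery-edges}, observe that every vertex of $\widehat{P}$ is $s$, $t$, or an endpoint of a discovered recovery edge so the augmented labels of Eqs.~(\ref{eq:edge-extended-routing}) and (\ref{eq:conn-vertex-label-routing}) supply all port and tree-routing data, and reuse the $(4k-2)2^i(|F|+1)+2^i|F|\le(4k-1)(|F|+1)2^i$ stretch computation from Section~\ref{sec:ft-distance}. Your closing bookkeeping check (that the $1$-labeled endpoints are exactly $s$, $t$, and recovery-edge endpoints) is precisely the point the paper's proof also verifies.
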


\begin{proof}
The proof follows the proof of Lemma \ref{lem:useful-recovery-edges}.
Using $\{\FTConnLabel_{G_{i,j},T_{i,j}}(w)\}_{w \in F \cup \{s,t\}}$, the decoding algorithm of Section \ref{sec:ftconn-sketch} determines if $s$ and $t$ are connected in $G_{i,j} \setminus F$. If they are connected, then from Lemma \ref{lem:useful-recovery-edges}, we get a succinct description of the $s$-$t$ path in $G_{i,j} \setminus F$. We next show that the algorithm indeed has the relevant port and tree routing information. For this note that all the vertices in the path $\widehat{P}$ obtained by Lemma \ref{lem:useful-recovery-edges} are either $s$ and $t$ or endpoints of the $|F|$ recovery edges found in the algorithm. The labels of $s$ and $t$ contain the tree routing information $L_{T_{i,j}}(s)$ and $L_{T_{i,j}}(t)$, and when the algorithm finds a recovery edge, it learns about its extended id $\EID_{T_{i,j}}(e)$ that has the port information and tree routing information of its endpoints. Any $G_{i,j}$-edge in $\widehat{P}$ is a recovery edge, hence the algorithm has its port information, and for any $x$-$y$ path in $T_{i,j} \setminus F$, the algorithm has the tree routing labels $L_{T_{i,j}}(x),L_{T_{i,j}}(y)$, as needed.
The stretch analysis follows the stretch analysis in Section \ref{sec:ft-distance}. It is based on the fact that $\widehat{P}$ has as most $|F|+1$ subpaths in $T_{i,j} \setminus F$, each of length at most $(4k-2)2^i$, and at most $|F|$ recovery edges of weight at most $2^i$.
\end{proof}

\subsection{Forbidden Set Routing (Faulty Edges are Known)}\label{sec:routing-known}
We start by describing the routing scheme in the forbidden set setting, where the faulty edges $F$ are known to the source vertex $s$. We show the following.

\begin{theorem}\label{thm:routing-known}[Forbidden-Set Routing]
For every integers $k,f$, there exists an $f$-sensitive compact routing scheme that given a message $M$ at the source vertex $s$, a label of the destination $t$, and labels of at most $f$ forbidden edges $F$ (known to $s$), routes $M$ from $s$ to $t$ in a distributed manner over a path of length at most $(8k-2)(|F|+1)\cdot \dist_{G \setminus F}(s,t)$. The table size of each vertex is bounded by $\widetilde{O}(n^{1/k} \log{(nW)})$. The header size of the messages is bounded by $\widetilde{O}(f)$ bits. The labels of vertices and edges have size $\widetilde{O}(n^{1/k} \log(nW))$.
\end{theorem}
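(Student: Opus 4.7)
The plan is to combine the sketch-based FT connectivity labels of Section~\ref{sec:ftconn-sketch} with the tree-cover reduction of Section~\ref{sec:ft-distance} and the tree routing scheme of Thorup--Zwick (Fact~\ref{fc:route-trees}), exploiting the fact that in the forbidden-set setting the source $s$ knows $F$ upfront and can therefore precompute the entire succinct $s$-$t$ path description before injecting the message into the network. The succinct-path lemma (Lemma~\ref{lem:succint_path_routing}) essentially does all the heavy lifting: the remaining work is to package everything into compact labels, tables and headers, and to argue that each intermediate vertex can forward with only its own local state.

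For preprocessing, I would construct, for each scale $i\in\{1,\ldots,K\}$ with $K=O(\log(nW))$, the tree cover $\TreeCover_i=\TreeCover(G\setminus H_i,\omega,2^i,k)$ of Eq.~(\ref{eq:TC-i}). For every $T_{i,j}\in\TreeCover_i$ I would apply the labeling algorithm of Section~\ref{sec:ftconn-sketch} to $G_{i,j}=G[V(T_{i,j})]$ with spanning tree $T_{i,j}$, using the \emph{augmented} extended identifier of Eq.~(\ref{eq:edge-extended-routing}) and the augmented vertex label of Eq.~(\ref{eq:conn-vertex-label-routing}), both of which already embed port numbers and TZ tree-routing labels. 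The label of a vertex $u$ then aggregates the scale indices $\{i^*(u)\}_{i=1}^K$ together with the connectivity labels $\FTConnLabel_{G_{i,j},T_{i,j}}(u)$ for every $(i,j)$ with $u\in G_{i,j}$; the label of an edge aggregates its connectivity labels over the trees containing it. By Def.~\ref{def:tree-cover}, each vertex/edge appears in $\widetilde{O}(n^{1/k}\log(nW))$ instances, which controls both the label length and the routing table.

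For routing, given the (known) faulty labels of $F$ the source $s$ scans $i=1,2,\ldots,K$ and picks the smallest index such that $t\in G_{i,i^*(s)}$ and the decoding algorithm of Section~\ref{sec:ftconn-sketch}, run on the connectivity labels of $s$, $t$ and $F$ in instance $(G_{i,i^*(s)},T_{i,i^*(s)})$, declares $s$ and $t$ connected. By Lemma~\ref{lem:succint_path_routing} this yields a labeled succinct path $\widehat P$ of length $O(f)$, annotated at each edge either with a port (for real $G_{i,j}$-edges) or with a pair of TZ tree labels $L_{T_{i,j}}(x),L_{T_{i,j}}(y)$ (for intact tree-path segments in $T_{i,j}\setminus F$). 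I would place $\widehat P$, the indices $(i,j)$, and a small segment pointer into the header, yielding a $\widetilde{O}(f)$-bit header. Forwarding is then mechanical: an intermediate vertex $v$ reads the current segment from the header; if it is a real edge, $v$ pushes through the listed port; if it is a tree segment with endpoint label $L_{T_{i,j}}(y)$, $v$ computes the next-hop port in $T_{i,j}$ in constant time via Fact~\ref{fc:route-trees} using its locally stored TZ label for $T_{i,j}$. When $v$ detects it has reached the segment endpoint (by comparing its own label to the one in the header), it advances the segment pointer.

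For the analysis, the stretch calculation mirrors Section~\ref{sec:ft-distance}: the chosen $i$ satisfies $2^{i-1}<\dist_{G\setminus F}(s,t)\leq 2^i$ because the tree cover ball-covering property guarantees that at this scale the shortest $s$-$t$ path in $G\setminus F$ lives entirely inside $G_{i,i^*(s)}$, so the length routed is at most $(4k-1)(|F|+1)\,2^i\leq (8k-2)(|F|+1)\,\dist_{G\setminus F}(s,t)$ by Lemma~\ref{lem:succint_path_routing}. The header accommodates the $O(f)$ segments of $\widehat P$ with polylogarithmic annotations, giving $\widetilde{O}(f)$ bits. Each vertex $u$ must store, for every tree it belongs to, only its TZ tree-routing label and its incident tree-edge ports, yielding $\widetilde{O}(n^{1/k}\log(nW))$ per vertex. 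The main technical obstacle I anticipate is the per-vertex bookkeeping for the intermediate forwarding on tree-path segments: no intermediate vertex holds the connectivity sketches or the global $\widehat P$-construction state, so it is crucial that the header carries enough information (the endpoint tree label $L_{T_{i,j}}(y)$ and the pair $(i,j)$) to let Fact~\ref{fc:route-trees} alone determine the next hop. A secondary subtlety is that the tree-cover trees $T_{i,j}$ are over $G\setminus H_i$ rather than $G$, so I must be careful that heavy edges of $F$ never appear as tree edges that need to be ``restored''; this is handled automatically because the sketch-based decoding treats $F$ uniformly via the extended-identifier cancellation of Step~3 in Section~\ref{sec:dec-alg}.
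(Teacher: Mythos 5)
Your proposal is correct and follows essentially the same route as the paper: apply the sketch-based connectivity labels (augmented with port and Thorup--Zwick tree-routing information) to each tree-cover instance $G_{i,j}$, have $s$ precompute the succinct path $\widehat P$ of Lemma~\ref{lem:succint_path_routing} at the smallest connected scale, carry $\widehat P$ with a segment pointer in the header, and forward segment-by-segment using ports and tree routing. The only loose point is your claim that the chosen scale $i$ satisfies $2^{i-1}<\dist_{G\setminus F}(s,t)\leq 2^i$ --- the algorithm may succeed at an even smaller scale, but that only improves the bound, so the stated stretch still holds.
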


\begin{proof}
The algorithm is based on the distance labels from Section \ref{sec:ft-distance} using the slightly modified connectivity labels (augmented with port and tree roting information). Recall that the distance labels are based on applying fault-tolerant connectivity labels on different graphs $G_{i,j}$, we use the slightly modified connectivity labels and the corresponding distance labels. 
The routing table of each vertex $u$ consists of its distance label $\FTDistLabel(u)$. The label of an edge $e$ is $\FTDistLabel(e)$. Each distance label has $\widetilde{O}(n^{1/k} \log(n W))$ bits. 

In the routing algorithm, the vertex $s$ is given the label $\FTDistLabel(t)$, and the labels $\{\FTDistLabel(e)\}_{e \in F}$, and it needs to route a message to $t$ in the graph $G \setminus F$. 
Recall that the algorithm from Section \ref{sec:ft-distance} works in $K$ phases, where in phase $i$ it checks if $s$ and $t$ are connected in the graph $G_{i,i^*(s)} \setminus F$ that contains the $2^i$-ball around $s$. Let $i$ be the first iteration where $s$ and $t$ are connected in $G_{i,i^*(s)} \setminus F$ according to the algorithm, and denote $G_i = G_{i,i^*(s)},T_i = T_{i,i^*(s)}$, and let $F_i = F \cap G_i$. The algorithm can also give a succinct description of an $s$-$t$ path in $G_i \setminus F_i$ following Lemma \ref{lem:succint_path_routing}. For this, note that we indeed have all the required information. The distance labels of edges in $F$ in particular contain the labels $\{\FTConnLabel_{G_i,T_i}(e)\}_{e \in F_i}$, and we can also tell which edges of $F$ are in $G_i$ from the labels. Also, the labels of $s,t$ contain the information $\ID_{T_i}(s),\ID_{T_i}(t)$ if they are both in $T_i$ (otherwise, they are not connected in level $i$).

The path $\widehat{P}$ as described in Lemma \ref{lem:succint_path_routing} is composed of $O(|F|)$ parts, where segment $(x,y)$ in the path corresponds either to an edge in $G_i$ or to a tree path in $T_i \setminus F$, it also has the relevant port and tree routing information. Our goal is to route a message according to this path. For this we add to the header of the message the description of $\widehat{P}$, the indexes $(i,i^*(s))$ of the tree we explore and an index $1 \leq q \leq 2|F|+1$ that represents the segment of $\widehat{P}$ we currently explore, initially $q=1$. Overall, the header size is $\widetilde{O}(f)$. To route a message according to the path, we work as follows. The header specifies the current segment in $\widehat{P}$. If the current segment corresponds to an edge $(x,y) \in G$, then $x$ uses the port information to route the message to $y$ and increases the index $q$. Otherwise, the current segment represents a tree path $(x,y) \in T_i$ and a vertex $u$ in this path uses its routing label in $T_i$ and the routing label of $y$ in $T_i$ (that is part of the header) to route the message towards $y$. When the message reaches $y$, it increases the index $q$. This completes the description of the routing process. The length of the path described is at most $(8k-2)(|F|+1)\cdot \dist_{G\setminus F}(s,t)$, as shown in Section  \ref{sec:ft-distance}. 
\end{proof}


\subsection{Fault-Tolerant Routing (Faulty Edges are Unknown)}\label{sec:route-unknown}
We now consider the more involved setting where the set of failed edges $F$ are unknown to $s$. In this case, an edge $(u,v) \in F$ is detected only when the message arrives, during the routing procedure, to one of the endpoints of $e$. Note that the routing scheme should, by definition, be prepared to any set of faulty edges $F$. However, the space bound of our scheme is required to be bounded by $\widetilde{O}(f n^{1+1/k})$, which is possibly much smaller than the number of graph edges $m$. This in particular implies that we cannot store the FT distance labels of all the graph edges. Nevertheless, we show that it is sufficient to explicitly store the labeling information for the tree edges in $\mathcal{T}=\bigcup_{i=1}^K \TreeCover_i$. The required information for the failed non-tree edges would be revealed throughout the process, by applying the decoding algorithm of Lemma \ref{lem:succint_path_routing}.
Our routing scheme eventually routes the message along the $s$-$t$ path encoded by the FT distance labels of $s,t$ and $F$. However, since the labels of $F$ are unknown in advance, the routing scheme will detect these edges in a trail and error fashion which induces an extra factor of $f$ in the final multiplicative stretch. This extra $f$ factor is also shown to be essential, in the end of the section.
We proceed by describing the routing tables.  

\paragraph{The routing labels and tables.} For ease of presentation, we first describe a solution with a multiplicative stretch of $O(kf^2)$, and \emph{global} space of $\widetilde{O}(f K \cdot n^{1+1/k})$, but the individual tables of some of the vertices might be large. We later on improve the space of each table to $\widetilde{O}(f^3 K \cdot n^{1/k})$ bits.

Recall that $\mathcal{T}=\bigcup_i^{K} \TreeCover_i$, for $K=O(\log (nW))$ is a collection of tree covers in all $K=\lceil \log (nW) \rceil$ distance scales, see Eq. (\ref{eq:TC-i}). For every vertex $v$, let $\deg_{\mathcal{T}}(v)=\sum_{T_{i,j} \in \mathcal{T}}\deg(u,T_{i,j})$ be the sum of degrees of $u$ in the collection of trees  $\mathcal{T}$. Recall that $G_{i,j}=G[V(T_{i,j})]$.
For the routing we apply the FT connectivity labels on the graphs $G_{i,j}$, similarly to Section \ref{sec:ft-distance}. 
\\ \\
\noindent \textbf{Routing labels.} The routing process uses at most $f'=f+1$ independent applications of randomized FT connectivity labels from Section 
\ref{sec:ftconn-sketch}, applied on each one of the graphs $G_{i,j}$. 
In more details, when we apply the labeling scheme on the graph $G_{i,j}$ with spanning tree $T_{i,j}$, we use $f'$ independent random seeds $\mathcal{S}_h$ to determine the randomness of the sketches. 
However, the seed $\mathcal{S}_{ID}$ used to determine the extended ids of edges in $G_{i,j}$ is fixed in the $f'$ applications, hence the extended identifiers of the edges (see Eq. (\ref{eq:extend-ID})) are fixed in all the $f'$ applications, and we only use fresh randomness to compute the sketch information using $f'$ independent seeds $\mathcal{S}^1_h,\ldots, \mathcal{S}^{f'}_h$. 
This process is done independently on each one of the graphs $G_{i,j}$. 

Denote the output connectivity labels obtained by the ${\ell}^{th}$ application of the scheme (using $\mathcal{S}^\ell_h$) on the graph $G_{i,j}$ by $\FTConnLabel^{\ell}_{G_{i,j},T_{i,j}}(w)$ for every $w \in E(G_{i,j})\cup V(G_{i,j})$. For every edge $e \in G_{i,j}$, define its $T_{i,j}$ routing label by
\begin{equation}\label{eq:route-edge-label}
    L_{route,i,j}(e)=
    \begin{cases}
      (\FTConnLabel^1_{G_{i,j},T_{i,j}}(e),\ldots,\FTConnLabel^{f'}_{G_{i,j},T_{i,j}}(e)),& \mbox{~for~} e \in T_{i,j} \\
     \EID_{T_{i,j}}(e),& e \in G_{i,j}\setminus E(T_{i,j})~.
    \end{cases}
\end{equation}
Every $L_{route,i,j}(e)$ label has $O(f \log^3 n)$ bits. 
In our routing algorithms, the $T_{i,j}$ routing labels of the discovered faulty edges will be added to the header for the message in order to guide the routing process. 
We now turn to define the routing labels of vertices. Recall that for a vertex $v$ and index $1 \leq i \leq K$, we denote by $i^*(v)$ an index such that the $2^i$-ball around $v$ is contained in $G_{i,i^*(v)}$.
The routing label $L_{route}(v)$ of $v$ 
For every \emph{vertex} $v$, the routing label of $v$ is given by
\begin{equation}\label{eq:Label-route-vertex}
L_{route}(v) = \{(i^*(v), \FTConnLabel^1_{G_{i,i^*(v)},T_{i,i^*(v)}}(v) | i \in [1,K]\}~.
\end{equation}

Note that by definition, the connectivity labels of the \emph{vertices} are the same in all $f'$ applications of the labeling algorithm, and therefore it is sufficient to include only one of these copies in the label. The size of the label is $O(K\log{n})=O(\log{n} \log{nW})$.
%
%
\\ \\
\noindent \textbf{Routing tables.} The routing table $R_{route}(v)$ of a vertex $v$ has the following information for every tree 
$T_{i,j}$ such that $v \in T_{i,j}$:
\begin{equation}\label{eq:route-table-ij}
R_{route,i,j}(v)=\{L_{route,i,j}(e), e \in E(v,T_{i,j})\} \cup \{\FTConnLabel^1_{G_{i,j},T_{i,j}}(v)\}~,
\end{equation}
where $E(v,T_{i,j})$ is the set of edges incident to $v$ in the tree $T_{i,j}$. 
The final routing table is given by $R_{route}(v)=\{R_{route,i,j}(v), (i,j) ~\mid~ T_{i,j} \in \mathcal{T}, v\in T_{i,j}\}$.

Since the connectivity labels are of size $\widetilde{O}(f)$, and as each $v$ appears in $\deg_{\mathcal{T}}(v)$ trees, the size of the table is $\widetilde{O}(f \deg_{\mathcal{T}}(v)).$ Since the total number of tree edges in $\mathcal{T}$ is bounded by $\widetilde{O}(K \cdot n^{1+1/k})$, this provides a global space bound of $\widetilde{O}(fK \cdot n^{1+1/k})$ bits.


\paragraph{The routing algorithm.} In the routing algorithm, the source vertex $s$ initially gets the routing label $L_{route}(t)$ (Eq. (\ref{eq:Label-route-vertex})) of the destination $t$ and its own routing table, $R_{route}(s)$, and its goal is to find the smallest radius graph $G_{i,j}$ such that $s$ and $t$ are connected in $G_{i,j} \setminus F$, and use it for routing. 
As the set $F$ is \emph{not} known in advance, the algorithm works in $K= O(\log{nW})$ phases, where in phase $i$ it tries to route a message in the graph $G_{i,i^*(t)}$ (which contains the entire $2^i$-radius ball of $t$). If $s$ and $t$ are connected in $G_{i,i^*(t)} \setminus F$ the algorithm succeeds, and otherwise we proceed to the next phase, corresponding to the distance scale of $2^{i+1}$. 
We next describe the algorithm for a single phase $i$, we denote $G_i = G_{i,i^*(t)}, T_i = T_{i,i^*(t)}$. Note that $s$ can deduce the index $i^*(t)$ from the routing label of $t$, and it can check if $s \in T_i$ using its routing table. If $s \not \in T_i$, we proceed to the next phase.

If $s \in T_i$, the routing procedure for phase $i$ has at most $|F|+1$ iterations. We maintain the following invariant in the beginning of each iteration $\ell \in \{1,\ldots, |F|+1\}$: (i) the iteration starts at vertex $s$, (ii) the algorithm has already detected a subset of $\ell-1$ faulty edges $F_\ell \subseteq F$, and (iii) the header contains the labels $\FTConnLabel_{G_i,T_i}(e)$ of all the edges $e \in F_\ell$. Each iteration $\ell \leq |F|+1$ terminates either at the destination vertex $t$, or at the source vertex $s$. In addition, w.h.p., if $s$ and $t$ are connected in $G_i \setminus F$, iteration $|F|+1$ terminates at $t$. The invariant holds vacuously for iteration $1$.

We now describe the $\ell^{th}$ iteration (of the $i^{th}$ phase) of the routing procedure given the invariant. The source vertex $s$ considers the $\ell^{th}$ copy of the FT connectivity labels, $\FTConnLabel^{\ell}_{G_i,T_i}(e)$ for every $e\in F_\ell$. 
Using the routing labels of the edges, that are part of the header, the routing label $L_{route}(t)$ (of Eq. (\ref{eq:Label-route-vertex})) and the routing table $R_{route}(s)$, $s$ can apply the decoding algorithm of 
Lemma \ref{lem:succint_path_routing} to determine if $s$ and $t$ are connected in $G_i \setminus F_\ell$. 
If the answer is no, the algorithm proceeds to the next phase $i+1$.
Otherwise, by applying the decoding algorithm of Lemma \ref{lem:succint_path_routing}, it computes the succinct path $\widehat{P}_\ell$. The path $\widehat{P}_\ell$ encodes an $s$-$t$ path in $G_i \setminus F_\ell$, that includes the relevant port and tree routing information of its vertices. The header of the message $H_\ell$ then contains 
$$H_\ell=\langle \widehat{P}_\ell, i, i^*(t), \{L_{route, i, i^*(t)}(e)\}_{e \in F_{\ell}}, q \rangle~,$$ where $q = O(f)$ is an index indicating the current segment of $\widehat{P}_\ell$ we explore. Note that the header $H_\ell$ contains the $f$ copies of connectivity labels of the $F_{\ell}$ edges, and not only the $\ell^{th}$ copy.
The size of the header is $\widetilde{O}(f^2)$, as the description of the path has size $\widetilde{O}(f)$, and additionally we have at most $f$ faulty edges with labels of size $\widetilde{O}(f)$.
%
Let $P_\ell$ be the $G$-path encoded by the path $\mathcal{P}_\ell$. The algorithm then routes the message along $P_\ell$ in the same manner as in Sec. \ref{sec:routing-known}. In the case where $P_{\ell}\cap F=\emptyset$, the iteration successfully terminates at the destination vertex $t$. From now on, we consider the case that $P_{\ell}$ contains at least one faulty edge. 

Let $e=(u,v)$ be the first edge (closest to $s$) on the path $P_\ell$ that belongs to $F$. Since $P_\ell \cap F_\ell=\emptyset$, it holds that $e \in F \setminus F_\ell$. Without loss of generality, assume that $u$ is closer to $s$ on $P_\ell$. Thus the faulty edge $e$ is detected upon arriving to the vertex $u$. 
In the case where $e$ is a \emph{non-tree edge}, then it must be a $G$-edge on $\widehat{P}_\ell$. Since this path has the extended ids $\EID_{T_i}(e)$ of its $G$-edges, and since the connectivity label of a non-tree edge $e$ is its extended identifier $\EID_{T_i}(e)$ in all the $f'$ applications of the scheme on $G_i$\footnote{This is because we use the same random seed $\mathcal{S}_{ID}$ in all these applications.}, $u$ can add $L_{route,i,i^*(t)}(e)=\EID_{T_i}(e)$ to the header of the message. Assume now that $e$ is a tree edge in $T_i$. The vertex $u$ then adds the routing label $L_{route,i,i^*(t)}(e)$ to the header of the message, as $e$ is a tree edge adjacent to $u$ it has this information in its routing table. Finally, it marks the header with the sign $R$, indicating that the message should now be routed in the reverse direction, until arriving $s$ again. This completes the description of iteration $\ell$. It is easy to see that the invariant is maintained. If $s$ and $t$ are connected in $G_i \setminus F$, after at most $f$ iterations all faulty edges are detected. In the last iteration, the path computed based on the labeling information is free from faulty edges, and the routing is completed (in the same manner as in Sec. \ref{sec:routing-known}) at the destination $t$. We next bound the multiplicative stretch of the routing.

\begin{claim}\label{cl:route-length}
Fix a set of faulty edges $F$, and let $s,t$ be vertices that are connected in $G \setminus F$. Then, the message is routed from $s$ to $t$ within $32k (|F|+1)^2 \cdot \dist_{G \setminus F}(s,t)$ steps, w.h.p.
\end{claim}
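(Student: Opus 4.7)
The plan is to identify a critical phase $i^*$ whose scale $2^{i^*}$ is comparable to $d := \dist_{G\setminus F}(s,t)$, argue that the routing must succeed by the end of phase $i^*$, and bound the travel accumulated in all earlier phases by a geometric series. I would take $i^* := \lceil \log d \rceil$, so $2^{i^*} \leq 2d$. Because every vertex on an $s$-$t$ shortest path in $G\setminus F$ lies at distance $\leq d \leq 2^{i^*}$ from $t$, the covering property (Def.~\ref{def:tree-cover}) guarantees that this whole path is contained in $V(T_{i^*,i^*(t)})$, so $s$ and $t$ are connected in $G_{i^*,i^*(t)} \setminus F$. Lemma \ref{lem:succint_path_routing} then implies that the routing must terminate at $t$ no later than phase $i^*$, reducing the analysis to bounding the travel accumulated in phases $1,\ldots,i^*$.

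The per-phase bound is obtained as follows. Within phase $i$, each failed iteration $\ell$ reveals at least one previously-unknown faulty edge (the first edge of $P_\ell$ lying in $F\setminus F_\ell$), so at most $|F|+1$ iterations occur. Lemma \ref{lem:succint_path_routing} bounds the encoded path length by $|P_\ell| \leq (4k-1)(|F|+1)\cdot 2^i$; a failed iteration walks partway along $P_\ell$ and then retraces back to $s$, incurring weighted length $\leq 2|P_\ell|$, while the successful iteration incurs $\leq |P_\ell|$. Hence phase $i$ contributes at most $2(4k-1)(|F|+1)^2\cdot 2^i$ to the total travel.

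Summing the geometric series over $i=1,\ldots,i^*$ yields
$$\sum_{i=1}^{i^*} 2(4k-1)(|F|+1)^2 \cdot 2^i \;\leq\; 4(4k-1)(|F|+1)^2 \cdot 2^{i^*} \;\leq\; 8(4k-1)(|F|+1)^2 \cdot d \;\leq\; 32k(|F|+1)^2 \cdot d,$$
matching the target bound. The ``w.h.p.''\ qualifier then follows from a union bound over the $O(f\log(nW))$ decoding queries performed throughout the routing, each correct with high probability by the guarantee of Section \ref{sec:ftconn-sketch}.

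The main obstacle is the adaptive nature of these decoding queries: the set $F_\ell$ supplied to the $\ell^{th}$ call depends on the outcomes, and hence on the sketch randomness, of iterations $1,\ldots,\ell-1$, so the ``fixed-triplet'' high-probability guarantee of Section \ref{sec:ftconn-sketch} does not apply directly. This is precisely the purpose of maintaining $f+1$ independently-seeded sketches $\mathcal{S}_h^1,\ldots,\mathcal{S}_h^{f+1}$ per graph $G_{i,j}$ in Eq.~(\ref{eq:route-edge-label}): iteration $\ell$ consults only the $\ell^{th}$ copy, so the query $(s,t,F_\ell)$ is statistically independent of the sketch randomness used to answer it, restoring the fixed-triplet regime and validating the union bound above.
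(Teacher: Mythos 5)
Your proof is correct and follows essentially the same route as the paper's: identify the critical phase whose scale is within a factor of $2$ of $\dist_{G\setminus F}(s,t)$, bound each phase's travel by $2(4k-1)(|F|+1)^2\cdot 2^j$ via the at most $|F|+1$ iterations and the out-and-back factor of $2$, and sum the geometric series. Your closing remark on why the adaptively chosen query $(s,t,F_\ell)$ is independent of the $\ell$th seed is a slightly more explicit justification of the ``w.h.p.'' step than the paper gives, but the argument is the same.
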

\begin{proof}
First note that since each iteration and each graph $G_{i}$ uses an independent set of FT connectivity labels, then in each phase and each iteration the decoding algorithm succeeds w.h.p. and outputs an $s$-$t$ path $\widehat{P}_\ell$ if exists. 

Assume that $\dist_{G \setminus F}(s,t) \in (2^{i-1},2^i]$. Then, $s$ and $t$ are connected in $G_i \setminus F$, as $T_i = T_{i,i^*(t)}$ contains the $2^i$-ball around $t$. We show that the algorithm terminates at $t$ in phase $i$ or before it, and that in any phase $j \leq i$, the routing algorithm traverses a path of length at most $2(4k-1)(|F|+1)^2 \cdot 2^j$.

Let $j \leq i$. In the $\ell$'th iteration of phase $j$, the algorithm first checks if $s$ and $t$ are connected in $G_j \setminus F_{\ell}$, where $F_{\ell}$ is the set of currently detected faults. If the answer is no, the algorithm proceeds to the next phase. Otherwise, it tries to route a message from $s$ to $t$ on the path encoded by $\widehat{P}_{\ell}$. The length of the path is bounded by $(4k-1)(|F|+1)\cdot 2^j$ from Lemma \ref{lem:succint_path_routing}. The algorithm either succeeds, or finds a faulty edge on the way in which case it returns to $s$ by traversing the same path on the reverse direction. Overall, the algorithm traverses a path of length at most $2(4k-1)(|F|+1)\cdot 2^j$, in this iteration. In all $|F|+1$ iterations of phase $j$, the length of the path explored is at most $2(4k-1)(|F|+1)^2 \cdot 2^j$. Summing over all iterations $j \leq i$, the stretch is bounded by $$\sum_{j=1}^{i} 2(4k-1)(|F|+1)^2 \cdot 2^j = 2(4k-1)(|F|+1)^2 \sum_{j=1}^i 2^j \leq 2^{i+2} (4k-1)(|F|+1)^2 \leq 32k (|F|+1)^2 \dist_{G \setminus F}(s,t).$$ The last inequality uses the fact that $2^{i-1} \leq \dist_{G \setminus F}(s,t).$ 

In the $i$'th phase, since $s$ and $t$ are connected in $G_i \setminus F$, then for any $F_{\ell} \subseteq F$, $s$ and $t$ are connected in $G_i \setminus F_{\ell}$, hence the algorithm always finds a path $\widehat{P}_{\ell}$. Hence, it either succeeds in routing the message to $t$ in one of the iterations (or one of the previous phases), or learns about all the failures $F$. In the latter case, in iteration $|F|+1$ it learns about a failure-free path $\widehat{P}_{|F|+1}$, and the routing terminates at $t$. This completes the proof.
\end{proof}
To conclude, we have the following.

\begin{theorem}
For every integers $k,f$, there exists an $f$-FT compact routing scheme that given a message $M$ at the source vertex $s$ and a label $L_{route}(t)$ of the destination $t$, in the presence of at most $f$ faulty edges $F$ (unknown to $s$) routes $M$ from $s$ to $t$ in a distributed manner over a path of length at most $32k (|F|+1)^2\cdot \dist_{G \setminus F}(s,t)$. 
The global table size is $\widetilde{O}(f \cdot n^{1+1/k} \log{(nW)})$.
The header size of the messages is bounded by $\widetilde{O}(f^2)$ bits, and the label size of vertices is $O(\log{(nW)} \log{n})$. 
\end{theorem}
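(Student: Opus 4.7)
The plan is simply to assemble the machinery already laid out in Section \ref{sec:route-unknown}: the routing scheme is precisely the phase/iteration algorithm built on top of the FT connectivity labels of Section \ref{sec:ftconn-sketch}, applied independently $f'=f+1$ times (with fresh sketching seeds $\mathcal{S}^1_h,\ldots,\mathcal{S}^{f'}_h$ but the same extended-identifier seed $\mathcal{S}_{ID}$) on every subgraph $G_{i,j}=G[V(T_{i,j})]$ of every tree in the cover $\mathcal{T}=\bigcup_{i=1}^K \TreeCover_i$. The labels, tables and headers are defined in Equations (\ref{eq:edge-extended-routing})--(\ref{eq:route-table-ij}). The work of the proof is (i) to verify that w.h.p.\ the algorithm terminates at $t$, (ii) to invoke Claim \ref{cl:route-length} for the stretch, and (iii) to tally up the three space bounds.

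For correctness and stretch, I would first observe that, because the sketching randomness in every phase/iteration pair is independent and the total number of decoding calls executed over the $K$ phases and $|F|+1$ iterations is polynomial in $n$, a union bound over all invocations of the decoder of Section \ref{sec:ftconn-sketch} ensures that every call succeeds w.h.p. Conditioning on this event, the succinct $s$-$t$ path $\widehat{P}_\ell$ produced by Lemma \ref{lem:succint_path_routing} is a faithful description of a route in $G_{i,i^*(t)}\setminus F_\ell$, and the invariant of Section \ref{sec:route-unknown} (each iteration starts at $s$ with exactly $\ell-1$ faulty edges discovered and added to the header) holds by induction. Given all this, the stretch bound $32k(|F|+1)^2\cdot \dist_{G\setminus F}(s,t)$ is delivered verbatim by Claim \ref{cl:route-length}.

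For the space bounds I would tally each ingredient separately. Vertex labels, per Eq.~(\ref{eq:Label-route-vertex}), are a concatenation of $K=O(\log(nW))$ pairs $(i^*(v),\FTConnLabel^1_{G_{i,i^*(v)},T_{i,i^*(v)}}(v))$; by the modified definition in Eq.~(\ref{eq:conn-vertex-label-routing}) each vertex connectivity label is $O(\log n)$ bits, giving total label length $O(\log(nW)\log n)$. Routing tables follow from Eq.~(\ref{eq:route-table-ij}): each tree edge $e\in T_{i,j}$ stores $f'$ independent connectivity labels of $O(\log^3 n)$ bits each (Claim \ref{cl:label-length}), contributing $\widetilde{O}(f)$ bits; by Definition \ref{def:tree-cover} the total number of tree edges over the whole collection $\mathcal{T}$ is $\widetilde{O}(K\cdot n^{1+1/k})$, so the global table size is $\widetilde{O}(f\cdot n^{1+1/k}\log(nW))$. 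The header contains the succinct path $\widehat{P}_\ell$ of size $\widetilde{O}(f)$ by Lemma \ref{lem:succint_path_routing}, the tuple $(i,i^*(t),q)$ of negligible size, and at most $|F|\le f$ labels $L_{route,i,i^*(t)}(e)$ of discovered faulty edges, each $\widetilde{O}(f)$ bits, summing to $\widetilde{O}(f^2)$.

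The main obstacle, which is essentially a bookkeeping issue rather than a calculation, is to confirm that every piece of information consulted during routing is actually accessible to the current holder of the message. In particular, when a tree edge $e\in F$ is first detected at its endpoint $u$, the algorithm must be able to append its \emph{full} $T_{i,i^*(t)}$-routing label (containing all $f'$ independent connectivity labels) to the header; this is possible only because $e$ is tree-adjacent to $u$ and hence present in $R_{route,i,i^*(t)}(u)$ by Eq.~(\ref{eq:route-table-ij}). Similarly, the port and tree routing fields needed to follow the segments of $\widehat{P}_\ell$ are carried by the augmented extended identifier of Eq.~(\ref{eq:edge-extended-routing}), and the tree-routing label $L_{T_{i,j}}(t)$ needed to traverse the final tree segments of the last iteration is embedded in $L_{route}(t)$ via Eq.~(\ref{eq:conn-vertex-label-routing}). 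Once this accounting is verified, the theorem follows.
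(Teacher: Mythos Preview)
Your proposal is correct and follows essentially the same approach as the paper: the theorem is a summary of the construction already laid out in Section~\ref{sec:route-unknown}, and your proof assembles precisely those ingredients---the $f'=f+1$ independent copies of the connectivity labels on each $G_{i,j}$, the labels/tables/headers of Equations~(\ref{eq:route-edge-label})--(\ref{eq:route-table-ij}), and the stretch bound via Claim~\ref{cl:route-length}. Your additional ``bookkeeping'' verification that every consulted datum is actually locally available (tree-edge labels in $R_{route,i,i^*(t)}(u)$, port/tree-routing fields in the extended identifiers, $L_{T_{i,i^*(t)}}(t)$ inside $L_{route}(t)$) is a welcome explicit check of what the paper treats implicitly in its description of the routing procedure.
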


\paragraph{Improving the size of the routing tables.} 
So far, we have described a routing scheme that consumes a total space of $\widetilde{O}(f\cdot n^{1+1/k}\log (nW))$ bits, and multiplicative stretch 
$32(|F|+1)^2 k$. We now explain the required modifications needed to providing routing tables with $\widetilde{O}(f^3\cdot n^{1/k})$ bits per vertex. The most space consuming information for a vertex $u$ is the connectivity labeling
information of the edges incident to $u$ in each of the trees $T_{i,j} \in \mathcal{T}$. As the degree of $u$ in some of the trees might be $\Theta(n)$, it leads to tables of possible super-linear size. To reduce the space of the individual tables, we apply a load balancing idea which distributes the labeling information incident to \emph{high}-degree vertices among their neighbors. 

Instead of storing the labeling information of $e=(u,v)$ at the routing tables of $u$ and $v$, we define 
for every tree $T \in \mathcal{T}$ and an edge $e=(u,v) \in T$, a subset $\Gamma_T(e)$ of vertices that store the connectivity labeling information of $e$ in $T$. We will make sure that the information on some vertex in $\Gamma_T(e)$ can be easily extracted in the routing procedure upon arriving one of its endpoints. In addition, we will make sure that each vertex stores the information only for a small number of edges in each of its trees. Consider an edge $e=(u,v)$ in a tree $T$, and assume, without loss of generality, that $u$ is the parent of $v$ in the tree $T$. In the case where $\deg(u,T)\leq f+1$, we simply let $\Gamma_T(e)=\{u,v\}$. That is, the label of $e$ is stored by both endpoints of $e$ (as before). The interesting case is where $\deg(u,T)\geq f+2$, in which case, $u$ might not be able to store the label of $e$, and will be assisted by its other children as follows.  Let $Child(u,T)=[v_1,\ldots, v_\ell]$ be the lexicographically ordered list of the children of $u$ in $T$.  The algorithm partitions $Child(u,T)$ into consecutive blocks of size $f+1$ (the last block might have $2f+1$ vertices). Letting $[v_{q,1}, \ldots, v_{q,f+1}] \subseteq Child(u,T)$ be the block containing $v$, define
$$\Gamma_T(e)=\{v_{q,1}, \ldots, v_{q,f+1}\}~.$$
Note that in particular, $v \in \Gamma_T(e)$. Thus, the label of $e$ is stored by $v$ and $\ell \in [f,2f-1]$ additional children of $u$ in $T$. 

We then modify the tree labels from Fact \ref{fc:route-trees} to contain the port information of $\Gamma_T(e)$. 
In order to do that, we will be using the more relaxed variant of Fact \ref{fc:route-trees}, we have:
\begin{claim}\label{cl:route-trees-port}
For every $n$-vertex tree $T$, there exists a (deterministic) routing scheme that assigns each vertex $v \in V(T)$ a label $L_T(v)$ of $O(f\log^2 n)$ bits and table $R_T(v)$ of $O(f\log n)$ bits. Given the label $L_T(t)$ of the target $t$  
and the routing table $R_T(u)$, the vertex $u$ can compute in $\widetilde{O}(f)$ time: (i) the port number of the edge $e=(u,v)$ on its tree path to $t$, and (ii) the port numbers of the neighbors of $u$ in the set $\Gamma_T(e=(u,v))$. 
\end{claim}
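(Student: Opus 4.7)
The plan is to augment the Thorup--Zwick tree routing scheme of Fact~\ref{fc:route-trees} with block information, using a heavy path decomposition of $T$. Recall that in such a decomposition each internal vertex has a designated heavy child (rooting its largest subtree), and any root-to-vertex path crosses at most $O(\log n)$ non-heavy edges. For an edge $e=(u,v)$ with $u$ the parent of $v$, the set $\Gamma_T(e)$ is either $\{u,v\}$ (when $\deg(u,T)\leq f+1$, in which case no auxiliary information is needed beyond the TZ port) or a block of $f+1$ consecutive children of $u$ in the lexicographic ordering.

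I would define the routing table $R_T(u)$ to hold (i) the Thorup--Zwick routing label of $u$, (ii) the unique identifier of $u$, and (iii) when $u$ has more than $f+1$ children, the $f+1$ port numbers of the block $\Gamma_T(u,v_h)$ where $v_h$ is $u$'s heavy child; this totals $O(f\log n)$ bits. The label $L_T(t)$ would hold (i) the Thorup--Zwick label of $t$ and (ii) for each of the $O(\log n)$ non-heavy edges $(a,b)$ on the root-to-$t$ path, an entry consisting of the identifier of $a$, serving as a location key, together with the $f+1$ port numbers of $\Gamma_T(a,b)$. Since there are $O(\log n)$ such entries, each of $O(f\log n)$ bits, the label size is $O(f\log^2 n)$.

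To route from $u$ toward $t$, we first invoke the Thorup--Zwick decoder on the TZ labels contained in $R_T(u)$ and $L_T(t)$ to obtain the port of the next edge $e=(u,v)$ in $O(1)$ time. We then search the entries of $L_T(t)$ for one whose location key equals $u$'s identifier; arranging these entries in a sorted array by id allows the search to complete in $O(\log\log n)$ time via binary search. If such an entry exists, then $(u,v)$ is a non-heavy transition on the path to $t$, and the $f+1$ port numbers of $\Gamma_T(e)$ are read directly from that entry. Otherwise $v$ is the heavy child of $u$, and the block $\Gamma_T(e)$ is read from $R_T(u)$. Reading and emitting the $f+1$ port numbers dominates the cost, giving $\widetilde{O}(f)$ total time.

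The main obstacle is distinguishing between the heavy-child and non-heavy-child cases at $u$ without paying a large space overhead, since the bare TZ port alone does not reveal which is the case. Explicitly matching $u$'s identifier against the $O(\log n)$ location keys stored in $L_T(t)$ resolves this, and the overhead of storing $u$'s identifier in $R_T(u)$ and the location keys inside $L_T(t)$ fits within the $O(f\log n)$ and $O(f\log^2 n)$ budgets, respectively. A secondary care is that Fact~\ref{fc:route-trees} is used as a black box: our augmentation uses the TZ scheme only to supply the port of the unique next-hop edge $(u,v)$, while all block information is extracted from the auxiliary part of the labels and tables, so no internal modification of TZ is required.
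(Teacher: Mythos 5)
Your proposal is correct and follows essentially the same route as the paper: augment the heavy-light-decomposition-based Thorup--Zwick tree routing scheme by storing the $\Gamma_T$ port blocks of the $O(\log n)$ light edges on the root-to-$t$ path inside $L_T(t)$ and the block of the heavy child edge inside $R_T(u)$, yielding the stated $O(f\log^2 n)$ and $O(f\log n)$ bounds. Your explicit ``location key'' matching of $u$'s identifier against the label entries is just a slightly more concrete rendering of the same mechanism the paper uses implicitly.
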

\begin{proof}
The proof follows by slightly modifying the simpler scheme of Fact \ref{fc:route-trees} by \cite{thorup2001compact}. Specifically, we will be using the routing scheme based on heavy-light tree decomposition. This scheme assigns each vertex $v$ labels of $O(\log^2 n)$ bits that contain the port information of the at most $O(\log n)$ light edges on the root to $v$ path in $T$.  The vertices are enumerated in DFS ordering, and the label of each vertex  contains its DFS range, and the specification of all light edges on its path in $T$ from the root, along with a port information of these edges.  The routing table of $v$ stores its DFS range, the port number of the (unique) heavy child of $v$ and also the port to its parent. In our modification, we augment the label of each vertex $u$ with the port information of $\Gamma_T(e')$ for every light edge $e'$ appearing on the root to $u$ path in $T$. Since there are $O(\log n)$ such light edges, the total label information is encoded in $O(f\log^2 n)$ bits. The routing table $R_T(u)$ is augmented with the port information for the set $\Gamma_T(e'')$, where $e''$ is the (unique) heavy child of $u$.  The routing scheme is then exactly as described at \cite{thorup2001compact}, only that in addition to the port of the next-hop $e=(u,v)$, we also obtain the port information of $\Gamma_T(e)$. This increases the labels and tables in the scheme of \cite{thorup2001compact} by a factor of $O(f)$, the claim follows.  
\end{proof}

Since the modified claim of tree routing defines now both tree routing labels and tables, we employ the following modifications. The extended identifier $\EID_T(e)$ of an edge $e=(u,v)$ from Eq. (\ref{eq:edge-extended-routing})
contains the modified tree labels and thus has $O(f\log^2 n)$ bits.
The \emph{routing labels} of Eq. (\ref{eq:route-edge-label}) are defined in the same manner only using the modified extended edge identifiers. The routing label of each edge has $\widetilde{O}(f^2)$ bits, and routing label of every vertex has $\widetilde{O}(f)$ bits. 
We are now ready to describe the more succinct \emph{routing tables} of each vertex $v$. We modify the definition of Eq. (\ref{eq:route-table-ij}) by letting:
\begin{equation*}\label{eq:route-table-ij-mod}
R_{route,i,j}(v)=\{L_{route,i,j}(e), e \in \Gamma_{T_{i,j}}(e)\} \cup \FTConnLabel^1_{G_{i,j},T_{i,j}}(v) \cup R_{T_{i,j}}(v)~,
\end{equation*}
thus the routing table $R_{route,i,j}(v)$ is augmented the tree routing tables $R_{T_{i,j}}(v)$ of Claim \ref{cl:route-trees-port}. In addition, $R_{route}(v)=\{R_{route,i,j}(v), (i,j) ~\mid~ T_{i,j} \in \mathcal{T}, v\in T_{i,j}\}$ as before.
We therefore have:
\begin{claim}\label{cl:route-balance-table}
The size of each routing table $R_{route}(v)$ is bounded by $\widetilde{O}(f^3 K n^{1/k})$ bits. 
\end{claim}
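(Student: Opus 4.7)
The plan is to bound $|R_{route}(v)|$ by summing the per-tree contributions over all trees in $\mathcal{T}$ that contain $v$. The dominant component of $R_{route,i,j}(v)$ is the set of edge routing labels $\{L_{route,i,j}(e) \mid v \in \Gamma_{T_{i,j}}(e)\}$, so the first step is to bound the number of such edges in each tree $T = T_{i,j}$ containing $v$.

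By inspecting the definition of $\Gamma_T(e)$ for an edge $e = (u,w)$ where $u = p(w)$ is the parent of $w$ in $T$, the vertex $v$ can lie in $\Gamma_T(e)$ in three ways: (a) $v = w$ and $\deg(u, T) \leq f+1$, in which case $\Gamma_T(e) = \{u, v\}$; (b) $v = u$ with $\deg(v, T) \leq f+1$, so $v$ is responsible for up to $\deg(v, T) \leq f+1$ child-edges; (c) $v$ is a sibling of $w$ belonging to the same block of the size-$(f+1)$ (or up to $2f+1$) consecutive partition of $\mathrm{Child}(u, T)$. Case (a) contributes one edge (the parent edge of $v$); case (b) contributes at most $f+1$ edges; and case (c) contributes at most $2f$ edges, because each block has at most $2f+1$ vertices and $v$ belongs to exactly one block among its parent's children. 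In particular, the case $v = w$ with $\deg(u, T) \geq f+2$ is subsumed by (c). Altogether, $v$ stores at most $O(f)$ edge routing labels per tree.

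Each routing label $L_{route,i,j}(e)$ has $\widetilde{O}(f^2)$ bits, since for a tree edge it is a concatenation of $f' = f+1$ independent connectivity labels built on top of the modified extended identifiers $\EID_{T_{i,j}}$ of $O(f \log^2 n)$ bits guaranteed by Claim \ref{cl:route-trees-port}. Adding the tree routing table $R_{T_{i,j}}(v)$ of $\widetilde{O}(f)$ bits and the single connectivity label $\FTConnLabel^1_{G_{i,j}, T_{i,j}}(v)$ of $\widetilde{O}(1)$ bits, the per-tree contribution is $\widetilde{O}(f^3)$.

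Finally, by Definition \ref{def:tree-cover}, $v$ belongs to at most $k \cdot n^{1/k}$ trees in each cover $\TreeCover_i$, and there are $K = O(\log(nW))$ covers in $\mathcal{T}$, so the number of trees in $\mathcal{T}$ containing $v$ is $\widetilde{O}(K \cdot n^{1/k})$. Multiplying the per-tree bound by this quantity yields $|R_{route}(v)| = \widetilde{O}(f^3 K n^{1/k})$ bits as claimed. The only delicate point is case (c) in the combinatorial count: because the load-balancing partitions the children of any high-degree parent into disjoint blocks, $v$ lies in exactly one such block, and this is what replaces the (possibly super-linear) factor $\deg(p(v), T)$ by $O(f)$, thereby breaking the table-size bottleneck of the earlier scheme.
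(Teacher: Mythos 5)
Your proof is correct and follows essentially the same route as the paper: bound the number of edges $e$ with $v \in \Gamma_{T_{i,j}}(e)$ per tree by $O(f)$ (parent edge, own child edges, and sibling edges within $v$'s block), multiply by the $\widetilde{O}(f^2)$ size of each edge routing label, and then by the $\widetilde{O}(K n^{1/k})$ trees containing $v$. Your case analysis of $\Gamma_T(e)$ is just a more explicit version of the paper's one-line count, so there is nothing to add.
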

\begin{proof}
For every tree $T_{i,j}$ containing $v$, $v$ stores the routing labels for the tree $T_{i,j}$ of all edges in the set $E'(v,T_{i,j})=\{e \in T_{i,j} ~\mid~ v \in  \Gamma_{T_{i,j}}(e)\}$. Since each connectivity label of an edge contains the modified tree labels from Fact \ref{cl:route-trees-port}, it has $\widetilde{O}(f)$ bits, and as the routing label for $T_{i,j}$ contains $O(f)$ copies of this label, overall each routing label of an edge has $\widetilde{O}(f^2)$ bits. Observe that $|E'(v,T_{i,j})|=O(f)$ as each vertex stores the label of its parent in the tree, $O(f)$ child edges, and $O(f)$ child edges of its parent in the tree. Since each $v$ participates in $\widetilde{O}(K n^{1/k})$ trees, overall its routing table has $\widetilde{O}(f^3 K n^{1/k})$ bits, as required. 
\end{proof}

It remains to explain the required modifications for the routing procedure over a tree $T_i=T_{i,i^*(t)}$. Upon arriving to a vertex $u$ incident to a faulty \emph{tree} edge $e=(u,v)$ the procedure is as follows. If $e$ is a non-tree edge or if $u$ stores the connectivity label $\FTConnLabel_{G_i,T_i}(e)$\footnote{This covers the cases where $v$ is either a parent of $u$ or else, it is one of the at most $f+1$ children of $u$ in $T_i$.}, then $u$ adds the routing label of the edge to the header, as before. In the remaining case it must hold that $e$ is the edge incident to $u$ on its tree path to some vertex $y$. By using the tree routing scheme of Claim \ref{cl:route-trees-port} we have that given the tree routing labels  $L_{T_{i}}(u)$ and $L_{T_{i}}(y)$, the vertex $u$ can also obtain the port numbers of its $\ell \in [f,2f-1]$ children in $\Gamma_{T_{i}}(e)$ that store the label $\FTConnLabel_{G_i,T_i}(e)$. Since there are at most $f$ edge faults in the network, and $\Gamma_{T_{i,j}}(e)$ contains information on at least $f+1$ ports of $u$'s neighbors that contain the label of $e$, the vertex $u$ can access a non-faulty neighbor, say $w$, that has the label information of $e$. That vertex can then add the labeling information of $e$ to the header of the message, and the routing algorithm proceeds as before. Since we use the modified tree labels of Claim \ref{cl:route-trees-port}, each connectivity label has $\widetilde{O}(f)$ bits, and each routing label of an edge for a tree $T_{i,j}$ has $\widetilde{O}(f^2)$ bits. Since the header stores the routing labels of $O(f)$ edges, it consists of $\widetilde{O}(f^3)$ bits.

The stretch is still bounded by $32k (|F|+1)^2\cdot \dist_{G \setminus F}(s,t)$, as we next explain. Recall that in the proof of Claim \ref{cl:route-length}, we bounded the length of the path we explore in one iteration of the algorithm of phase $j$ by $2(4k-1)(|F|+1)2^j.$ In the new scheme, when we discover a faulty edge, the vertex $u$ may send messages to $|F|+1$ neighbors until it finds the label of the edge. This adds at most $2(|F|+1)2^j$ to the stretch, as the weight of edges in the tree of phase $j$ is at most $2^j$, and we may send messages in both directions. This gives that the length of the path we explore in one iteration is now at most $2(4k-1)(|F|+1)2^j+2(|F|+1)2^j=8k(|F|+1)2^j.$ The rest of the analysis proceeds as in the proof of Claim \ref{cl:route-length}, and gives that the stretch is bounded by $32k (|F|+1)^2\cdot \dist_{G \setminus F}(s,t)$ (we get the same bound as in the original proof we bounded $2(4k-1)$ with $8k$ during the analysis).
We therefore have:
\begin{theorem}\label{thm:routing-unknown}[Fault-Tolerant Routing]
For every integers $k,f$, there exists an $f$-sensitive compact routing scheme that given a message $M$ at the source vertex $s$ and a label $L_{route}(t)$ of the destination $t$, in the presence of at most $f$ faulty edges $F$ (unknown to $s$) routes $M$ from $s$ to $t$ in a distributed manner over a path of length at most $32k (|F|+1)^2\cdot \dist_{G \setminus F}(s,t)$. The routing labels have $\widetilde{O}(f)$ bits, the table size of each vertex is $\widetilde{O}(f^3 \cdot n^{1/k} \log(nW))$. The header size of the messages is bounded by $\widetilde{O}(f^3)$ bits. 
\end{theorem}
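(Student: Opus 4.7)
The plan is to bootstrap from the forbidden-set routing scheme of Theorem \ref{thm:routing-known} (where $F$ is known) and the fault-tolerant distance labels of Section \ref{sec:ft-distance}. Since $F$ is unknown, the source $s$ cannot compute the right succinct path in one shot, so I would run a trial-and-error loop. Organize the routing as $K=O(\log(nW))$ distance-scale phases, and within each phase $i$ as at most $|F|+1$ iterations. At the start of iteration $\ell$ of phase $i$ the header contains the connectivity labels of a growing set $F_\ell\subseteq F$ of already-discovered faults. Using the $\ell$-th \emph{independent} copy of the sketch-based connectivity labels $\FTConnLabel^\ell_{G_i,T_i}$ (fresh randomness is needed so each query is uncorrelated with the corresponding sketch), Lemma \ref{lem:succint_path_routing} lets $s$ extract a succinct $s$-$t$ path $\widehat{P}_\ell$ in $G_i\setminus F_\ell$ and embed it in the header. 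The message then follows $\widehat{P}_\ell$; if a new faulty edge $e$ is encountered at some $u$, the algorithm appends $\FTConnLabel^{\ell+1}_{G_i,T_i}(e)$ to the header, flips a reverse-direction bit, and sends the message back along the same path to $s$ for the next iteration. Since $|F_\ell|$ strictly grows, within $|F|+1$ iterations either we reach $t$, or we certify $s$-$t$ disconnection in $G_i$ and move to phase $i+1$.

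The hard part will be making each individual routing table small when some trees in $\mathcal{T}$ contain high-degree vertices: naively $u$ would have to store $\FTConnLabel^\ell_{G,T}(e)$ for every tree edge $e$ incident to $u$, giving $\widetilde{O}(f\cdot\deg_{\mathcal{T}}(u))$ bits, which can be super-linear. I would resolve this by load-balancing labels across children. Fix a tree $T\in\mathcal{T}$ and an edge $e=(u,v)$ with $u$ the parent. If $\deg(u,T)\leq f+1$, keep the label at both endpoints as usual. Otherwise order the children of $u$ lexicographically, partition them into consecutive blocks of size $f+1$, and define $\Gamma_T(e)$ to be the block containing $v$; the $f+1$ copies of $\FTConnLabel_{G,T}(e)$ are stored at every vertex of $\Gamma_T(e)$. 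Each vertex then stores the labels of only $O(f)$ edges per tree it participates in (its parent edge, its own children's edges when low-degree, and the at most $2f+1$ sibling parent-edges in its block), yielding the table bound $\widetilde{O}(f^3 n^{1/k}\log(nW))$ after multiplying by the $\widetilde{O}(Kn^{1/k})$ trees each vertex participates in.

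To make this mechanism actually usable during routing, the tree routing primitive of Fact \ref{fc:route-trees} must be enriched so that from $L_T(t)$ and $R_T(u)$ one recovers not only the port to $u$'s next tree neighbor $v$, but also the port numbers to every member of $\Gamma_T((u,v))$. I would augment the standard heavy--light tree routing scheme of \cite{thorup2001compact} by attaching, to the label of $u$, the port list of $\Gamma_T(e')$ for every light ancestor edge $e'$, and to the routing table of $u$, the port list of $\Gamma_T(e'')$ for the unique heavy child edge $e''$. This is Claim \ref{cl:route-trees-port} and costs only an $O(f)$ blow-up in tree label/table size. With this in hand, when $u$ needs the label of a faulty tree edge $e=(u,v)$ that it does not itself store, $u$ queries the $\Gamma_T(e)$ neighbors one by one; since $|\Gamma_T(e)|\geq f+1$ and $|F|\leq f$, at least one member is reachable via a non-faulty edge and can return the requested label to $u$, which then appends it to the header and continues.

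For the stretch, the added detours cost at most $2(|F|+1)2^j$ weight per iteration in phase $j$ (each queried tree edge has weight at most $2^j$, and we traverse at most $|F|+1$ such edges in both directions), while by Lemma \ref{lem:succint_path_routing} the main round-trip along $\widehat{P}_\ell$ costs at most $2(4k-1)(|F|+1)2^j$. Summing these per iteration gives $8k(|F|+1)2^j$, summing over $|F|+1$ iterations of phase $j$, and then geometrically over phases $j\leq i$ where $2^{i-1}<\dist_{G\setminus F}(s,t)\leq 2^i$, the total length is bounded by $32k(|F|+1)^2\cdot\dist_{G\setminus F}(s,t)$ exactly as in Claim \ref{cl:route-length}. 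The routing labels $L_{route}(v)$ carry just $K$ connectivity labels plus the indices $i^*(v)$, hence $\widetilde{O}(f)$ bits; the header carries the $O(f)$-length succinct path plus up to $|F|$ edge routing labels of $\widetilde{O}(f^2)$ bits each, for a total of $\widetilde{O}(f^3)$ bits, matching the statement.
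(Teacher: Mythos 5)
Your proposal is correct and follows essentially the same route as the paper: the same phase/iteration trial-and-error loop with $f+1$ independent sketch copies, the same $\Gamma_T(e)$ block load-balancing resolved via the augmented heavy--light tree routing of Claim \ref{cl:route-trees-port}, and the same stretch accounting as Claim \ref{cl:route-length}. One detail to state explicitly: when a new faulty tree edge $e$ is discovered, the header must receive its full routing label $L_{route,i,j}(e)$ (all $f+1$ copies), not only the $(\ell+1)$-th connectivity label, since later iterations need the later copies --- your $\widetilde{O}(f^2)$-bits-per-edge header accounting already presupposes this.
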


\paragraph{Lower Bound.} Finally, we show that the price of not knowing the set of faulty edges $F$ in advance might indeed incur a multiplicative stretch of $\Omega(f)$. 

\begin{proof}[Proof of Theorem \ref{thm:lb-routing}]
Consider a graph that consists of $f+1$ vertex disjoint $s$-$t$ paths, each of length $L=\Theta(n/f)$. The last edge of each of the paths, except for one, is faulty. Assume that the non-faulty path is chosen uniformly at random. Since the routing scheme is oblivious to the faulty edges, it can discover a faulty edge only upon sending the message to one of the edge endpoints. The expected length of the routing is given by:
$$\frac{L}{f+1} +2L \cdot \left(1-\frac{1}{f+1} \right)\cdot \frac{1}{f} + \ldots+ \left(f+1\right)L\cdot \prod_{i=0}^{f-1} \left(1-\frac{1}{f+1-i}\right)=\Omega(f L)~.$$ 
Since the $s$-$t$ shortest path under these faults is $L$, the proof follows. See Fig. \ref{fig:LB-stretch} for an illustration.
\end{proof}

\begin{figure}[h!]
\begin{center}
\includegraphics[scale=0.40]{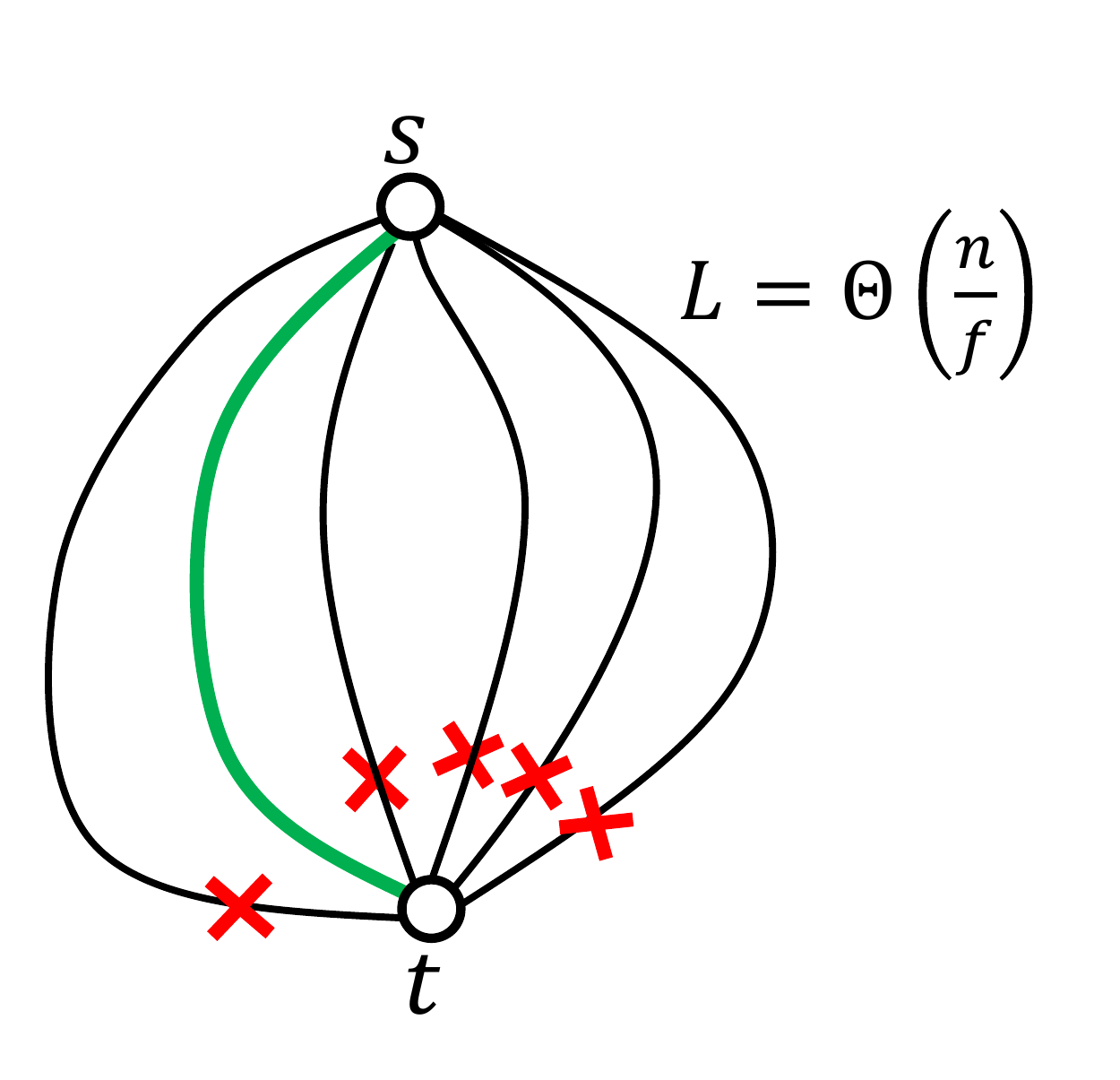}
\caption{\sf Illustration for a stretch lower bound for any FT routing schemes. The $s$-$t$ pair are connected by $f+1$ vertex disjoint paths of length $L$. Since the faulty-edge is the last edge of the path, the routing requires $\Omega(L)$ steps to discover a single faulty edge. As the non-faulty path is chosen uniformly at random, in expectation, the routing requires $\Omega(fL)$ steps.  \label{fig:LB-stretch}
}
\end{center}
\end{figure}


\bibliographystyle{alpha}
\bibliography{crypto}

\appendix
\section{Additional Definitions}

\begin{definition}[Pairwise Independence Hash Functions]\label{def:pairwise}
Let $\mathcal{H}$ be a family of functions from $\{1,\ldots, N\}$ to $\{1,\ldots, M\}$. The family $\mathcal{H}$ is \emph{pairwise independent} if for every $x,y \in \{1,\ldots, N\}$ such that $x \neq y$ and for every $a,b \in \{1,\ldots, M\}$ it holds that 
$$\Pr_{h \in \mathcal{H}}[h(x)=a \wedge h(y)=b]=1/M^2~.$$
That is, if $h$ is chosen uniformly at random from $\mathcal{H}$, then the random variable $h(x)$ and $h(y)$ are uniformly distributed and pairwise independent. 
\end{definition}

\begin{fact}\label{fc:pairwise}\cite{TCS-010}
There is an explicit family $\mathcal{H}$ of pairwise independent has functions from $\{0,1\}^n \to \{0,1\}^m$ constructed using $O(\max\{
m,n\})$ bits and computable in $\poly(n,m)$ time.
\end{fact}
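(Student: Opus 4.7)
Plan: I would exhibit the family explicitly via affine functions over a field of characteristic two. Set $k=\max(n,m)$, and fix an irreducible polynomial of degree $k$ over $\mathbb{F}_2$ (which can be produced deterministically in time polynomial in $k$ by a standard factor-and-search construction) in order to represent the finite field $\mathbb{F}=\mathbb{F}_{2^k}$. Identify $\{0,1\}^n$ with a subset of $\mathbb{F}$ via zero-padding, and let $\pi:\mathbb{F}\to \{0,1\}^m$ be the projection onto the first $m$ bits of the $k$-bit representation of a field element. Define the family $\mathcal{H}=\{h_{a,b}\}_{(a,b)\in \mathbb{F}^2}$ by $h_{a,b}(x) = \pi(a\cdot x+b)$. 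A function of $\mathcal{H}$ is specified by $2k = O(\max(n,m))$ bits, and is evaluated using one multiplication and one addition in $\mathbb{F}_{2^k}$ followed by a truncation, which takes $\poly(k)=\poly(n,m)$ time.

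The main step will be verifying pairwise independence. For distinct $x,y \in \{0,1\}^n \subseteq \mathbb{F}$ and arbitrary targets $u,v \in \mathbb{F}$, the linear system $ax+b=u$ and $ay+b=v$ has the unique solution $a=(u-v)(x-y)^{-1}$, $b=u-ax$, since $x-y$ is a nonzero element of $\mathbb{F}$ and hence invertible. Thus the map $(a,b) \mapsto (ax+b,\, ay+b)$ is a bijection on $\mathbb{F}^2$, so the pair $(ax+b,ay+b)$ is uniformly distributed on $\mathbb{F}^2$ when $(a,b)$ is drawn uniformly from $\mathbb{F}^2$.

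To pass from $\mathbb{F}$ to $\{0,1\}^m$, I would apply $\pi$ coordinatewise: since $k\geq m$, $\pi$ is surjective with fibers all of size $2^{k-m}$, so the push-forward of the uniform distribution on $\mathbb{F}$ under $\pi$ is uniform on $\{0,1\}^m$, and because $\pi$ is applied to each coordinate separately the two images remain independent. This yields $\Pr[h_{a,b}(x)=\alpha \wedge h_{a,b}(y)=\beta]=1/2^{2m}$ for all $\alpha,\beta\in\{0,1\}^m$, as required by Definition \ref{def:pairwise}. The only mildly nontrivial ingredient is the deterministic construction of the irreducible polynomial defining $\mathbb{F}_{2^k}$, which I would dispatch via a standard appeal; the remaining arguments are elementary finite-field linear algebra, so I do not anticipate a real obstacle.
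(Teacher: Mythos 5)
Your construction is correct and is precisely the standard one from the reference the paper cites for this fact (which it states without proof): truncated affine maps over $\mathbb{F}_{2^{\max(n,m)}}$, with pairwise independence following from the bijectivity of $(a,b)\mapsto(ax+b,\,ay+b)$ for $x\neq y$ and uniformity surviving the equal-fiber projection $\pi$. The only point I would tighten is the "factor-and-search" remark: naively searching the $2^k$ monic degree-$k$ polynomials for an irreducible one is exponential, so you should instead invoke a deterministic polynomial-time construction (e.g., Shoup's algorithm, which runs in $\poly(k)$ over $\mathbb{F}_2$) or use an explicit irreducible family such as $x^{2\cdot 3^{\ell}}+x^{3^{\ell}}+1$ and round $k$ up to the nearest admissible degree, which only affects constants in the seed length. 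Alternatively, the Toeplitz-matrix family $h_{A,b}(x)=Ax+b$ achieves the same $O(\max\{m,n\})$ seed length with no field arithmetic at all.
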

\section{Overview of the Cycle Space Sampling Technique} \label{sec:cycle_space_overview}

The cycle space sampling technique allows to detect cuts in a graph using a connection between cuts and cycles in a graph.
This beautiful technique was introduced by Pritchard and Thurimella \cite{pritchard2011fast}, that showed its applicability for distributed algorithms identifying small cuts in a graph. We next give a short overview of the technique, for full details see \cite{pritchard2011fast}.

The \emph{cycle space} of a graph is the family of all subsets of edges $F$ that have even degree at each vertex, any such subset $\phi \subseteq E$ is called a \emph{binary circulation}. The \emph{cut space} is the family of all induced edge cuts. It is easy to see that if we take a cycle $C$ in a graph and an induced edge cut, then the number of edges of the cycle that cross the cut is even. The cycle space technique extends this observation and shows that the cycle space and cut space are orthogonal vector spaces. Using this, they show the following (see Propositions 2.2 and 2.5 in \cite{pritchard2011fast}).

\begin{claim} \label{claim_cycle} 
Let $\phi$ be a uniformly random binary circulation and $F \subseteq E$. Then
$$Pr[|F \cap \phi| \ is \ even] = \left\{
                \begin{array}{ll}
                  1,\ if\ F\ is\ an\ induced\ edge\ cut\\
                  1/2,\ otherwise
                \end{array}
              \right. $$ 
\end{claim}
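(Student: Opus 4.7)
The plan is to view subsets of edges as vectors in $\mathbb{F}_2^m$ where $m=|E|$, and to exploit the linear-algebraic structure of the cycle space $\mathcal{Z}$ (binary circulations) and cut space $\mathcal{C}$ (induced edge cuts). Under the standard bilinear form $\langle x,y\rangle=\sum_e x_e y_e\bmod 2$, the parity of $|F\cap\phi|$ equals $\langle F,\phi\rangle$. The structural fact I will rely on, already alluded to in the text above, is that $\mathcal{Z}$ and $\mathcal{C}$ are in fact orthogonal \emph{complements} in $\mathbb{F}_2^m$, i.e., $\mathcal{Z}^{\perp}=\mathcal{C}$.

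First I would dispatch the easy direction: if $F\in\mathcal{C}$, say $F=\delta(S)$, then for any binary circulation $\phi$ the identity $\sum_{v\in S}\deg_\phi(v)=2\,|\phi\cap E(S,S)|+|\phi\cap\delta(S)|$ has an even left-hand side (each $\deg_\phi(v)$ is even), so $|\phi\cap\delta(S)|$ is even. Hence $\langle F,\phi\rangle=0$ deterministically, giving probability $1$.

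Next I would handle the case $F\notin\mathcal{C}$. By orthogonal-complement-ness, $F\notin\mathcal{Z}^{\perp}$, so the $\mathbb{F}_2$-linear functional $L:\mathcal{Z}\to\mathbb{F}_2$ defined by $L(\phi)=\langle F,\phi\rangle$ is not identically zero. Its kernel is therefore a codimension-one subspace of $\mathcal{Z}$ of size exactly $|\mathcal{Z}|/2$, and since $\phi$ is drawn uniformly from $\mathcal{Z}$, we conclude $\Pr[|F\cap\phi|\text{ even}]=\Pr[L(\phi)=0]=1/2$.

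The main obstacle is justifying the equality $\mathcal{Z}^{\perp}=\mathcal{C}$, since only the containment $\mathcal{C}\subseteq\mathcal{Z}^{\perp}$ is immediate from the orthogonality argument. The reverse containment is most cleanly obtained by a standard dimension count: $\dim\mathcal{Z}=m-n+c$ and $\dim\mathcal{C}=n-c$, where $c$ is the number of connected components of $G$, so that $\dim\mathcal{Z}+\dim\mathcal{C}=m$. These dimension formulas are classical in algebraic graph theory, and I would cite them rather than reprove them; combined with orthogonality they force $\mathcal{Z}^{\perp}=\mathcal{C}$, which is precisely what drives both probabilities in the claim.
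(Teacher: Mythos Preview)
Your proof is correct. Note, however, that the paper does not itself prove this claim: it states it as Propositions~2.2 and~2.5 of Pritchard and Thurimella~\cite{pritchard2011fast} and only remarks that ``the cycle space and cut space are orthogonal vector spaces.'' Your argument is precisely the standard linear-algebraic justification behind that remark---the degree-sum identity for the $F\in\mathcal{C}$ case, and the codimension-one kernel argument (via the dimension count $\dim\mathcal{Z}+\dim\mathcal{C}=m$ forcing $\mathcal{Z}^{\perp}=\mathcal{C}$) for the $F\notin\mathcal{C}$ case---so there is nothing to compare beyond saying you have supplied the details the paper leaves to the citation.
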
   

Hence, sampling a random binary circulation allows to detect if a subset of edges is an induced edge cut with probability $1/2$. To reduce the failure probability to $1/2^b$ we can choose $b$ random binary circulations. To use this technique, the authors provide an efficient way to sample a random binary circulation, we describe next. Let $T$ be a spanning tree of the graph. For any non-tree edge $e$, adding $e$ to the graph creates a cycle. These cycles are the \emph{fundamental cycles}, and it is shown that the \emph{fundamental cycles} are a basis for the cycle space. Based on this, they show that sampling a random binary circulation can be done by choosing each fundamental cycle with probability $1/2$, or equivalently choosing each non-tree edge with probability $1/2$. The binary circulation $\phi$ sampled has all the non-tree edges sampled, and each tree edge that appears in odd number of sampled cycles. Given the sampled non-tree edges in $\phi$, the tree edges in $\phi$ can be identified using a simple scan of the tree, as shown in \cite{pritchard2011fast}. Choosing $b$ random binary circulations, is equivalent to choosing a $b$-bit random string $\phi(e)$ for each non-tree edge. For a tree edge $t$, we define $\phi(t) = \oplus_{e \in C_t} \phi(e)$, where $C_t$ are all non-tree edges $e$ such that $t$ is in the fundamental cycle of $e$. This again can be computed by a simple scan of the tree, and takes $O((n+m)b)$ time if the labels have size $b$. This gives the following.

\cycle*

To see this, let $\phi_1,...,\phi_b$ be the sampled binary circulations. If $F$ is an induced edge cut, then from Claim \ref{claim_cycle}, for every sampled circulation $\phi_i$, we have that $|F \cap \phi_i|$ is even, and hence for all $i$, the $i$'th bit of $\Moplus_{e \in F} \phi(e)$ is equal to 0 as needed. Otherwise, for all $i$, the $i$'th bit $\Moplus_{e \in F} \phi(e)$ equals $0$ with probability $1/2$, hence the probability that the whole vector equals $0$ is $1/2^b$, as needed.
\section{Missing Proofs}\label{sec:miss-proof}

\APPENDUNIQUEID

\APPENDLEMMUNIQUE

\APPENDLABELCONSISE

\APPENDSKETCHPROP

\APPENDCONNLABELSKETCH

\end{document}